\DeclareFontFamily{U}{mathx}{\hyphenchar\font45}
\DeclareFontShape{U}{mathx}{m}{n}{
      <5> <6> <7> <8> <9> <10>
      <10.95> <12> <14.4> <17.28> <20.74> <24.88>
      mathx10
      }{}
\DeclareSymbolFont{mathx}{U}{mathx}{m}{n}
\DeclareMathSymbol{\bigtimes}{1}{mathx}{"91}
\definecolor{DarkRed}{rgb}{0.5,0.1,0.1}
\definecolor{DarkBlue}{rgb}{0.1,0.1,0.5}
\definecolor{ForestGreen}{rgb}{0.1333,0.5451,0.1333}
\definecolor{Red}{rgb}{0.9,0,0}
\crefname{property}{property}{Property}
\crefname{equation}{eq}{Eq}
\tikzset{vertex/.style={circle, black, fill=Yellow, line width=1pt, draw, minimum width=8pt, minimum height=8pt, inner sep=0pt}}
\def\BState{\State\hskip-\ALG@thistlm}
\newtheorem{theorem}{Theorem}
\newtheorem{lemma}{Lemma}[section]
\newtheorem{proposition}[lemma]{Proposition}
\newtheorem{corollary}[lemma]{Corollary}
\newtheorem{claim}[lemma]{Claim}
\newtheorem{fact}[lemma]{Fact}
\newtheorem*{claim*}{Claim}
\newtheorem*{theorem*}{Theorem}
\newtheorem*{proposition*}{Proposition}
\newtheorem*{lemma*}{Lemma}
\newtheorem*{problem*}{Problem}
\crefname{lemma}{Lemma}{Lemmas}
\crefname{claim}{Claim}{Claims}
\newtheorem{mdresult}{Result}
\newenvironment{result}{\begin{mdframed}[backgroundcolor=lightgray!40,topline=false,rightline=false,leftline=false,bottomline=false,innertopmargin=2pt, innerleftmargin=10pt]\begin{mdresult}}{\end{mdresult}\end{mdframed}}
\newtheorem{remark}[lemma]{Remark}
\newtheorem*{remark*}{Remark}
\newtheoremstyle{restate}{}{}{\itshape}{}{\bfseries}{~(restated).}{.5em}{\thmnote{#3}}
\theoremstyle{restate}
\theoremstyle{definition}
\newtheorem{mdalg}{Algorithm}
\newenvironment{Algorithm}{\begin{tbox}\begin{mdalg}}{\end{mdalg}\end{tbox}}
\newtheorem{mddist}{Distribution}
\newenvironment{Distribution}{\begin{tbox}\begin{mddist}}{\end{mddist}\end{tbox}}
\newtheorem*{mdinvariant}{Example}
\newenvironment{example}{\begin{mdframed}[hidealllines=false,innerleftmargin=5pt,backgroundcolor=gray!10,innertopmargin=2pt]\begin{mdinvariant}}{\end{mdinvariant}\end{mdframed}}
\renewcommand{\qed}{\nobreak \ifvmode \relax \else
      \ifdim\lastskip<1.5em \hskip-\lastskip
      \hskip1.5em plus0em minus0.5em \fi \nobreak
      \vrule height0.75em width0.5em depth0.25em\fi}
\newcommand{\misHalf}[1][]{\ifx.#1.{\ensuremath{\mathcal{H}_{\mathsf{MIS}}}}\else{\ensuremath{\mathcal{H}^{(#1)}_{\mathsf{MIS}}}}\fi}
\newcommand{\misHard}[1][]{\ifx.#1.{\ensuremath{\mathcal{D}_{\mathsf{MIS}}}}\else{\ensuremath{\mathcal{D}^{(#1)}_{\mathsf{MIS}}}}\fi}
\newcommand{\apxHard}[1][]{\ifx.#1.{\ensuremath{\mathcal{D}_{\mathsf{MM}}}}\else{\ensuremath{\mathcal{D}^{(#1)}_{\mathsf{MM}}}}\fi}
\newcommand{\pb}[1][]{\ifx.#1.{\mathscr{P}}\else{\mathscr{P}(#1)}\fi}
\newcommand{\fb}[1][]{\ifx.#1.{\mathscr{F}}\else{\mathscr{F}(#1)}\fi}
\newcommand{\sfP}{P}
\newcommand{\sfF}{F}
\newcommand{\halfp}{{\hat{p}}}
\newcommand{\halff}{{\hat{f}}}
\newcommand{\halfn}{{\hat{n}}}
\newcommand{\perm}{\sigma}
\newcommand{\rPerm}{\ensuremath{\rv{\Sigma}}\xspace}
\newcommand{\rO}{\rv{O}}
\newcommand{\rS}{\rv{S}}
\newcommand{\rT}{\rv{T}}
\newcommand{\rW}{\rv{W}}
\newcommand{\const}{{20}}
\newcommand{\tvd}[2]{\ensuremath{\norm{#1 - #2}_{\mathrm{tvd}}}}
\newcommand{\Bracket}[1]{\Big[#1\Big]}
\newcommand{\bracket}[1]{\left[#1\right]}
\newcommand{\paren}[1]{\ensuremath{\left(#1\right)}\xspace}
\newcommand{\card}[1]{\left\vert{#1}\right\vert}
\newcommand{\IN}{\ensuremath{\mathbb{N}}}
\newcommand{\norm}[1]{\ensuremath{\|#1\|}}
\newcommand{\prob}[1]{\Pr\paren{#1}}
\newcommand{\set}[1]{\ensuremath{\left\{ #1 \right\}}}
\newcommand{\poly}{\mbox{\rm poly}}
\newcommand{\polylog}{\mbox{\rm  polylog}}
\DeclareMathOperator*{\Exp}{\ensuremath{{\mathbb{E}}}}
\DeclareMathOperator*{\Prob}{\ensuremath{\textnormal{Pr}}}
\renewcommand{\Pr}{\Prob}
\newcommand{\Ex}{\Exp}
\newenvironment{tbox}{\begin{tcolorbox}[
		enlarge top by=5pt,
		enlarge bottom by=5pt,
		 boxsep=2pt,
                  left=5pt,
                  right=7pt,
                  top=10pt,
                  arc=0pt,
                  boxrule=1pt,toprule=1pt,
                  colback=white
                  ]
	}
{\end{tcolorbox}}
\newcommand{\rv}[1]{\ensuremath{{\mathsf{#1}}}\xspace}
\newcommand{\rA}{\rv{A}}
\newcommand{\rB}{\rv{B}}
\newcommand{\rC}{\rv{C}}
\newcommand{\rD}{\rv{D}}
\newcommand{\rY}{\rv{Y}}
\newcommand{\supp}[1]{\ensuremath{\textnormal{\text{supp}}(#1)}}
\newcommand{\distribution}[1]{\ensuremath{\textnormal{dist}(#1)}\xspace}
\newcommand{\kl}[2]{\ensuremath{\mathbb{D}(#1~||~#2)}}
\newcommand{\II}{\ensuremath{\mathbb{I}}}
\newcommand{\HH}{\ensuremath{\mathbb{H}}}
\newcommand{\mi}[2]{\ensuremath{\def\mione{#1}\def\mitwo{#2}\mireal}}
\newcommand{\mireal}[1][]{
  \ifx\relax#1\relax%
    \II(\mione \,; \mitwo)%
  \else%
    \II(\mione \,; \mitwo\mid #1)%
  \fi
}
\newcommand{\en}[1]{\ensuremath{\HH(#1)}}
\newcommand{\itfacts}[1]{\Cref{fact:it-facts}-(\ref{part:#1})\xspace}
\newcommand{\prot}{\pi}
\newcommand{\rX}{\rv{X}}
\newcommand{\rZ}{\rv{Z}}
\newcommand{\rM}{\rv{M}}
\newcommand{\rG}{\rv{G}}
\title{Rounds vs Communication Tradeoffs \\for Maximal Independent Sets}
\author{Sepehr Assadi\footnote{(\texttt{sepehr.assadi@rutgers.edu}) Department of Computer Science, Rutgers University.  }\and 
Gillat Kol\footnote{(\texttt{gillat.kol@gmail.com}) Department of Computer Science, Princeton University.}  \and 
Zhijun Zhang\footnote{(\texttt{zhijunz@princeton.edu}) Department of Computer Science, Princeton University.}  
}
\date{}
\begin{document}
\maketitle

\pagenumbering{roman}


\begin{abstract}

We consider the problem of finding a maximal independent set (MIS) in the shared blackboard communication model with vertex-partitioned inputs. 
There are $n$ players corresponding to vertices of an undirected graph, and each player sees the edges incident on its vertex -- this way, each edge is known by both its endpoints and is thus \emph{shared} by two players.   
The players communicate in simultaneous rounds by posting their messages on a shared blackboard visible to all players, with the goal of computing an MIS of the graph.
While the MIS problem is well studied in other distributed models, and while shared blackboard is, perhaps, the simplest broadcast model, lower bounds for our problem were only known against one-round protocols.

\medskip

We present a lower bound on the \textbf{round-communication tradeoff} for computing an MIS in this model. Specifically, we show that when $r$ rounds of interaction are allowed, at least one player needs to 
communicate $\Omega(n^{1/\const^{r+1}})$ bits. In particular, with logarithmic bandwidth, finding an MIS requires $\Omega(\log\log{n})$ rounds. This lower bound can be compared with 
the algorithm of Ghaffari, Gouleakis, Konrad, Mitrović, and Rubinfeld [PODC 2018] that solves MIS in $O(\log\log{n})$ rounds but with a logarithmic bandwidth for an \emph{average} player. Additionally, 
our lower bound  further extends to the closely related problem of maximal bipartite matching. 

\medskip

The presence of edge-sharing gives the algorithms in our model a surprising power and numerous algorithmic results exploiting this power are known. 
For a similar reason, proving lower bounds in this model is much more challenging, as this sharing in the players' inputs  prohibits the use of standard number-in-hand communication complexity arguments. 
Thus, to prove our results, we devise a new round elimination framework, which we call \textbf{partial-input embedding}, that may also be useful in future work for proving \emph{round-sensitive} lower bounds in the presence of shared inputs. 

\medskip

Finally, we discuss several implications of our results to multi-round (adaptive) distributed sketching algorithms, broadcast congested clique, and to the welfare maximization problem in two-sided matching markets.

\end{abstract}

\clearpage

\setcounter{tocdepth}{3}
\tableofcontents

\clearpage

\pagenumbering{arabic}
\setcounter{page}{1}


\section{Introduction}\label{sec:intro}

Consider the following communication model: there are $n$ players corresponding to vertices of an undirected graph $G=(V,E)$ and each player only sees the edges incident on its vertex -- this way, each 
edge of the graph is \emph{shared} by the two players at its endpoints. The goal of the players 
is to solve some fixed problem on $G$, for instance, finding a spanning forest of $G$. To do so, the players communicate in synchronous rounds wherein all parties simultaneously write a message on a shared blackboard visible to all.
The messages communicated by the players are only functions of their own inputs and the content of the blackboard. When the protocol concludes, an additional party, called the {\em referee}, computes the output of the protocol as a function of the blackboard 
content. We are interested in the tradeoff between the number of rounds of the protocol and the \emph{per-player} communication, defined as the worst-case length of any message sent by any player in any round. 

In the communication complexity terminology, this model is referred to as the multi-party communication model with {\em shared blackboard} and {\em vertex-partitioned inputs}. However, it has also been studied by different communities under different names, such as 
{\em broadcast congested clique}~\cite{DruckerKO13,BeckerMRT18,JurdzinskiL018a,Jurdzinski018b}, or  {\em (adaptive) distributed sketching}~\cite{AhnGM12a,AhnGM12b,AssadiKO20,FiltserKN21}. At this point, there is quite a  large body of 
algorithmic results in this model~\cite{AhnGM12a,AhnGM12a,AhnGM12b,AhnGM13,KapralovLMMS14,GuhaMT15,McGregorTVV15,AssadiCK19,AssadiKM22,LattanziMSV11,KapralovW14,FiltserKN21,AhnCGMW15} (see~\Cref{sec:models} for more
details). The source of power behind these results is a crucial aspect of this model: \emph{edge-sharing}, or in other words, the fact that each edge of the graph is seen by both its endpoints\footnote{The interested reader is referred to~\cite{AhnGM12a} to see this in a surprising algorithm that solves graph connectivity using only a single round and $O(\log^3{n})$ communication bits per player.}. This sharing in the players' inputs makes this model an ``intermediate'' model lying between the {\em number-in-hand} model (with no input sharing) and the notorious {\em number-on-forehead} model (with arbitrary input sharing). As a result, lower bounds are more scarce in this model~\cite{BeckerMNRST11,DruckerKO13,BeckerMRT14,BeckerMRT18,JurdzinskiL018,NelsonY19,Yu21,AssadiKO20}.  

We study the \emph{maximal independent set (MIS)} problem in this model. While MIS is one of the most studied problems in other distributed models (see, e.g.,~\cite{Luby85,Linial87,KuhnMW16,Ghaffari16,BalliuBHORS19}),
and while shared blackboard is, perhaps, the simplest broadcast model, not much is known about MIS in this model. We do note that Luby's celebrated MIS algorithm~\cite{Luby85} implies an $O(\log{n})$-round $O(1)$-per-player communication
algorithm in this model. Ghaffari, Gouleakis, Konrad, Mitrovic, and Rubinfeld~\cite{GhaffariGKMR18} gave an algorithm that runs in $O(\log\log{n})$ rounds, but only bounds the communication of an \emph{average} player by $O(\log{n})$. I.e., 
the total communication by all players in a round is $O(n\log{n})$, but some players may need to communicate $\omega(\log{n})$ bits\footnote{This algorithm is designed for the (unicast) congested clique model, but given its connection to 
the distributed sketching/dynamic streaming algorithm of~\cite{AhnCGMW15}---that solves MIS as a subroutine in correlation clustering---it can be directly implemented in our model with the mentioned bounds.}. Moreover,
Assadi, Kol, and Oshman proved that any \emph{one}-round protocol requires almost $(n^{1/2})$ per-player communication. This state-of-affairs raises the following question: 
\begin{quote}
	\emph{What is the complexity of MIS in the shared blackboard model with vertex-partitioned inputs? In particular, what are the possible round-communication tradeoffs in this model? } 
\end{quote}
We make progress on this fundamental open question by presenting a new lower bound on the round-communication tradeoff for the MIS problem. 
The key contribution of our work is a new technique for proving multi-round lower bounds, even in the presence of edge-sharing. 
 This also allows us to prove a similar lower bound for another fundamental problem, namely, the maximal bipartite matching problem. 

Our work can be viewed as a direct continuation of two lines of work: the first line of work is on \emph{number-in-hand} multi-round communication complexity, where we follow up on the result of Alon, Nisan, Raz, and Weinstein~\cite{AlonNRW15}. They give lower bounds for the bipartite maximal matching problem, where only parties on one side of the partition are allowed to communicate. The second line of work is the aforementioned lower bound of Assadi, Kol, and Oshman~\cite{AssadiKO20}, which works in our model, but only considers one-round protocols. In the following, we elaborate more on our results, techniques, 
and their connections to other settings.


\subsection{Our Contributions}


Our main result is a multi-round lower bound for computing MIS in the shared blackboard model. 

\begin{result}\label{res:mis}
	Any $r$-round multi-party protocol (deterministic or randomized) in the shared blackboard model for finding a maximal independent set on $n$-vertex graphs
	 requires $\Omega(n^{1/\const^{r+1}})$ bits of communication per player. In particular, $\Omega(\log\log{n})$ rounds are needed for 
	protocols with $\poly\!\log\!{(n)}$ per-player communication. 
\end{result}

Previously, the only known lower bound for MIS in our model was the (almost) $\Omega(n^{1/2})$-communication lower bound of~\cite{AssadiKO20} for one-round protocols. Indeed, to the best of our knowledge, 
there has been no prior communication lower bound in this model for any natural problem that is \emph{sensitive} to the number of rounds (the lower bounds were either for one-round protocols, e.g.,~\cite{NelsonY19,AssadiKO20,Yu21}, or 
arbitrary number of rounds, e.g.,~\cite{DruckerKO13,BeckerMRT18}\footnote{Specifically, the latter ones bound the \emph{total} communication needed to solve the problem and use this to get a lower bound on the number of rounds \emph{times} communication per round. Such lower bounds cannot capture more nuanced round-communication tradeoffs (e.g., like the ones exhibited by~\cite{Luby85} or~\cite{GhaffariGKMR18} for MIS).}).

The tradeoff achieved in~\Cref{res:mis} asymptotically matches the aforementioned $O(\log\log{n})$-round algorithm of~\cite{GhaffariGKMR18} for finding MIS, except that, as mentioned before, the protocol of~\cite{GhaffariGKMR18} 
only bounds the communication of an \emph{average} player by $O(\log{n})$ bits and a few players need to communicate way more than $\polylog{(n)}$ bits. Thus, the two results do not directly match. 
It remains an interesting open question to either improve the guarantee of the algorithm of~\cite{GhaffariGKMR18} to per-player communication bound or improve our lower bound to average-case communication. 

\smallskip

Our techniques in establishing~\Cref{res:mis} are quite general and, as a corollary to our proof, also allow us to prove a lower bound for another fundamental problem, namely, maximal matching. 

\begin{result}\label{res:mm}
	Any $r$-round multi-party protocol (deterministic or randomized) in the shared blackboard model for finding a maximal matching or any constant factor approximation to maximum matching on $n$-vertex (bipartite) graphs
	 requires $\Omega(n^{1/\const^{r+1}})$ bits of communication per player. As such, $\Omega(\log\log{n})$ rounds are needed for 
	protocols with $\poly\!\log\!{(n)}$ per-player communication. 
\end{result}

As in the case of MIS, the only known lower bound prior to our work was the one-round lower bound of~\cite{AssadiKO20}. However, for the \emph{number-in-hand} variant of our communication model, wherein each edge of the 
graph is only seen by one of its endpoints, a series of papers~\cite{DobzinskiNO14,AlonNRW15,BravermanO17} proved a nearly-logarithmic round lower bound for the matching problem (we elaborate on this line of work later). 
Yet, the number-in-hand model is algorithmically much weaker than the edge-sharing model studied in our paper; for instance, the lower bound of~\cite{BravermanO17} also holds for finding a spanning forest of the input in that model, while 
finding spanning forests in our model can be done with $O(\log^3{n})$ communication in just one round~\cite{AhnGM12a}. We refer the reader to~\cite{AssadiKO20} for discussions on the inherit difference of number-in-hand model
and our model that allows for edge-sharing and thus is ``one step closer'' to the notorious number-on-forehead model. 

\paragraph{Our techniques.} We shall go over our techniques in detail in the streamlined overview of our
approach in~\Cref{sec:overview}. For now, we only mention the high level bits of our techniques. 

Our techniques unify and generalize the lower bounds of~\cite{AssadiKO20} for one-round protocols in our model, as well as the lower bounds of~\cite{AlonNRW15} for multi-round protocols in the number-in-hand model. 
To this end, we need several substantially new ideas\footnote{Braverman and Oshman~\cite{BravermanO17} gave stronger lower bounds than~\cite{AlonNRW15}, that work for nearly logarithmic number of rounds. However, their techniques seem ``too tailored'' to the number-in-hand model and approximate matchings, and thus are \emph{not} suitable for us (given the algorithm of~\cite{GhaffariGKMR18} for MIS, which, even though not exactly in our model, seem quite close, it is not clear if one can get a logarithmic lower bound in our model).}. The main novelty of our work is in developing a new \emph{round elimination} argument 
that is tailored to our edge-sharing model. Similar to standard round elimination arguments, say, the one in~\cite{AlonNRW15}, our approach is also based on \emph{simulating} an $r$-round protocol on ``large'' instances in only $(r-1)$ rounds
for smaller ``embedded'' instances (with fewer players and smaller inputs). Prior work perform such a simulation by generating an \emph{input} for the ``missing'' players of the large $r$-round instance
with \emph{low correlation} with the actual embedded $(r-1)$-round hard instance. As we argue, such an approach is doomed to fail for our model with its edge-sharing aspects. Instead, we introduce a \emph{partial-input embedding} argument 
that implements this simulation via generating only the \emph{messages} of the missing players. We then use information-theoretic tools to track the gradual increase in the \emph{correlation} of these messages with the embedded hard instance
throughout the \emph{entire} simulation (not only the first round which is sufficient for ``input-sampling'' protocols of prior work). 

\subsection{Further Implications of Our Results to Related Models}\label{sec:models}

We conclude this section by listing further implications of our results to other well-studied settings. 

\paragraph{Broadcast congested clique.} The communication model studied in our paper is equivalent to the broadcast congested clique model studied in various prior work, e.g., in~\cite{DruckerKO13,BeckerMRT18,JurdzinskiL018a,Jurdzinski018b}. 
Specifically, our~\Cref{res:mis} and~\Cref{res:mm} imply $\Omega(\log\log{n})$ round lower bounds for both MIS and maximal matching on any broadcast congested clique algorithm with $\polylog{(n)}$ bandwidth. 
Incidentally, in the stronger \emph{unicast} congested clique model, $O(\log\log{n})$-round algorithms are known for both MIS~\cite{GhaffariGKMR18} and maximal matching~\cite{BehnezhadHH19}. We note that, as shown in~\cite{DruckerKO13}, 
proving lower bounds in the unicast model implies strong circuit lower bounds and thus is beyond the reach of current techniques. 

\paragraph{Distributed sketching.} Our model is also equivalent to the distributed sketching model that was initiated in the breakthrough work of~\cite{AhnGM12a}. 
Starting from the connectivity sketch of~\cite{AhnGM12a}, there has been tremendous progress on efficient distributed sketching algorithms for various other problems in \emph{one} round, e.g.,~cut sparsifiers~\cite{AhnGM12b}, spectral sparsifiers~\cite{AhnGM13,KapralovLMMS14}, vertex connectivity~\cite{GuhaMT15}, densest subgraph~\cite{McGregorTVV15}, $(\Delta+1)$-coloring~\cite{AssadiCK19}, $\Delta$-coloring~\cite{AssadiKM22}, and in \emph{multiple} rounds, e.g., minimum spanning trees~\cite{AhnGM12a}, matchings~\cite{LattanziMSV11,AhnGM12a}, spanners~\cite{KapralovW14,FiltserKN21}, and MIS and correlation clustering~\cite{AhnCGMW15}. Given the strength of this model, proving lower bounds in this model has been a highly challenging task (see, e.g.~\cite{AssadiKO20,FiltserKN21}), and only a handful of lower bounds are known including $\Omega(\log^3{n})$ bits 
for connectivity~\cite{NelsonY19,Yu21} and $\Omega(n^{1/2})$ bits for MIS and maximal matching~\cite{AssadiKO20} for \emph{one}-round sketches. Our results contribute to this line of work by providing the first \emph{round-sensitive} lower bounds
in this model, and our techniques can be of independent interest here as well. 


\paragraph{Dynamic streaming algorithms.} One key motivation of~\cite{AhnGM12a} in introducing graph sketching was their application to \emph{dynamic} (semi-)streaming algorithms that can process streams of insertions and deletions of edges with $O(n \cdot \poly\log{(n)})$ memory 
(\emph{all} sketches mentioned above also imply dynamic streaming algorithms). Multi-round sketching protocols, similar to the ones  in our model, then correspond to multi-pass streaming algorithms. 
Currently, the best known multi-pass dynamic semi-streaming algorithms for MIS and maximal matching require $O(\log\log{n})$ passes~\cite{AhnCGMW15} and $O(\log{n})$ passes~\cite{LattanziMSV11,AhnGM12a}, respectively. On the lower bound front
however, only single-pass lower bounds are known for either problem~\cite{AssadiKLY16,AssadiCK19,CormodeDK19,DarkK20} (there has been recent progress on multi-pass lower bounds for computing \emph{exact} maximum matchings ~\cite{GuruswamiO13,AssadiR20,ChenKPS0Y21} in logarithmic passes or even $(1+o(1))$-approximation in two passes~\cite{Assadi22} but they do \emph{not} apply to maximal matching in any way). While our results do \emph{not} imply streaming lower bounds, they do rule out certain popular techniques of 
vertex-partitioned graph sketching for obtaining $o(\log\log{n})$-pass algorithms for either problem. Thus, they can form a starting point for proving multi-pass lower bounds for all dynamic streaming algorithms as well. 

\paragraph{Welfare maximization and interaction.} A beautiful line of work initiated by~\cite{DobzinskiNO14} and followed up in~\cite{AlonNRW15,BravermanO17,Assadi17ca,Nisan21}, studies the role played by the \emph{interaction} of participating agents in the efficiency of markets. One formalization, corresponding to unit-demand agents in a matching market, is as follows: we have $n$ agents who are interested in getting any one of their private subset of $n$ items; the goal is to allocate these items in a way that maximizes the \emph{welfare}, defined as the number of agents who receive an item of their liking. The market proceeds in rounds wherein the agents communicate $\polylog{(n)}$-bit messages about their desired items. How many rounds of interaction are needed to maximize the welfare to within a constant factor? 

This problem can be seen as approximating matchings on the bipartite graph consisting of agents on one side that have edges to their preferred items on the other side. The model of communication is also identical to the one in our paper with the crucial 
difference that only vertices on one side of the bipartition, namely, the agents, are communicating. In this model,~\cite{DobzinskiNO14} gave an $O(\log{n})$-round algorithm and ruled out one-round algorithms. \cite{AlonNRW15}
improved the lower bound to $\Omega(\log\log{n})$ rounds and subsequently~\cite{BravermanO17} obtained a nearly tight $\Omega(\frac{\log{n}}{\log\log{n}})$ lower bound (similar lower bounds are also
obtained for the more general setting of combinatorial auctions in~\cite{Assadi17ca}). 

All these results are restricted to one-sided markets. Our~\Cref{res:mm} generalizes (some of) these results to \emph{two-sided} matching markets~\cite{roth1992two}, wherein \emph{both} sides of the market consist of 
communicative agents that know in advance if they make a good match. A canonical example of two-sided matching markets is college admissions and the celebrated Gale-Shapley algorithm for stable marriage~\cite{gale1962college}. Another example, perhaps more closely related to the setting of our paper, is assigning users to servers in a large distributed Internet service~\cite{MaggsS15}. Our~\Cref{res:mm} suggests that even when both sides of the market are able to communicate with a limited bandwidth,  at least a modest amount of interaction is necessary for maximizing welfare (approximately). 

\clearpage


\section{Preliminaries}\label{sec:prelim}

\paragraph{Notation.} For an integer $t \in \mathbb{N}$, we write $[t]$ as a shorthand for the set $\set{1,\ldots,t}$.
Let $h : A \to B$ be an arbitrary function for two sets $A,B$.
For any subset $Z \subseteq A$, we use $h(Z) = \set{h(z) \mid z \in Z}$. 
For a tuple $X = (X_1,\ldots,X_t)$ and integer $i \in [t]$, we define $X_{<i} = (X_1,\ldots,X_{i-1})$ (we also define $X_{-i}$ and $X_{\leq i}$   analogously). 
For a graph $G = (V,E)$ and a permutation $\perm$ over $V$, we denote by $\perm(G)$ the graph on the same vertex set in which $\perm(u)$ and $\perm(v)$ are connected if and only if $(u,v) \in E$.

When there is room for confusion, we use sans-serif letters for random
variables (e.g. $\rA$) and the same normal letters for their realizations (e.g. $A$). For random variables $\rA,\rB$, we use $\supp{\rA}$ as the support of $\rA$, $\en{\rA}$ as the \emph{Shannon entropy}, $\mi{\rA}{\rB}$ as the \emph{mutual information}, 
$\kl{\rA}{\rB}$ as the \emph{KL-divergence}, and $\tvd{\rA}{\rB}$ as the \emph{total variation distance}.  Necessary background on
information theory, including the definitions and basic tools, is provided in \Cref{sec:info}. 

\subsection{Multi-Party Shared Blackboard Model with Vertex-Partitioned Inputs}

We work in the multi-party shared blackboard model with vertex-partitioned inputs, also known as the broadcast congested clique model in the literature.
The communication model is defined formally as follows.
Consider a simple graph $G = (V,E)$ with one player assigned to each of the $n = |V|$ vertices. For convenience, we identify a vertex with its associated player in the rest of this paper and use the two terms interchangeably.
There is a \emph{shared blackboard}, initially empty, that is readable and writable by all players.
The player associated to a vertex $v \in V$ is presented as input with $n$, a unique ID of $v$ in the range $[n]$, and IDs of all of $v$'s neighbors $N_G(v) = \set{u \in V \mid (v,u) \in E}$.
Thus, each edge $(u,v) \in E$ is \emph{shared} by both players $u$ and $v$.

Communication proceeds in $r \in \IN$ \emph{synchronous} rounds.
For each round $t \in [r]$, the players compute their messages based on their initial input as well as the current content of the blackboard, and post them to the blackboard \emph{simultaneously}.
In a randomized protocol, the players may also use both public and private randomness.
After the last round, the final content of the blackboard constitutes the \emph{transcript}, denoted by $\Pi$, of the protocol.
Then, a \emph{referee} computes the output of the protocol depending  on $\Pi$ (and possibly public randomness of all players and its own private randomness).
The \emph{bandwidth} of a protocol is defined to be the \emph{maximum} number of bits ever communicated by any player in any round.

We are interested in round-communication tradeoff of the following problems:

\paragraph{Maximal Independent Set.}
We say a protocol computes a {maximal independent set} (MIS) with error probability $\delta \in [0,1]$ if the output of the referee is a valid MIS of $G$ with probability at least $1-\delta$ over the randomness of the protocol.
The protocol may err by outputting a subset of vertices which is not independent or not maximal.

\paragraph{Approximate Matching.}
We say a protocol computes an $\alpha$-\emph{approximate matching} ($\alpha \ge 1$) if the output $\Gamma(\Pi)$ of the referee: $(1)$ is always a set of \emph{disjoint} pairs of vertices; and $(2)$ satisfies $\Exp\card{\Gamma(\Pi) \cap E} \ge \mu(G) / \alpha$, where $\mu(G)$ is the size of the maximum matching of $G$ and expectation is taken over the randomness of the protocol.
This definition allows the referee to output \emph{non-existing} edges as long as they are disjoint but only the correct ones in $E$ are counted.
This is a less restrictive error-model than requiring the algorithm to output a valid matching with certain probability 
and our lower bound holds even in this less restrictive setting; see also~\cite{AlonNRW15}. 


\clearpage


\newcommand{\Ins}{{I}}
\newcommand{\Jns}{{J}}

\newcommand{\rIns}{\rv{\Ins}}
\newcommand{\rJns}{\rv{\Jns}}

\newcommand{\Insstar}{\Ins^{\star}}

\section{Technical Overview}\label{sec:overview}
As our proof is quite dense and technical
and involves various information theoretic maneuvers that are daunting to parse, we use this section to unpack our main ideas and
give a streamlined overview of our approach. We emphasize that
this section oversimplifies many details and the discussions will be informal for the sake of intuition.

The starting point of our approach is a lower bound of~\cite{AlonNRW15} for approximate matchings in 
the \emph{number-in-hand} multi-party communication model. We first give a detailed discussion of this result as our techniques need to inevitably subsume this work (since our result 
implies theirs as well). We then discuss the challenges of extending this result to our model that allows for edge-sharing and present a technical overview of our work. 
We stick with approximate matchings in this overview as it is easier to work with and to compare with~\cite{AlonNRW15}. 

\subsection{A Detailed Overview of~\cite{AlonNRW15}}\label{sec:anrw15} 

\noindent
\!\!\cite{AlonNRW15} considers the same communication setting as ours on bipartite graphs $G=(L \sqcup R,E)$ with the key difference that the players are only associated with vertices in $L$, 
and thus each edge is seen by only a single player. 
They prove that any protocol that uses $\polylog{(n)}$ communication per player and computes an $O(1)$-approximate matching 
requires $\Omega(\log\log{n})$ rounds in this model. 

The proof in~\cite{AlonNRW15} is via \textbf{round elimination}: to lower bound $\poly\!\log{(n)}$-communication $r$-round protocols $\prot_r$, they start with $p_r \approx n^{4/5}$ \emph{independent} $(r-1)$-round ``hard'' instances ${\Ins_1,\ldots,\Ins_{p_r}}$, called \emph{principal} instances. These instances are supported on disjoint sets of $\approx n^{1/5}$ vertices each, and are then ``embedded''  in a single graph $G$ to form an $r$-round instance $\Ins$. This instance is such that the \emph{first} message of $\prot_r$ cannot reveal much information about  principal instances and thus $\prot_r$ cannot 
solve them in its remaining $r-1$ rounds given their (inductive) hardness. 

\begin{figure}[h!]
	\centering

\begin{tikzpicture}

\tikzset{layer/.style={rectangle, rounded corners=5pt, draw, black, line width=1pt,  fill=black!10, inner sep=4pt}}
\tikzset{vertex/.style={circle, ForestGreen, fill=white, line width=2pt, draw, minimum width=8pt, minimum height=8pt, inner sep=0pt}}
\tikzset{choose/.style={rectangle, line width=1pt, rounded corners = 2pt, draw, minimum width=40pt, minimum height=16pt, fill=black!10}}
		
\node (R) {$R$:}; 
\node (L) [below=40pt of R]{$L$:}; 		
\node[choose](J1)[right=1pt of R]{};
\node[choose](J2)[right=6pt of J1]{};
\node[choose](J3)[right=6pt of J2]{};

\node[choose, fill=blue!20](x1) [right=30pt of J3]{};
\node[choose, fill=blue!20](y1) [below=40pt of x1]{};

\foreach \i in {2,...,6}
{
	   \pgfmathtruncatemacro{\ip}{\i-1};
	   
	   \node[choose, fill=blue!20] (x\i) [right=6pt of x\ip]{};
	   \node[choose, fill=blue!20] (y\i) [right=6pt of y\ip]{};

}

\foreach \i in {1,...,6}
{
	   \draw[line width=2pt, blue] (x\i) -- node [left] {$\Ins_{\i}$} (y\i);
	   
	   \foreach \j in {1,2,3}
	   {
	   	 \draw[line width=1pt, gray, opacity=0.25] (y\i) -- (J\j);
	   }
}

\foreach \j in {1,2,3}
{
   	 \draw[line width=1pt, black] (y1) -- node [pos=0.7, left] {$\Jns_{1,\j}$}(J\j);
}


\end{tikzpicture}
	\caption{An illustration of the lower bound instances of~\cite{AlonNRW15} with parameters $f_r = 3$ and $p_r = 6$. The top right vertices (blue) are used in principal instances, while top left vertices (gray) are fooling instances. The heavy (blue) edges are from principal instances and the light (gray) edges are from fooling instances -- to avoid clutter, only the edges in fooling instances of the first principal instance are drawn (solid black edges). 
To find a large matching in this graph, one needs to find sufficiently large matchings in many of the principal instances. 
}\label{fig:anrw15}
\end{figure}
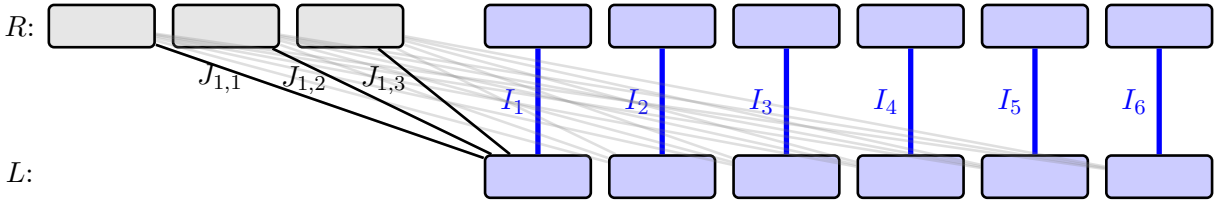
\vspace{-5pt}

To limit the information revealed by $\prot_r$ about principal instances,~\cite{AlonNRW15} further ``packs'' the graph, for every principal instance $i \in [p_r]$,  
with $f_r \approx n^{2/5}$ \emph{fooling} instances $\Jns_{i,*} := \Jns_{i,1},\ldots,\Jns_{i,f_r}$. This packing ensures that: $(1)$  these fooling instances are supported on a small set of vertices on the $R$-side of the bipartition and so $\prot_r$ still has to solve \emph{most of} the underlying principal instances in order
to solve $\Ins$; and $(2)$ each player in $\Ins$
``plays'' in $f_r+1$ instances, consisting of only one principal instance, while being \emph{oblivious} to which instance is the principal one. An ingenious idea in~\cite{AlonNRW15} is that 
these fooling instances need \emph{not} actually be hard $(r-1)$-round instances! Instead, they form a {product} distribution where  
for each vertex $v \in L$, only the {marginal} distribution of $v$ is the same under fooling and principal instances. This ensures that 
in the first round (and only in this round), $v$ cannot distinguish between principal and fooling instances. 

\vspace{-1pt}
\paragraph{Round elimination embedding.} We can now discuss how \cite{AlonNRW15} \emph{eliminates} the first round of  $\prot_r$ and obtains 
an $(r-1)$-round protocol $\sigma$ for solving an $(r-1)$-round hard instance $\Insstar$. 

\vspace{-5pt}
\begin{tbox}
\vspace{-5pt}
\begin{center}
\underline{Embedding argument of~\cite{AlonNRW15}:}
\end{center}
\begin{enumerate}[label=$(\roman*)$,leftmargin=20pt]
\item The players in $\sigma$  sample the first message $M^{(1)}$ of $\prot_r$ using \emph{public} randomness. 
\item Then, they will sample an index $i \in [p_r]$ uniformly and let $\Ins_i = \Insstar$ in the instance $\Ins$. 
\item Next, they sample 
$\Jns_{i,1},\ldots,\Jns_{i,f_r}$ conditioned on $M^{(1)}$ and $\Ins_i = \Insstar$ using \emph{private} randomness. This is a non-trivial sampling process which, on a high level, is doable only because fooling instances are {product} distributions (with only the marginals matching principal ones). 

More specifically, each player $v$ \emph{independently} sample its own input $\Jns_{i,*}(v)$ in all the fooling instances, conditioned on only its actual input $\Ins_i(v)$ in its principal instance $\Ins_i$, and $M^{(1)}$.

\item Finally, the players of $\sigma$ sample the {remaining} $p_r-1$ principal instances $\Ins_{-i}$ and $(p_r - 1) \cdot f_r$ fooling instances $\Jns_{-i,*}$  conditioned on $M^{(1)}$ to have a complete instance $\Ins$. 
\end{enumerate}
\end{tbox}
\vspace{-5pt}
At this point, the players in $\sigma$ already have the first message $M^{(1)}$ of $\prot_r$ as well as inputs of all underlying instances without any communication. So, they can continue running $\prot_r$ from its second round, by each player of
$\sigma$ on $\Insstar$ communicating the messages of corresponding player of $\prot_r$ in $\Ins_i$, and simulating messages of $\prot_r$ for players outside $\Ins_i$ with no  communication. 
As $\prot_r$ will also need to solve $\Ins_i$ for a random $i \in [p_r]$, this gives a $(r-1)$-round protocol $\sigma$ for $\Insstar=\Ins_i$. 

At a high level, the correctness of this approach can be argued as follows: 
\vspace{-5pt}
\begin{itemize}[leftmargin=15pt]
	\item The \emph{right} distribution of all underlying variables for $\prot_r$ can be expressed as (by chain rule): 
	\begin{align}
		\rM^{(1)} \times (\rIns_i \mid \rM^{(1)}) \times (\rJns_{i,*} \mid \rIns_i , \rM^{(1)}) \times (\rIns_{-i},\rJns_{-i,*} \mid \rJns_{i,*} , \rIns_i , \rM^{(1)}). \label{eq:anrw-right}
	\end{align}
	\item The distribution sampled from in the protocol $\sigma$ on the other hand is: 
		\begin{align}
		\underbrace{\rM^{(1)}}_{\text{publicly}} \times \underbrace{\rIns_i}_{\text{input}} \times ({\underbrace{{\bigtimes}_{\!v} \rJns_{i,*}(v) \mid \rIns_i(v) , \rM^{(1)}}_{\text{privately}}}) \times \underbrace{(\rIns_{-i},\rJns_{-i,*} \mid \rM^{(1)})}_{\text{publicly}}. \label{eq:anrw-alg}
	\end{align}
\end{itemize}
\vspace{-5pt}

Let us show that these distributions are $o(1)$-close  in total variation distance, which implies that $\prot_r$ also works (almost) as good on sampled instances (see~\Cref{fact:tvd-small}), giving us the desired $(r-1)$-round protocol $\sigma$ for $\Insstar$. Here, the first terms are the same. For the second terms,
\begin{align}
	\tvd{\rIns_i}{(\rIns_i \mid \rM^{(1)})}^2 \leq	 \mi{\rIns_i}{\rM^{(1)}} \leq \frac{1}{f_r+1} \cdot \mi{\rJns_{i,*},\rIns_i}{\rM_i^{(1)}} \leq o(1). \label{eq:anrw-fooled}
\end{align}
In~\Cref{eq:anrw-fooled}, the first inequality is standard (see~\Cref{fact:kl-info} and \Cref{fact:pinskers}). The second 
inequality uses the fact that the players in $\Ins_i$ in $\prot_r$ are oblivious to origins of their edges in $\Ins_i$ vs. $\Jns_{i,*} = \Jns_{i,1},\ldots,\Jns_{i,f_r}$ (by the marginal indistinguishability of these instances); thus, the information revealed by their messages $M_i^{(1)}$ is ``spread'' over these instances; also, other players of $\prot_r$ cannot reveal any information about these instances as they do not see them. 
The final inequality holds because the messages communicated by $\approx n^{1/5}$ players in $\Ins_i$ have collective size much smaller than $f_r \approx n^{2/5}$. 

Finally, the
third and fourth terms in~\Cref{eq:anrw-right,eq:anrw-alg} also have the same distributions in both cases, which at a high level, follows from the rectangle property of communication protocols: for instance, since $\Jns_{i,*}(u)$ and $\Jns_{i,*}(v)$ were independent originally, they remain independent even after conditioning on $M^{(1)}$ -- this is sufficient to show the equivalence of corresponding distributions. This concludes the closeness of these distributions and our overview 
 of the work of~\cite{AlonNRW15}. 
 
 \subsection{Our Approach and New Ideas}\label{sec:approach} 
 
 The very first obvious challenge in using construction of~\cite{AlonNRW15} in our model is that it can be easily solved in just a single round once \emph{both} sides of the bipartite graph can speak (the maximum matching of instances created
 is incident on vertices with degree one in $R$ who can just communicate their edge directly on the blackboard). This brings us to the first and most obvious of our ideas. 
 
\subsubsection{Idea One: Symmetrizing the Input Distribution}\label{sec:idea1}
 The first step is to symmetrize the input distribution in~\cite{AlonNRW15}. Basically, to create a hard $r$-round instance, 
we again start with $(r-1)$-round hard principal instances $\Ins_1,\ldots,\Ins_{p_r}$. We then also add $f_r$ sets of vertices $\fb_1,\ldots,\fb_{f_r}$ called the \emph{fooling blocks} and use 
vertices on \emph{both} sides of each principal instance $\Ins_i$, called \emph{principal block} $\pb_i$, and the fooling blocks to form fooling instances $\Jns_{i,1},\ldots,\Jns_{i,f_r}$ -- as before, these fooling instances
are \emph{not} hard $(r-1)$-round distributions, but only that the input of principal blocks match the ``right'' distribution \emph{marginally}. 

\smallskip
 \begin{figure}[h!]
	\centering

\begin{tikzpicture}

\tikzset{layer/.style={rectangle, rounded corners=5pt, draw, black, line width=1pt,  fill=black!10, inner sep=4pt}}
\tikzset{vertex/.style={circle, ForestGreen, fill=white, line width=2pt, draw, minimum width=8pt, minimum height=8pt, inner sep=0pt}}
\tikzset{choose/.style={rectangle, line width=1pt, rounded corners = 2pt, draw, minimum width=40pt, minimum height=16pt, fill=black!10}}
			
\node[choose](J1){$\fb_1$};
\node[choose](J2)[right=6pt of J1]{$\fb_2$};
\node[choose](J3)[right=6pt of J2]{$\fb_3$};

\node[choose, fill=blue!20](x1) [above right=10pt and 40pt of J3]{$\pb_1$};
\node[choose, fill=blue!20](y1) [below=40pt of x1]{$\pb_1$};

\foreach \i in {2,...,6}
{
	   \pgfmathtruncatemacro{\ip}{\i-1};
	   
	   \node[choose, fill=blue!20] (x\i) [right=6pt of x\ip]{$\pb_{\i}$};
	   \node[choose, fill=blue!20] (y\i) [right=6pt of y\ip]{$\pb_{\i}$};
	
}

\foreach \i in {1,...,6}
{
	   \draw[line width=2pt, blue] (x\i) -- node [left] {$\Ins_{\i}$} (y\i);
	   
	   \foreach \j in {1,2,3}
	   {
	   	 \draw[line width=1pt, gray, opacity=0.25] 
		 (x\i) -- (J\j)
		 (y\i) -- (J\j);
	   }
}

\foreach \j in {1,2,3}
{
   	 \draw[line width=1pt, black] 
	 (y1) -- node [pos=0.65] {$\Jns_{1,\j}$}(J\j.south east)
	 (x1) -- node [pos=0.65] {$\Jns_{1,\j}$}(J\j.north east);
}

\end{tikzpicture}
	\caption{An illustration of our lower bound instances with parameters $f_r = 3$ and $p_r = 6$. The top and bottom vertices (blue) are principal blocks, while middle left vertices (gray) are fooling blocks. 
	The heavy (blue) edges are from principal instances and the light (gray) edges are from fooling instances -- to avoid clutter, only the edges in fooling instances of the first principal instance are drawn (solid black edges). 
	Note that fooling blocks participate only in fooling instances while principal blocks participate both in principal and fooling instances. 
}\label{fig:ours}
\end{figure}
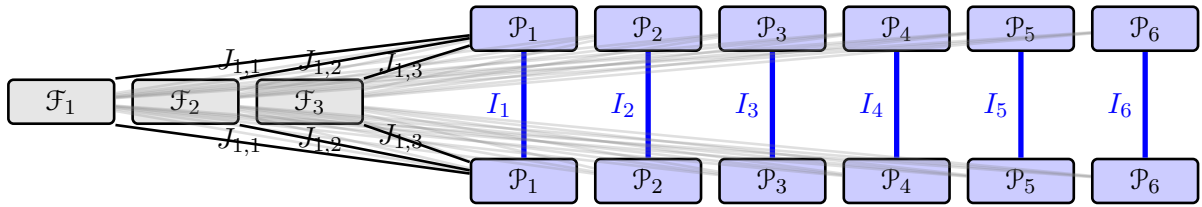

This step of symmetrizing the input distribution is a straightforward extension of~\cite{AlonNRW15}, and we claim no novelty in this part. The interesting part is how to analyze this distribution in our model in light of the following key differences from~\cite{AlonNRW15}: (1) in addition to {principal} blocks, vertices in $\fb_1,\ldots,\fb_{f_r}$ can now also communicate; and (2) there is an edge-sharing aspect in our model; in particular,
sharing of edges between fooling blocks and principal blocks allows fooling blocks to communicate even about edges directly inside principal instances (!), and yet fooling blocks themselves are not even fooled anymore in the distribution. We discuss our approach for handling these parts in the following three subsections. 

\subsubsection{Idea Two: Bounding Revealed Information on Average}\label{sec:idea2}
 Our goal as before is to do a round elimination argument and embed an $(r-1)$-round instance inside an $r$-round one. 
 Our embedding argument in the \emph{first} round is going to be the same as that of~\cite{AlonNRW15}, except that we also sample the first message $M^{(1)}_F$ of fooling blocks  using public randomness  (there are \emph{no} such players in~\cite{AlonNRW15}). 
 We will then  have all the messages of round one, namely, $M^{(1)} = (M^{(1)}_P,M^{(1)}_F)$, as well as edges {incident} on the principal block $\pb_i$, namely, $\Ins_i,\Jns_{i,*}$, inside $\Ins$ without having done any communication. 

Specifically, we design a protocol $\sigma$ that given an $(r-1)$-round instance $\Insstar$, creates an $r$-round instance $\Ins$ and uses a $\polylog{(n)}$-communication $r$-round protocol $\prot_r$ on $\Ins$ to solve $\Insstar$ as follows.
 
 \begin{tbox}
\vspace{-5pt}
\begin{center}
\underline{Our embedding argument -- first round:}
\end{center}
\begin{enumerate}[label=$(\roman*)$, leftmargin=20pt]
\item Players in $\sigma$ sample the first message $M^{(1)}_P,M^{(1)}_F$ of {principal}  and {fooling} blocks \emph{publicly}. 
\item Then, they will sample an index $i \in [p_r]$ uniformly and let $\Ins_i = \Insstar$ in the instance $\Ins$; thus, players in $\sigma$ will play the role of principal block $\pb_i$ in $\prot_r$ from now on. 
\item Next, they sample $\Jns_{i,*}$ conditioned only on $M^{(1)}_{P}$ and $\Ins_i = \Insstar$ using \emph{private} randomness by 
each vertex $v$ of $\sigma$ independently sampling $\Jns_{i,*}(v)$ only conditioned on $\Ins_i(v) , M^{(1)}_{P}$. 
\end{enumerate}
\end{tbox}
Let us argue that the joint distribution of obtained random variables at this point is close to that of the actual distribution induced by $\prot_r$ (similar to~\Cref{eq:anrw-right,eq:anrw-alg} for~\cite{AlonNRW15}):
\vspace{-10pt}
\begin{itemize}[leftmargin=15pt]
	\item The \emph{right} distribution of the  underlying variables for $\prot_r$ can be expressed as: 
	\begin{align}
		(\rM^{(1)}_P,\rM^{(1)}_F) \times (\rIns_i \mid \rM^{(1)}_P,\rM^{(1)}_F) \times (\rJns_{i,*} \mid \rIns_i , \rM^{(1)}_P,\rM^{(1)}_F). \label{eq:ours-right-1}
	\end{align}
	\vspace{-20pt}
	\item The distribution sampled from in the protocol $\sigma$ is: 
		\begin{align}
		\underbrace{(\rM^{(1)}_P,\rM^{(1)}_F)}_{\text{publicly}} \times \underbrace{\rIns_i}_{\text{input}} \times ({\underbrace{{\bigtimes}_{\!v} \rJns_{i,*}(v) \mid \rIns_i(v) , \rM^{(1)}_{P}}_{\text{privately}}}). \label{eq:ours-alg-1}
	\end{align}
\end{itemize} 
	\vspace{-10pt}
The first terms are the same. For the second terms, similar to~\Cref{eq:anrw-fooled}, we have, 
\begin{align}
	\tvd{\rIns_i}{(\rIns_i \mid \rM^{(1)}_P,\rM^{(1)}_F)}^2 \leq	 \mi{\rIns_i}{\rM^{(1)}_P,\rM^{(1)}_F} =  \mi{\rIns_i}{\rM^{(1)}_P} + \mi{\rIns_i}{\rM^{(1)}_F \mid \rM^{(1)}_P}, \label{eq:ours1}
\end{align}
using the chain rule of mutual information (\itfacts{chain-rule}) in the equality. The first term in RHS above can still be bounded by $o(1)$ by the same logic that principal blocks are oblivious to identity of
principal instance edges in their input. But such a statement is not true about fooling blocks in the second term, as those vertices themselves are not fooled. Consider the following $1$-bit protocol.

\vspace{-5pt}
\begin{example}
	Suppose we {direct} each edge of the graph randomly to one of its endpoints using public randomness. 
	Principal blocks\footnote{A player can  know whether it is principal or fooling simply based on its degree.} send the XOR of their \emph{outgoing} edges and fooling blocks send the XOR of their \emph{incoming} edges incident on $\Jns_{i,*}$ 
	for some $i \in [p_r]$. Taking the XOR of messages sent by $\pb_i$, $M^{(1)}_{P,i}$, and fooling blocks, $M^{(1)}_F$, 
	reveals XOR of all edges inside $\Ins_i$ as each such edge will be outgoing for exactly one endpoint and edges in $\Jns_{-i,*}$ cancel out in this XOR. This reveals one bit of information about $\Ins_i$, making $\mi{\rIns_i}{\rM^{(1)}_F \mid \rM^{(1)}_P} \geq 1$.  (Ideas like this are used in actual distributed sketching protocols, e.g., in~\cite{AhnGM12a,KapralovLMMS14}.) 
\end{example}
\vspace{-5pt}
\noindent
Instead, we show that fooling blocks cannot reveal much about $\Ins_i$ for an \emph{average} $i \in [p_r]$: 
\begin{align}
	\Exp_i [\mi{\rIns_i}{\rM^{(1)}_F \mid \rM^{(1)}_P}] \leq \frac{1}{p_r} \cdot \mi{\rIns_1,\ldots,\rIns_{p_r}}{\rM^{(1)}_F \mid \rM^{(1)}_P}  \leq o(1), \label{eq:be-bad}
\end{align}
where in the second inequality we used the fact that the $\polylog{(n)}$-bit messages of \emph{all} $f_r \approx n^{2/5}$ fooling blocks of size $\approx n^{1/5}$ cannot reveal more than $o(p_r)$ information as $p_r \approx n^{4/5}$ (this idea 
is similar to the ``public-vs-private'' vertices of~\cite{AssadiKO20} for one-round lower bounds in the distributed sketching model). 
This allows us to bound the LHS of~\Cref{eq:ours1} on average for $i \in [p_r]$. A similar type of argument can be applied to the third terms also to ``drop'' the conditioning on $M^{(1)}_F$, while changing the distribution only 
by $o(1)$ in total variation distance. This implies that
\[
	\Exp_i \tvd{(\rJns_{i,*} \mid \rIns_i , \rM^{(1)}_P,\rM^{(1)}_F)}{(\rJns_{i,*} \mid \rIns_i , \rM^{(1)}_P)} \leq o(1).
\]
%
By~\cite{AlonNRW15}, the second distribution here matches the product distribution sampled privately by the players (the third term of~\Cref{eq:ours-alg-1}). 
This is now sufficient for simulating the \emph{first round} of $\prot_r$ (almost) faithfully with no communication as $i \in [p_r]$ is also chosen randomly in the embedding\footnote{\!\!\cite{AlonNRW15} also works
with a random $i \in [p_r]$ but only to ensure that the underlying instance $\Ins_i$ needs to be solved by $\prot_r$ as most but not all principal instances are solved in $\prot_r$ -- \emph{all} information-theoretic guarantees 
for $\prot_r$ mentioned for the embedding of~\cite{AlonNRW15} hold for \emph{arbitrary} $i \in [p_r]$ unlike ours.}.

It is tempting  to consider our job done as we successfully simulated the first round of $\prot_r$ with no communication, and thus we \emph{eliminated} a round. But in fact, this 
is just the start of \textbf{the unique challenges of our model}. Unlike~\cite{AlonNRW15}, it is not clear how we can continue running $\prot_r$ in the subsequent rounds: 
in $\sigma$, we have only decided on the input of principal block $\pb_i$ in $\Ins$ -- the input to  other principal blocks and all fooling blocks are still undecided, and so $\prot_r$ is not well defined for the subsequent rounds. 
We now need to deviate entirely from~\cite{AlonNRW15} to handle this. 

\subsubsection{Idea Three: Partial-Input Embedding and Non-Simultaneous Simulation}\label{sec:idea3}

To continue running $\prot_r$ from its second round onwards, we should be able to simulate \emph{all} players in $\Ins$, not only the principal block $\pb_i$ responsible for $\Ins_i = \Insstar$. 
Let us consider a standard approach. 


\paragraph{Standard approach for handling remaining instances.} The standard approach is to  sample input of remaining players in $\prot_r$ using public randomness and let the ``actual'' players of $\sigma$ simulate them ``in their head'' with no communication (this corresponds to step $(iv)$ of embedding of~\cite{AlonNRW15}). 
This approach fails completely for us. Consider the fooling blocks first: at this point in the protocol $\sigma$, the players have sampled $\Jns_{i,*}$ \emph{privately} which was necessary in the first round (given the correlation of $\Jns_{i,*}(v)$ with $\Ins_i(v)$ via $M^{(1)}_{P}$ and that $\Ins_i(v)$ was only known to $v$). But given that the other endpoints of these edges are in fooling blocks, this means 
that \emph{no} single player of $\sigma$ can  even know the edges incident on a single vertex in fooling blocks, leaving no player to simulate players of $\prot_r$ in fooling blocks (or sampling rest of their inputs). 

A more subtle issue happens when it comes to the rest of principal blocks, which on the surface, should be fine given they share no edges with principal block $\pb_i$. To be able to sample instances $\Ins_{-i},\Jns_{-i,*}$ publicly in the last step of embedding,  we need the following two distributions to be close: 
\begin{align*}
	\underbrace{(\rIns_{-i}, \rJns_{-i,*} \mid \rJns_{i,*} , \rIns_i , \rM^{(1)}_P, \rM^{(1)}_F)}_{\text{right distribution}} \qquad vs. \qquad
	\underbrace{(\rIns_{-i}, \rJns_{-i,*} \mid \rM^{(1)}_P, \rM^{(1)}_F)}_{\text{``input-sampling''-protocol distribution}} \hspace{-20pt}.
\end{align*}
Yet, even a $1$-bit communication protocol can turn these two distributions far from each other: 
\begin{example}
	Suppose principal blocks remain silent and each fooling block sends the XOR of  their incident edges. Then conditioned on the messages $M^{(1)}_F$, once 
	we additionally know $\Jns_{i,*}$, we learn the parity of edges in $\Jns_{-i,*}$ which changes the distribution of $\Jns_{-i,*}$ by $\Omega(1)$.  
\end{example}

All in all, when it comes to our edge-sharing model, the standard approach of sampling the remaining instances inherently fails: $(i)$ fooling blocks are directly incident on edges in $\Jns_{i,*}$ which are part of the input to players in $\pb_i$ in $\prot_r$; $(ii)$ worse yet, the messages of fooling blocks even correlate inputs of the rest of principal vertices with those of $\pb_i$, meaning that \emph{all} principal players can reveal information about $\Ins_i$ not only the ones in $\pb_i$ that are directly incident on it.  

\paragraph{Our approach for handling remaining instances.} A key idea we use in the rest of our protocol is what we call \textbf{partial-input embedding}: we only generate the rest of the input for players $\pb_i$ and for all the remaining 
players, we will simulate them solely by \emph{sampling their messages} without ever committing to their input. Thus, our embedding keeps going even beyond the first round as we will need 
to generate the messages of remaining players throughout the entire execution of $\prot_r$. 

In particular, after running the embedding part of the first round, for any round $t > 1$, 
the players in the protocol $\sigma$ will simulate the $t$-th round of $\prot_r$ as follows: 
\begin{tbox}
\vspace{-5pt}
\begin{center}
\underline{Our embedding argument -- after first round:}
\end{center}
\begin{enumerate}[label=$(\roman*)$, leftmargin=20pt]
\item The players in $\sigma$ communicate messages of $\pb_i$ using the current content of the blackboard $M^{(<t)}$, and their inputs $\Jns_{i,*},\Ins_i$ sampled for the first round, and send the messages $M^{(t)}_{P,i}$.
\item  \emph{After} this message is revealed, the players use \emph{public} randomness to sample the $t$-th message of remaining players $M^{(t)}_{-i}:=(M^{(t)}_{P,-i},M^{(t)}_F)$ conditioned on public knowledge $M^{(<t)}, M^{(t)}_{P,i}$.  
\end{enumerate}
\end{tbox}
It is worth pointing out a rather strange aspect of this embedding. In $\prot_r$ itself, the messages $M^{(t)}_{P,i}$ and $M^{(t)}_{-i}$ are communicated \emph{simultaneously} with each other. Yet, in our simulation of $\prot_r$, 
we are  crucially using messages principal block $\pb_i$ to help us generate the remaining messages! We will discuss the necessity of this \textbf{non-simultaneous simulation} of a round in the next subsection. 

 As before, let us examine the underlying distributions in the first $t$ rounds for $t > 1$: 
\vspace{-5pt}
\begin{itemize}[leftmargin=15pt]
	\item The {right} distribution of the underlying variables up until this point in $\prot_r$ is: 
	\begin{align}
		\underbrace{(\rM^{(<t)},\rJns_{i,*}, \rIns_i)}_{\text{prior rounds}} \times (\rM^{(t)}_{P,i} \mid \rM^{(<t)}, \rJns_{i,*}, \rIns_i) \times (\rM^{(t)}_{-i} \mid \rM^{(t)}_{P,i} , \rM^{(<t)}, \rJns_{i,*}, \rIns_i). \label{eq:ours-right-2}
	\end{align}
	\vspace{-20pt}
	\item The distribution sampled from in the protocol $\sigma$ is: 
		\begin{align}
		\underbrace{(\rM^{(<t)},\rJns_{i,*}, \rIns_i)}_{\text{prior rounds}} \times \underbrace{({\bigtimes}_{\!v} \rM^{(t)}_{P,i}(v) \mid \rM^{(<t)}, \rJns_{i,*}(v), \rIns_i(v))}_{\text{communication}} \times \underbrace{(\rM^{(t)}_{-i} \mid \rM^{(t)}_{P,i} , \rM^{(<t)})}_{\text{publicly}}. \label{eq:ours-alg-2}
	\end{align}
\end{itemize}
\vspace{-7pt}

The first terms can be shown to be $o(1)$-close inductively (with base case being success of our simulation in the first round). 
The second terms are identical since the messages $M^{(t)}_{P,i}$ in $\prot_r$ are simply generated simultaneously by each vertex $v \in \pb_i$ looking at its own neighborhood 
$\Jns_{i,*}(v), \Ins_i(v)$ and the blackboard $M^{(<t)}$. For the last terms to be close, similar to~\Cref{eq:anrw-fooled,eq:ours1}, we need to bound the mutual information between $M^{(t)}_{-i}$ and $\Jns_{i,*}, \Ins_i$ 
at this point of the protocol, namely: 
\begin{align}
	 \tvd{(\rM^{(t)}_{-i} \mid \rM^{(t)}_{P,i} , \rM^{(<t)})}{(\rM^{(t)}_{-i} \mid \rM^{(t)}_{P,i}, \rM^{(<t)}, \rJns_{i,*}, \rIns_i)}^2  
	 \leq  
	 {\mi{\rM^{(t)}_{-i}}{\rJns_{i,*}, \rIns_i \mid \rM^{(t)}_{P,i}, \rM^{(<t)}}}. \label{eq:later1} 
\end{align}
Yet, while the RHS of this equation may seem similar to that of~\Cref{eq:be-bad}, this is a much more challenging term to bound as we shall discuss in the next subsection. For now, we only 
mention that our proof eventually bounds this information term \emph{on average} for $i \in [p_r]$ with $o(1)$ which allows us to continue the simulation. 

Having shown the $o(1)$-closeness of the distribution of $\prot_r$ and the one used in our embedding, the proof ends as follows. The players of $\sigma$ can continue running $\prot_r$ by playing the role of principal block $\pb_i$ in $\prot_r$ explicitly with proper communication and {keep sampling} messages of remaining players as done in the embedding. At the end of the last round, they will obtain an almost faithful simulation of the entire protocol $\prot_r$ which allows them to solve $\Insstar = \Ins_i$ as
$\prot_r$ likely needs to solve $\Ins_i$ for a random $i \in [p_r]$.  
This will then give us an $(r-1)$-round protocol for $\Insstar$ which in turn allows us to use the inductive hardness of these instances to infer the lower bound for $r$-round protocols. 

\subsubsection{Idea Four: Bounding Gradual Correlation of Players' Inputs}\label{sec:idea4}

The main technical part of our proof is to bound the information term in the RHS of~\Cref{eq:later1}, namely, the information other players can reveal about the input of principal block $\pb_i$ in a single round. By the definition of 
$M^{(t)}_{-i} = (M^{(t)}_{P,-i},M^{(t)}_F)$ and chain rule (\itfacts{chain-rule}), we have,  
\begin{align}
	\text{RHS of \Cref{eq:later1}} = \mi{\rM^{(t)}_{P,-i}}{\rJns_{i,*}, \rIns_i \mid \rM^{(t)}_{P,i} , \rM^{(<t)}} + \mi{\rM^{(t)}_F}{\rJns_{i,*}, \rIns_i \mid \rM^{(t)}_{P} , \rM^{(<t)}}. \label{eq:chain2}
\end{align}
Recall that by the construction of the instance $\Ins$, we have $\rJns_{i,*}, \rIns_i \perp \rJns_{-i,*}, \rIns_{-i}$. By the rectangle property of communication protocols, if the input of players are independent of each other, then even after communication, 
their corresponding input remains independent. \emph{Assuming} we have this conditional independence here, one can easily prove both of the following properties: 
\begin{align*}
	&\mi{\rM^{(t)}_{P,-i}}{\rJns_{i,*}, \rIns_i \mid \rM^{(t)}_{P,i} , \rM^{(<t)}}  = 0 \tag{by \itfacts{info-zero}}, \\
	&\Exp_i[\mi{\rM^{(t)}_F}{\rJns_{i,*}, \rIns_i \mid \rM^{(t)}_{P} , \rM^{(<t)}}] \leq \frac{1}{p_r} \cdot \mi{\rM^{(t)}_F}{\rJns,\rIns \mid \rM^{(t)}_{P} , \rM^{(<t)}} \leq o(1). \tag{similar to~\Cref{eq:be-bad}} 
\end{align*}
So then what is the problem here? \textbf{Short answer: edge-sharing between the players!} 

While $\rJns_{i,*}, \rIns_i \perp \rJns_{-i,*}, \rIns_{-i}$ is true initially, having fooling blocks that are able to see (subsets of) \emph{both} these sets from the other endpoints, means that
their messages can \emph{correlate} these inputs as well. In other words, it can be that $\rJns_{i,*}, \rIns_i \not\perp \rJns_{-i,*}, \rIns_{-i} \mid \rM^{(<t)}_F$ already from the second round. What is even more problematic 
is that even principal blocks in $\pb_i$ and $\pb_{-i}$ will see messages of these fooling blocks, so after the second round, even messages of other principal blocks correlate their originally independent inputs -- more formally, 
this means that $\rJns_{i,*}, \rIns_i \not\perp \rJns_{-i,*}, \rIns_{-i} \mid \rM^{(t)}_P$ (with no direct conditioning on fooling blocks' messages) can also happen after the second round! 

The following example helps to motivate our approach. 

\begin{example}
	Consider the following two protocols: 
	\vspace{-5pt}
	\begin{itemize}
		\item Protocol 1: in the second round, every principal block \emph{except for $\pb_i$} sends XOR of their edges to fooling blocks\footnote{Identity of fooling blocks can be known to everyone in the second round.} $\Jns_{-i,*}$, while fooling blocks send XOR of all their edges in $\Jns$. 
		\item Protocol 2: in the second round, every principal block sends XOR of their edges in $\Jns$ while fooling blocks send XOR of all their edges in $\Jns$. 
	\end{itemize}
	\vspace{-5pt}
	In the first protocol, conditioned on $M^{(2)}_F$, the messages $M^{(2)}_{P,-i}$ reveal the XOR of edges in $\Jns_{i,*}$, and thus the first mutual information term in~\Cref{eq:chain2} is $1$ bit (note that here $M^{(2)}_{P,i} = \emptyset$). 

\noindent
	In the second protocol, while $M^{(2)}_{P,-i}, M^{(2)}_F$ still reveal the XOR of $\Jns_{i,*}$, given that $M^{(2)}_{P,i}$ is already this XOR itself, the mutual information term in~\Cref{eq:chain2} is $0$ bit. 
\end{example}

This example shows that one can have protocols that for \emph{some} values of $i \in [p_r]$, principal blocks in $\pb_{-i}$ can reveal non-trivial information about inputs of a principal block $\pb_i$ also. But the given protocol (Protocol 1) 
is quite sensitive to the choice of index $i$, and for other indices $j \neq i$, this revealing of information no longer happens in this specific protocol. On the other hand, making the protocol less sensitive to the choice of $i$ by ``symmetrizing'' the actions of  players breaks its information-revealing property as players in $\pb_i$ themselves will reveal the information offered by others. We exploit this by bounding the first term of~\Cref{eq:chain2} \emph{on average} for $i \in [p_r]$. 
Note that this is precisely the step that our non-simultaneous simulation of a round, alluded to in~\Cref{sec:idea3}, kicks in: the messages of $M^{(2)}_{P,-i}$ are still correlated heavily with $\Jns_{i,*},\Ins_i$ even in Protocol 2; but conditioning on $M^{(2)}_{P,i}$ allows us to ``break'' this correlation and thus generate these messages even in the absence of public knowledge of $\Jns_{i,*},\Ins_i$. We argue this is true for all protocols in the following. 

To continue, by using chain rule (\itfacts{chain-rule}) on the first term of~\Cref{eq:chain2}, we get that, 
\begin{align}
	\mi{\rM^{(t)}_{P,-i}}{\rJns_{i,*}, \rIns_i \mid \rM^{(t)}_{P,i} , \rM^{(<t)}} &= \mi{\rM^{(<t)} , \rM^{(t)}_{P}}{\rJns_{i,*}, \rIns_i} - \mi{\rM^{(<t)} , \rM^{(t)}_{P,i}}{\rJns_{i,*}, \rIns_i} \label{eq:chain3} 
\end{align}
where RHS is all the information revealed by the protocol about $\Jns_{i,*},\Ins_i$ \emph{minus} the information revealed already by players $\pb_i$ and content of the blackboard.
Now, in the absence of any conditioning, one can use the fact that $\rJns_{i,*},\rIns_i \perp \rJns_{-i,*},\rIns_{i}$ to bound: 
\[
	\text{First term of~\Cref{eq:chain3} on average:} \quad \Exp_i[\mi{\rM^{(<t)},\rM^{(t)}_{P}}{\rJns_{i,*}, \rIns_i}] \leq o(1) + \frac{1}{p_r} \cdot \mi{\rM^{(\leq t)}_{P}}{\rJns, \rIns \mid \rM^{(<t)}_F},
\]
i.e., argue that fooling blocks can only reveal $o(1)$ bits about the input of an average principal block and the rest is the \emph{average} information revealed by principal blocks themselves about the entire input. The second term of~\Cref{eq:chain3} is lower bounded by (via a simple application of chain rule and non-negativity of mutual information),
\[
	\text{Second term of~\Cref{eq:chain3} on average:} \quad \Exp_i[\mi{\rM^{(<t)},\rM^{(t)}_{P,i}}{\rJns_{i,*}, \rIns_i}] \geq \Exp_i[\mi{\rM^{(\leq t)}_{P}}{\rJns_{i,*}, \rIns_i} \mid \rM^{(<t)}_F].
\]
Last step of the proof is to bound the second terms of the two equations above by showing that 
\[
	 \mi{\rM^{(\leq t)}_{P}}{\rJns, \rIns \mid \rM^{(<t)}_F} \leq \sum_{i=1}^{p_r} \mi{\rM^{(\leq t)}_{P}}{\rJns_{i,*}, \rIns_i \mid \rM^{(<t)}_F}.
\]
In words, this means that the total information revealed by principal blocks about the entire instance is bounded by the sum of the information revealed by them about each individual principal block's input $\Jns_{i,*},\Ins_{i}$ for $i \in [p_r]$ \emph{after} we condition 
on the messages of fooling blocks. This step requires a detailed calculation that at its core boils down to the fact that once we condition on $M^{(<t)}_F$, we can ``isolate'' the information revealed by each message $M^{(t)}_{P,i}$ solely to $\Jns_{i,*},\Ins_i$
-- in other words, the principal blocks cannot generate correlation with other principal blocks' inputs on their own beyond what is already forced by fooling blocks. 

Plugging in these bounds all together in~\Cref{eq:chain3} bounds the RHS by $o(1)$. A similar exercise, allows us to bound the second term in~\Cref{eq:chain2} by $o(1)$ also, which bounds the total information revealed 
about $\Jns_{i,*},\Ins_i$ by players other than the ones in $\pb_i$ by $o(1)$. This concludes the $o(1)$ bound on the mutual information term in~\Cref{eq:later1}, and implies the correctness of our simulation. 

\bigskip
To conclude, we managed to simulate \emph{all} rounds of $\prot_r$ almost faithfully by continuing the embedding throughout the protocol and as a result solve the underlying instance $\Insstar$ in $(r-1)$ rounds
using a protocol with $\polylog{(n)}$-size messages. We can now repeat this argument for $(r-1)$-round protocols and since in each recursion, the size of underlying instances drops by a factor of $\approx n^{1/5}$, we will end up with a non-trivial instance for any $r=o(\log\log{n})$ that needs to be solved by a $0$-round protocol -- a contradiction that implies our desired lower bound. 


\clearpage


\section{A Hard Distribution for Maximal Independent Set}\label{sec:mis-dist}

The following is a formal restatement of~\Cref{res:mis}.

\begin{theorem}[\Cref{res:mis}, formal]\label{thm:mis}
	For $r \ge 0$ and any $r$-round multi-party protocol (deterministic or randomized) in the shared blackboard model for computing a maximal independent set on $n$-vertex graphs with constant error probability, there must exist \emph{some} vertex communicating at least $\Omega(n^{1/\const^{r+1}})$ bits in \emph{some} round.
\end{theorem}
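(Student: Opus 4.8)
The plan is to prove the theorem by \textbf{induction on the number of rounds $r$}, via a round-elimination argument following the scheme laid out in~\Cref{sec:overview}. Fixing a bandwidth threshold of order $N^{1/\const^{r+1}}$ for $N$-vertex instances, I would establish the following round-elimination step: if some $r$-round protocol $\prot$ of bandwidth below this threshold solves MIS with constant error on a hard distribution $\misHard[r]$ over $N$-vertex graphs, then one can build an $(r-1)$-round protocol of the \emph{same} bandwidth that solves $\misHard[r-1]$ over $\approx N^{1/5}$-vertex graphs with only slightly larger error. Iterating this step $r$ times, the instance size shrinks by a factor $\approx N^{1/5}$ per level, so for every $r = o(\log\log N)$ we eventually reach a $0$-round protocol that must still produce an MIS of a nontrivial instance --- and a $0$-round protocol has no information at all about the input graph and so cannot, e.g., produce an MIS of a uniformly random perfect matching --- a contradiction. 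The base $\const$ of the exponent tower is taken large enough that the $o(1)$ error terms and the polynomial losses accumulated over the $r$ levels are comfortably absorbed; the base case $r=0$ is immediate. By Yao's principle it suffices to handle deterministic protocols against the distributions $\misHard[r]$ described next.

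\textbf{The hard distribution.} Symmetrizing the construction of~\cite{AlonNRW15} for our edge-sharing model, an instance of $\misHard[r]$ is built from $p_r \approx N^{4/5}$ \emph{principal blocks} $\pb[1],\dots,\pb[p_r]$ --- each carrying, on both of its vertex classes, an independent copy $\Ins_i \sim \misHard[r-1]$ of the smaller hard distribution --- together with $f_r \approx N^{2/5}$ small \emph{fooling blocks} $\fb[1],\dots,\fb[f_r]$. For each pair $(i,j)$ we add a \emph{fooling instance} $\Jns_{i,j}$ on the vertices of $\pb[i]$ and $\fb[j]$, drawn from a \emph{product} distribution designed so that, vertex by vertex inside a principal block, the marginal law of the edges a vertex sees is the \emph{same} whether they come from its principal instance $\Ins_i$ or from one of its $f_r$ fooling instances; consequently, in the first round a principal vertex cannot tell which of its incident sub-instances is the ``real'' one. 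Since the fooling blocks are few and small and each principal vertex lies in exactly one principal instance, a short counting argument shows that any maximal independent set of the composite graph, once restricted to $\pb[i]$ and after deleting the handful of vertices of $\pb[i]$ incident to fooling edges, induces a valid (or nearly valid) MIS of $\Ins_i$ for all but an $o(1)$-fraction of $i \in [p_r]$. Thus solving $\misHard[r]$ forces a protocol to solve $\Ins_i$ for a uniformly random $i$ with good probability.

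\textbf{Round elimination by partial-input embedding.} Given $\Insstar \sim \misHard[r-1]$, the simulating protocol $\sigma$ proceeds as follows. (i) Using public randomness, the players sample the entire first round $M^{(1)} = (M^{(1)}_P, M^{(1)}_F)$ of $\prot$ on a fresh draw of $\misHard[r]$. (ii) They pick $i \in [p_r]$ uniformly and plant $\Insstar$ as $\Ins_i$, so that henceforth the players of $\sigma$ impersonate block $\pb[i]$. (iii) Each vertex $v \in \pb[i]$ \emph{privately} samples its own fooling edges $\Jns_{i,*}(v)$ conditioned only on its planted input $\Ins_i(v)$ and on $M^{(1)}_P$ --- this per-vertex private resampling is exactly what the product structure of the fooling instances allows. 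Then, for each later round $t > 1$: (iv) the players of $\sigma$ \emph{actually communicate} the $\prot$-messages $M^{(t)}_{P,i}$ of $\pb[i]$ (well defined from the blackboard together with the data of steps (i)--(iii)), and only \emph{afterwards} do they use public randomness to resample \emph{all remaining} messages $M^{(t)}_{-i} = (M^{(t)}_{P,-i}, M^{(t)}_F)$ conditioned on the blackboard so far together with $M^{(t)}_{P,i}$. The defining feature of this \textbf{partial-input embedding} is that the inputs of the players outside $\pb[i]$ are \emph{never committed}: those players live in $\sigma$ only through their sampled messages, so the embedding must be maintained through \emph{every} round and not just the first. One then proves, by induction on $t$, that the joint law of $(M^{(\le t)}, \Jns_{i,*}, \Ins_i)$ produced by $\sigma$ is within $o(1/r)$ in total variation distance of its true law under $\prot$ on $\misHard[r]$: the chain rule isolates, at each round, a single new discrepancy --- an average-over-$i$ mutual information between the freshly resampled messages and the planted input $(\Jns_{i,*}, \Ins_i)$ --- together with the first-round term $\Exp_i\, \mi{\rIns_i}{\rM^{(1)}_P, \rM^{(1)}_F}$. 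Summing the $r$ rounds keeps the total error $o(1)$, so by~\Cref{fact:tvd-small} running $\prot$ inside $\sigma$ still solves $\Ins_i = \Insstar$ for a random $i$ with good probability, yielding the desired $(r-1)$-round protocol.

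\textbf{The main obstacle.} The crux --- and the real departure from the number-in-hand setting of~\cite{AlonNRW15} --- is bounding, \emph{on average over $i \in [p_r]$}, the one-round leakage $\mi{\rM^{(t)}_{-i}}{\rJns_{i,*}, \rIns_i \mid \rM^{(t)}_{P,i}, \rM^{(<t)}}$, i.e.\ how much the players outside $\pb[i]$ can reveal about $\pb[i]$'s input in a single round. In a number-in-hand world this would be $0$ by the rectangle property, since $(\Jns_{i,*}, \Ins_i)$ is independent of all other blocks' inputs; but a fooling block sees \emph{both} endpoints of its edges, so its messages \emph{correlate} $\Ins_i$ with $\Ins_{-i}$, and --- worse --- this induced correlation propagates into the messages of the other principal blocks as well, so that already from the second round even $M^{(t)}_{P,-i}$ can carry information about $(\Jns_{i,*}, \Ins_i)$ (the XOR examples of~\Cref{sec:idea4} show this is unavoidable). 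The resolution rests on two ideas. First, the averaging over the uniformly random index $i$ is essential: a protocol can make $\pb[-i]$ leak about $\pb[i]$ for one specific $i$, but symmetrizing over $i$ forces $\pb[i]$'s own round-$t$ message to already contain that bit --- which is precisely why $\sigma$ conditions on $M^{(t)}_{P,i}$ \emph{before} resampling the rest (the \textbf{non-simultaneous} simulation of a round). Second, expanding with the chain rule as in~\Cref{eq:chain2,eq:chain3} reduces the whole quantity, after conditioning on the fooling-block messages $\rM^{(<t)}_F$, to a ``super-additivity over blocks'' inequality $\mi{\rM^{(\le t)}_P}{\rJns, \rIns \mid \rM^{(<t)}_F} \le \sum_{i=1}^{p_r} \mi{\rM^{(\le t)}_P}{\rJns_{i,*}, \rIns_i \mid \rM^{(<t)}_F}$, which says the principal blocks cannot manufacture cross-block correlation beyond what the fooling blocks already impose. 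Combined with the bandwidth bound --- the total message length of all $f_r \approx N^{2/5}$ fooling blocks, each of size $\approx N^{1/5}$ and bandwidth below $N^{1/5}$, is $o(p_r)$, so the fooling blocks together reveal only $o(1)$ bits about an average principal block --- this drives the leakage to $o(1)$ and closes the induction. I expect this last step, namely carrying the conditioning on $\rM^{(<t)}_F$ faithfully through the chain-rule bookkeeping and establishing the super-additivity inequality, to be the most delicate part of the full argument. (The same construction and analysis, with $\misHard[r]$ replaced by a bipartite hard distribution for matchings, is expected to yield the maximal-matching lower bound of~\Cref{res:mm} as well.)
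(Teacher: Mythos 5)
Your round-elimination machinery matches the paper's proof essentially step for step: the public sampling of the first-round messages, the per-vertex private resampling of the fooling edges (the paper's \Cref{lem:mis-product-dist} is exactly the product-structure claim you invoke), the partial-input embedding in which the players outside $\pb_i$ exist only through their sampled messages, the non-simultaneous simulation that conditions on $M^{(t)}_{P,i}$ before resampling $M^{(t)}_{-i}$, the averaging over $i\in[p_r]$, and the super-additivity inequality $\mi{\rM^{(\le t)}_{\sfP}}{\rG \mid \rM^{(<t)}_{\sfF}} \le \sum_i \mi{\rM^{(\le t)}_{\sfP}}{\rG_i \mid \rM^{(<t)}_{\sfF}}$ (the paper's \Cref{clm:mis-sum-info}) are all present in the actual argument, with the same roles you assign them.

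There is, however, one genuine gap, and it is in the part you dismiss as "a short counting argument": the claim that an MIS of the composite graph, restricted to $\pb_i$, is a valid (or nearly valid) MIS of $\Ins_i$ for most $i$. In the construction, \emph{every} principal vertex participates in all $f_r$ fooling instances, so there is no "handful of vertices of $\pb_i$ incident to fooling edges" to delete. Concretely, a vertex $u\in\pb_i$ excluded from the MIS $\Gamma$ may be excused solely by a fooling neighbor in $\Gamma$, and a single fooling vertex placed in $\Gamma$ is adjacent (in expectation) to principal vertices spread across a constant fraction of \emph{all} principal blocks; with many fooling vertices in $\Gamma$, the restriction $\Gamma\cap\pb_i$ fails maximality on $\Ins_i$ for almost every $i$. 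Weakening the induction to "near-MIS" does not rescue this, since the recursive hypothesis (down to the base case, where correctness on all $k$ matching edges is what forces the $2^{-k}$ bound) requires an exact MIS of $\misHard[r-1]$. The paper closes this gap structurally rather than by counting: $\misHard[r]$ is built from \emph{two} independent half-instances $\misHalf[r]$ whose fooling vertices are joined by a complete bipartite graph (\Cref{dist:mis-half-r,dist:mis-hard-r}), so any MIS contains fooling vertices from at most one half, and on the other half the principal blocks have no neighbors in $\Gamma$ outside themselves — hence $\Gamma$ restricted to them is a genuine MIS (\Cref{clm:mis-solve-half}). This is also the source of the $\delta \mapsto \delta/2$ loss per level in the round-elimination lemma, which your error accounting would need to absorb. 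The rest of your outline would go through once this piece of the hard distribution is fixed.
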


In this section, we give a recursive definition of the hard distribution for maximal independent set that we are going to use for our proofs in~\Cref{sec:mis-proof}.
The base case is the following hard distribution $\misHard[0]$ for protocols without any communication.

\begin{Distribution}\label{dist:mis-hard-0}
	The hard distribution $\misHard[0]$ for protocols computing a maximal independent set without any communication.

	\textbf{Parameters:} bandwidth $k$, number of vertices $n_0 = 2k$.
	\begin{enumerate}
		\item Let $E$ be an arbitrary, \emph{fixed} perfect matching over $n_0$ vertices.
		\item For $e \in E$, drop $e$ with probability $1/2$ independently.
		\item Return the graph $G$ sampled above.
	\end{enumerate}
\end{Distribution}

An immediate observation about $\misHard[0]$ is that any valid maximal independent set uniquely determines the set of matching edges that is dropped from $E$: for $e = (u,v) \in E$, $e$ is dropped from $E$ if and only if both of $u,v$ are present in the maximal independent set.
So for any deterministic referee, it can output a valid maximal independent set with probability at most $2^{-k}$ over $\misHard[0]$ if it gets no information from the vertices.
Note that this distributional bound naturally generalizes to randomized referees by an averaging argument, which is summarized in the following lemma.

\begin{lemma}[Base Case]\label{lem:mis-base-case}
	Any $0$-round protocol for computing a maximal independent set can only succeed with probability $2^{-k}$ over $\misHard[0]$.
\end{lemma}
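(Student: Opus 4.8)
The plan is to formalize the observation already stated in the text: under $\misHard[0]$, the transcript is empty (there is no communication), so the referee's decision is a function only of its own randomness and the publicly available data — which, since there is no blackboard content, is nothing about the specific graph $G$. First I would fix any deterministic referee (equivalently, fix the referee's private randomness). Such a referee outputs a fixed subset $S \subseteq [n_0]$ of vertices, independent of the sampled $G$. For $S$ to be a valid MIS of $G$, the following must hold for every edge $e=(u,v)$ of the fixed perfect matching $E$: if $e$ is \emph{present} in $G$, then exactly one of $u,v$ lies in $S$ (independence forbids both; maximality forbids neither); if $e$ is \emph{dropped}, then both $u$ and $v$ must lie in $S$ (maximality, since each vertex has no other potential neighbor). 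Hence $S$ determines, for each $e \in E$, whether $e$ is present or dropped: $e$ is dropped iff both endpoints are in $S$. Therefore there is at most one graph $G$ in the support of $\misHard[0]$ for which the fixed $S$ is a valid MIS.

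Next I would use the fact that the $k$ edge-deletions are independent fair coin flips, so every graph in the support of $\misHard[0]$ occurs with probability exactly $2^{-k}$. Combined with the previous paragraph, the probability that the fixed deterministic referee outputs a valid MIS is at most $2^{-k}$. To extend to randomized referees (and randomized protocols in general), I would invoke the standard averaging / Yao-type argument: a randomized $0$-round protocol is a distribution over deterministic ones, and
\[
	\Pr_{\misHard[0],\, \text{coins}}[\text{success}] \;=\; \Exp_{\text{coins}}\Bigl[\Pr_{\misHard[0]}[\text{success}\mid\text{coins}]\Bigr] \;\le\; 2^{-k},
\]
since each conditional probability is at most $2^{-k}$ by the deterministic case. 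This gives the claimed bound.

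I do not anticipate a genuine obstacle here — this is the base case and the argument is essentially a counting argument — but the one point that warrants care is making sure the MIS validity conditions are stated correctly for \emph{both} the present-edge and dropped-edge cases, since it is exactly the "dropped $\Rightarrow$ both endpoints in $S$" direction (forced by maximality) that makes the map from valid outputs to graphs injective. I would also note explicitly that in a $0$-round protocol the referee truly sees nothing correlated with $G$ (no transcript, and public randomness is independent of the input), so conditioning on the referee's coins leaves the distribution of $G$ unchanged; this is what licenses treating $S$ as fixed while $G \sim \misHard[0]$. With these points in place the lemma follows immediately.
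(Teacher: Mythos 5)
Your proof is correct and follows exactly the paper's argument: the paper notes that a valid MIS uniquely determines which matching edges were dropped (dropped iff both endpoints are in the output set), so a deterministic referee succeeds on at most one of the $2^k$ equiprobable graphs, and the randomized case follows by averaging. Your write-up just spells out the same injectivity and averaging steps in more detail.
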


Building upon $\misHard[0]$, we construct the $r$-round hard distribution $\misHard[r]$ recursively.
Assume we are given the $(r-1)$-round hard distribution $\misHard[r-1]$ over $n_{r-1}$ vertices.
The construction consists of two steps: first defining an auxiliary ``half distribution'' $\misHalf[r]$ and then using $\misHalf[r]$ to get the desired $\misHard[r]$, as shown below.
The ``half instances'' roughly correspond to the hard instances we talk about in~\Cref{sec:overview}.
See~\Cref{fig:mis} for an illustration.

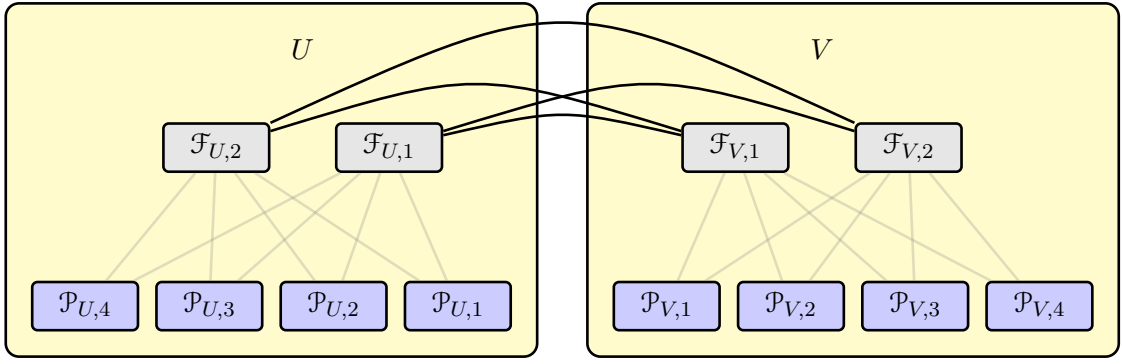
\begin{figure}[htb]
	\centering

\begin{tikzpicture}

\tikzset{layer/.style={rectangle, rounded corners=5pt, draw, black, line width=1pt,  fill=black!10, inner sep=4pt}}
\tikzset{vertex/.style={circle, ForestGreen, fill=white, line width=2pt, draw, minimum width=8pt, minimum height=8pt, inner sep=0pt}}
\tikzset{choose/.style={rectangle, line width=1pt, rounded corners = 2pt, draw, minimum width=40pt, minimum height=16pt, fill=black!10}}

\begin{scope}[local bounding box=lbb]

\node (U) {$U$};
\node[choose](FU1)[below right=20pt and 4pt of U]{$\fb_{U,1}$};
\node[choose](FU2)[below left=20pt and 4pt of U]{$\fb_{U,2}$};
\node[choose, fill=blue!20](PU1)[below right=80pt and 30pt of U]{$\pb_{U,1}$};

\foreach \i in {2,...,4}
{
	\pgfmathtruncatemacro{\ip}{\i-1};
	\node[choose, fill=blue!20] (PU\i) [left=6pt of PU\ip]{$\pb_{U,\i}$};
}

\foreach \i in {1,...,4}
{
	\foreach \j in {1,2}
	{
		\draw[line width=1pt, gray, opacity=0.25] (PU\i) -- (FU\j);
	}
}

\end{scope}

\begin{scope}[local bounding box=rbb]

\node (V) [right=180pt of U]{$V$};
\node[choose](FV1)[below left=20pt and 4pt of V]{$\fb_{V,1}$};
\node[choose](FV2)[below right=20pt and 4pt of V]{$\fb_{V,2}$};
\node[choose, fill=blue!20](PV1)[below left=80pt and 30pt of V]{$\pb_{V,1}$};

\foreach \i in {2,...,4}
{
	\pgfmathtruncatemacro{\ip}{\i-1};
	\node[choose, fill=blue!20] (PV\i) [right=6pt of PV\ip]{$\pb_{V,\i}$};
}

\foreach \i in {1,...,4}
{
	\foreach \j in {1,2}
	{
		\draw[line width=1pt, gray, opacity=0.25] (PV\i) -- (FV\j);
	}
}

\end{scope}

\begin{scope}[on background layer]
\draw[layer, fill=yellow!25] ($(lbb.south west) - (10pt,10pt)$) rectangle ($(lbb.north east) + (10pt,10pt)$);
\draw[layer, fill=yellow!25] ($(rbb.south west) - (10pt,10pt)$) rectangle ($(rbb.north east) + (10pt,10pt)$);
\end{scope}

\foreach \j in {1,2}
{
	\foreach \k in {1,2}
	{
		\pgfmathtruncatemacro{\l}{\j*\k*15};
		\draw[line width=1pt, black] (FU\j) .. controls ($(FU\j)!0.5!(FV\k) + (0,+\l pt)$) .. (FV\k);
	}
}

\end{tikzpicture}
	\caption{An illustration of our lower bound instances for maximal independent set with parameters $\halff_r = 2$ and $\halfp_r = 4$.
	The bottom vertices (blue) are principal blocks, while top vertices (gray) are fooling blocks.
	The heavy (solid black) edges fully connect fooling vertices from two ``half instances'' (yellow boxes).
	Note that these are the only edges across two ``half instances''.
	To find a maximal independent set in this graph, one needs to find maximal independent sets in all principal instances of at least one of ``half instances''.
	}\label{fig:mis}
\end{figure}

\begin{Distribution}\label{dist:mis-half-r}
	The ``half distribution'' $\misHalf[r]$ over graphs with vertex set $V$ ($r \ge 1$).

	\textbf{Parameters:} bandwidth $k$, number of \textbf{fooling blocks} $\halff_r = k^6 \cdot n_{r-1}^3$, number of \textbf{principal blocks} $\halfp_r = k^6 \cdot n_{r-1}^3 \cdot \halff_r$, number of vertices $\halfn_r = (n_{r-1}-1) \cdot \halff_r + n_{r-1} \cdot \halfp_r$, and vertex set $V$ with $|V| = \halfn_r$.
	\begin{enumerate}
		\item Partition $V$ into disjoint sets of vertices $\pb_1,\ldots,\pb_{\halfp_r},\fb_1,\ldots,\fb_{\halff_r}$ such that $\forall i \in [\halfp_r]: \; |\pb_i| = n_{r-1}$ and $\forall j \in [\halff_r]: \; |\fb_j| = n_{r-1}-1$. Define $\pb[V] := \bigcup_{i \in [\halfp_r]} \pb_i$ and $\fb[V] := \bigcup_{j \in [\halff_r]} \fb_j$.
		\item For $i \in [\halfp_r]$, sample an independent instance of $\misHard[r-1]$ on $\pb_i$.
		\item For $u \in \pb[V]$ and $j \in [\halff_r]$, sample an independent instance of $\misHard[r-1]$ on $\fb_j \cup \set{u}$ and only keep the edges adjacent to $u$ (dropping all the edges between vertices in $\fb_j$).
		\item Return the graph $G$ sampled above.
	\end{enumerate}
\end{Distribution}

\begin{Distribution}\label{dist:mis-hard-r}
	The hard distribution $\misHard[r]$ for $r$-round protocols computing a maximal independent set ($r \ge 1$).

	\textbf{Parameters:} bandwidth $k$, number of \textbf{fooling blocks} $f_r = 2\halff_r$, number of \textbf{principal blocks} $p_r = 2\halfp_r$, number of vertices $n_r = 2\halfn_r$.
	\begin{enumerate}
		\item Let $U$ and $V$ be two disjoint sets of vertices, each of size $\halfn_r$. Sample two independent instances of $\misHalf[r]$ on $U$ and $V$.
		\item For $u \in \fb[U]$ and $v \in \fb[V]$, add an edge $(u,v)$.
		\item Let $G'$ be the graph sampled above. Sample a uniformly random permutation $\perm$ over $U \cup V$ and return $G = \perm(G')$.
	\end{enumerate}
\end{Distribution}

\begin{remark}\label{rem:mis-dist}
	A few remarks are in order.
	\begin{enumerate}
		\item In the construction of the ``half distribution'' $\misHalf[r]$, we call the sets of vertices $\pb_1,\ldots,\pb_{\halfp_r}$ the \emph{principal blocks}, and the sets of vertices $\fb_1,\ldots,\fb_{\halff_r}$ the \emph{fooling blocks}. All vertices in $\pb[V]$ and $\fb[V]$ are the \emph{principal vertices} and the \emph{fooling vertices}, respectively.
		\item With a slight abuse of notation, we write $\perm(\pb_1),\ldots,\perm(\pb_{p_r})$ to denote all $p_r$ principal blocks of $\perm(U \cup V)$, and similarly $\perm(\fb_1),\ldots,\perm(\fb_{f_r})$ for all fooling blocks, in the construction of the hard distribution $\misHard[r]$.
		\item It is not hard to see that $n_r \le k^{\const^{r+1}}$ for $r \ge 0$. Indeed, $n_0 = 2k \le k^{\const}$ and by induction, the number of fooling blocks is $\halff_r \le k^6 \cdot k^{3 \cdot \const^r} \le k^{9 \cdot \const^r}$, the number of principal blocks is $\halfp_r \le k^6 \cdot k^{3 \cdot \const^r} \cdot \halff_r \le k^{18 \cdot \const^r}$, and thus $n_r \le 2 \cdot 2 \cdot k^{\const^r} \cdot \halfp_r \le k^{\const^{r+1}}$ for $r \ge 1$. Throughout the paper we assume the bandwidth parameter $k$ is at least some sufficiently large constant.
	\end{enumerate}
\end{remark}

One important property about $\misHard[r]$, which justifies our use of two ``half instances'', is that any valid maximal independent set for $G$ must also be maximal for the induced subgraph on either $\perm(\pb[U])$ or $\perm(\pb[V])$.
The implication is that solving a hard instance drawn from $\misHard[r]$ requires to solve at least one of the ``half instances'' drawn from $\misHalf[r]$.
Formally, we have the following claim.

\begin{claim}\label{clm:mis-solve-half}
	Let $\Gamma$ be any valid maximal independent set for a graph $G$ drawn from $\misHard[r]$.
	Then at least one of the following must hold:
	\begin{enumerate}
		\item $\Gamma \cap \perm(\pb[U])$ is a valid maximal independent set for the induced subgraph on $\perm(\pb[U])$.
		\item $\Gamma \cap \perm(\pb[V])$ is a valid maximal independent set for the induced subgraph on $\perm(\pb[V])$.
	\end{enumerate}
\end{claim}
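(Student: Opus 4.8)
The plan is to argue by contradiction: suppose $\Gamma$ is a valid MIS for $G = \perm(G')$, but neither (1) nor (2) holds. Since a permutation is just a relabeling of vertices, it suffices to work with $G'$ and the set $\Gamma' := \perm^{-1}(\Gamma)$, showing that if $\Gamma' \cap \pb[U]$ fails to be an MIS of the induced subgraph $G'[\pb[U]]$ and $\Gamma' \cap \pb[V]$ fails to be an MIS of $G'[\pb[V]]$, then $\Gamma'$ cannot be an MIS of $G'$. The key structural fact about $G'$ is that the only edges leaving $\pb[U] \cup \fb[U]$ go between $\fb[U]$ and $\fb[V]$, and moreover $\fb[U]$ and $\fb[V]$ are \emph{completely} joined (every $u \in \fb[U]$ is adjacent to every $v \in \fb[V]$); there are no edges at all between $\pb[U]$ and anything in $V$, nor between $\pb[V]$ and anything in $U$. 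So $\pb[U]$ is "insulated" from the $V$-side except through the fooling vertices, and symmetrically for $\pb[V]$.

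First I would observe that $\Gamma'$ cannot contain a vertex from both $\fb[U]$ and $\fb[V]$: any such pair is an edge of $G'$, contradicting independence of $\Gamma'$. Hence at least one of $\Gamma' \cap \fb[U]$, $\Gamma' \cap \fb[V]$ is empty; by symmetry say $\Gamma' \cap \fb[U] = \emptyset$. Now I claim $\Gamma' \cap \pb[U]$ is an MIS of $G'[\pb[U]]$, which is the desired contradiction. Independence is immediate since $\Gamma'$ is independent in $G'$ and $G'[\pb[U]]$ is a subgraph. For maximality, take any $w \in \pb[U] \setminus \Gamma'$. Since $\Gamma'$ is a maximal independent set of $G'$, either $w \in \Gamma'$ (excluded) or $w$ has a neighbor in $\Gamma'$ within $G'$. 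But every neighbor of $w$ in $G'$ lies in $\pb[U] \cup \fb[U]$ (the edges incident to a principal vertex of $U$ are the $\misHard[r-1]$-edges inside $\pb_i$ and the edges to fooling blocks $\fb[U]$ — see Steps 2–3 of \Cref{dist:mis-half-r} — and Step 2 of \Cref{dist:mis-hard-r} adds no new edges incident to principal vertices). Since $\Gamma' \cap \fb[U] = \emptyset$, that neighbor must be in $\Gamma' \cap \pb[U]$. Thus $w$ has a neighbor in $\Gamma' \cap \pb[U]$ inside $G'[\pb[U]]$, proving maximality. Translating back through $\perm$ gives that $\Gamma \cap \perm(\pb[U])$ is an MIS of the induced subgraph on $\perm(\pb[U])$, so (1) holds — contradiction. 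The symmetric case $\Gamma' \cap \fb[V] = \emptyset$ gives (2).

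The only mildly delicate point — and the step I'd be most careful about — is the claim that \emph{every} edge incident to a principal vertex of $U$ stays inside $\pb[U] \cup \fb[U]$; this is exactly where the design of the distribution (fooling blocks participating in fooling instances, principal blocks in both, and the cross edges of \Cref{dist:mis-hard-r} being placed only between $\fb[U]$ and $\fb[V]$) is used, so I would spell it out explicitly by walking through Steps 2–3 of \Cref{dist:mis-half-r} and Step 2 of \Cref{dist:mis-hard-r}. Everything else is a routine unwinding of the definitions of independence and maximality, plus the trivial observation that $\fb[U]$ and $\fb[V]$ cannot both meet an independent set because they are completely joined. No information-theoretic machinery is needed here; this claim is purely combinatorial.
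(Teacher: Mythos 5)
Your proposal is correct and follows essentially the same route as the paper's proof: the two key observations in both are that $\fb[U]$ and $\fb[V]$ are completely joined (so an independent set must avoid at least one of them) and that a principal vertex's neighborhood is confined to its own half, so avoiding one side's fooling set forces maximality of $\Gamma$ restricted to that side's principal blocks. Your contradiction wrapper and the slightly reorganized case analysis (casing on which fooling intersection is empty rather than on whether $\Gamma$ contains a fooling vertex) are cosmetic differences only.
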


\begin{proof}
	Without loss of generality we assume $\perm$ is simply the identity permutation throughout the proof.
	Suppose for now that $\Gamma$ contains one fooling vertex $f \in \fb[U]$.
	Note that our construction in~\Cref{dist:mis-hard-r} fully connects $\fb[U]$ to $\fb[V]$ so none of $\fb[V]$ is contained in $\Gamma$.
	Furthermore, those are the only edges between the two ``half instances'' on $U$ and $V$.
	Altogether, it shows $\pb[V]$ has no neighbor chosen by $\Gamma$.
	Since $\Gamma$ is a valid maximal independent set for $G$, its restriction to $\pb[V]$, i.e. $\Gamma \cap \pb[V]$, must be a valid maximal independent set for the induced subgraph on $\pb[V]$.

	The case is symmetric when $\Gamma$ contains one fooling vertex $f \in \fb[V]$.
	It is also not hard to see that both statements in the claim must hold if none of the fooling vertices is contained in $\Gamma$.
	This concludes the proof.
\end{proof}

Note that our construction in~\Cref{dist:mis-hard-r} has no edge between principal blocks, so~\Cref{clm:mis-solve-half} further implies that solving an $r$-round instance requires to solve at least half of the principal $(r-1)$-round instances.

\clearpage


\section{The Lower Bound for Maximal Independent Set}\label{sec:mis-proof}

We prove the following theorem in this section.
\Cref{thm:mis} is a straightforward corollary by an averaging argument, namely the easy direction of Yao's minimax principle~\cite{Yao77}.
Note that by the third statement of~\Cref{rem:mis-dist}, $n_r \le k^{\const^{r+1}}$ so we know $k \ge n_r^{1/\const^{r+1}}$.

\begin{theorem}\label{thm:mis-dist-lb}
	For $r = o(\log k)$, any $r$-round protocol for computing a maximal independent set that communicates at most $k$ bits per vertex in every round can only succeed with probability less than $0.1$ over $\misHard[r]$.
\end{theorem}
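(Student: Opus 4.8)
The plan is to prove \Cref{thm:mis-dist-lb} by induction on $r$, using the recursive structure of $\misHard[r]$ together with the round-elimination/partial-input-embedding argument sketched in \Cref{sec:overview}. The base case $r=0$ is exactly \Cref{lem:mis-base-case}: a $0$-round protocol succeeds with probability at most $2^{-k}$, which is well below $0.1$ for $k$ a large constant. For the inductive step, assume the statement for $r-1$, and suppose toward contradiction that $\prot_r$ is an $r$-round, $k$-bit-per-vertex protocol succeeding with probability at least $0.1$ over $\misHard[r]$. By \Cref{clm:mis-solve-half} and the remark following it, a valid MIS for $G\sim\misHard[r]$ restricts to a valid MIS on $\perm(\pb[U])$ or on $\perm(\pb[V])$, and since there are no edges between principal blocks, this forces $\prot_r$ to solve at least half of the $p_r = 2\halfp_r$ embedded $(r-1)$-round principal instances on one designated ``half.'' Averaging, $\prot_r$ solves a uniformly random principal instance $\pb_i$ with probability bounded below by a constant (roughly $0.1/4$ after accounting for the half-split and the fraction of principal blocks that must be solved). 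Fixing the half to be $U$, this identifies a random index $i\in[\halfp_r]$ such that $\prot_r$ solves $\misHard[r-1]$ embedded on $\pb_i$ with constant probability.

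Next I would build the $(r-1)$-round protocol $\sigma$ for $\misHard[r-1]$ by the partial-input embedding of \Cref{sec:idea3}. Given an $(r-1)$-round instance $\Insstar$ on $n_{r-1}$ vertices, the players of $\sigma$ publicly sample the first-round messages $M^{(1)}=(M^{(1)}_P,M^{(1)}_F)$ of all players of $\prot_r$ on a (not-yet-fully-specified) instance $\Ins\sim\misHard[r]$, pick a uniform $i\in[\halfp_r]$, set $\Ins_i=\Insstar$, and privately sample the fooling-instance edges $\Jns_{i,*}(v)$ incident to each $v\in\pb_i$ conditioned only on $\Ins_i(v)$ and $M^{(1)}_P$. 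For each later round $t>1$ the players of $\sigma$ first broadcast the genuine $\pb_i$-messages $M^{(t)}_{P,i}$ computed from $M^{(<t)}$ and their sampled inputs, and then publicly sample the remaining messages $M^{(t)}_{-i}=(M^{(t)}_{P,-i},M^{(t)}_F)$ conditioned on $M^{(<t)},M^{(t)}_{P,i}$ — the non-simultaneous simulation. After the last round, the referee of $\prot_r$ is applied and its output restricted to $\pb_i$ gives a candidate MIS for $\Insstar$. The per-vertex communication of $\sigma$ is at most that of $\prot_r$, i.e. $k$ bits, and $\sigma$ uses $r-1$ rounds, so by the inductive hypothesis (valid since $r-1=o(\log k)$ still) $\sigma$ succeeds with probability less than $0.1$ — contradicting the constant success probability inherited from $\prot_r$, provided the simulated distribution is $o(1)$-close in total variation to the true distribution of $\prot_r$ on $\misHard[r]$.

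The main obstacle, and the technical heart of the proof, is exactly this total-variation closeness, which must be established round by round by an information-theoretic hybrid argument. For the first round one bounds $\tvd{\rIns_i}{(\rIns_i\mid\rM^{(1)}_P,\rM^{(1)}_F)}^2$ by Pinsker and the chain rule into $\mi{\rIns_i}{\rM^{(1)}_P}+\mi{\rIns_i}{\rM^{(1)}_F\mid\rM^{(1)}_P}$; the first term is $o(1)$ because principal blocks are marginally oblivious to which of their incident instances is principal, and the second is bounded \emph{on average over $i$} by $\frac1{\halfp_r}\mi{\rIns_1,\dots,\rIns_{\halfp_r}}{\rM^{(1)}_F\mid\rM^{(1)}_P}\le o(1)$, since all $\halff_r$ fooling blocks of size $<n_{r-1}$ send $k$-bit messages while $\halfp_r=k^6 n_{r-1}^3\halff_r\gg \halff_r\cdot n_{r-1}\cdot k$. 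For round $t>1$ the corresponding quantity is $\mi{\rM^{(t)}_{-i}}{\rJns_{i,*},\rIns_i\mid\rM^{(t)}_{P,i},\rM^{(<t)}}$, which by \Cref{eq:chain2} splits into a $\pb_{-i}$-term and an $\fb$-term. The $\fb$-term is handled as in the first round by averaging over $i$. The $\pb_{-i}$-term is the delicate one: it is \emph{not} zero because fooling-block messages from earlier rounds correlate the originally independent inputs $\rJns_{i,*},\rIns_i$ and $\rJns_{-i,*},\rIns_{-i}$ even conditioned only on $\rM^{(<t)}_P$. Following \Cref{sec:idea4}, I would rewrite it via chain rule (\Cref{eq:chain3}) as the total information the principal messages reveal about $\rJns_{i,*},\rIns_i$ minus what $\pb_i$'s own messages reveal, then bound the first piece on average by $o(1)+\frac1{\halfp_r}\mi{\rM^{(\le t)}_P}{\rJns,\rIns\mid\rM^{(<t)}_F}$ and the subtracted piece from below by $\Exp_i[\mi{\rM^{(\le t)}_P}{\rJns_{i,*},\rIns_i\mid\rM^{(<t)}_F}]$, finally closing the gap with the superadditivity inequality $\mi{\rM^{(\le t)}_P}{\rJns,\rIns\mid\rM^{(<t)}_F}\le\sum_{i}\mi{\rM^{(\le t)}_P}{\rJns_{i,*},\rIns_i\mid\rM^{(<t)}_F}$ — which holds because, once the fooling-block messages are fixed, the principal blocks cannot manufacture cross-instance correlation on their own. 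Summing the $o(1)$ per-round errors over the $r=o(\log k)$ rounds, and noting each recursion level shrinks the instance size by a factor $\approx n_{r-1}$ so the parameters stay in the valid regime, keeps the cumulative total-variation error $o(1)$, which completes the contradiction and hence the induction.
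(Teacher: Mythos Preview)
Your technical plan for the simulation and the information-theoretic bounds is essentially correct and matches the paper's approach: the paper packages the embedding as a quantitative round-elimination lemma (\Cref{lem:mis-round-elim}), proved via \Cref{lem:mis-sample-1,lem:mis-product-dist,lem:mis-sample-r} and \Cref{clm:mis-sum-info}, which are exactly the steps you outline (first-round marginal indistinguishability, the product-decomposition of $\rT_i$, the averaged bound on $\mi{\rM^{(t)}_{-i}}{\rG_i\mid\rM^{(t)}_{P,i},\rM^{(<t)}}$, and the superadditivity inequality after conditioning on $\rM^{(<t)}_F$).

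However, your top-level induction does not close. You assume for $r-1$ that every $(r-1)$-round protocol succeeds with probability $<0.1$, and from a hypothetical $r$-round $\prot_r$ with success $\ge 0.1$ you construct $\sigma$ with success ``roughly $0.1/4$'' (the paper's lemma gives $\delta/2-1/n_{r-1}\approx 0.05$). But $0.05<0.1$, so there is \emph{no contradiction} with the inductive hypothesis. The factor-$1/2$ loss per round is unavoidable here: \Cref{clm:mis-solve-half} only guarantees that \emph{one} of the two halves has all its principal instances solved, so over a uniformly random principal block you only get $\delta/2$.

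The paper avoids this by \emph{not} doing induction with a fixed threshold. It states \Cref{lem:mis-round-elim} quantitatively (success $\delta$ in $r$ rounds $\Rightarrow$ success $\delta/2 - 1/n_{r-1}$ in $r-1$ rounds), applies it $r$ times starting from $\delta=0.1$, and obtains a $0$-round protocol with success probability at least
\[
\frac{0.1}{2^r}-\sum_{t\in[r]}\frac{1}{2^{t-1}\,n_{t-1}}\ \ge\ \frac{0.1}{2^r}-\frac{2}{n_0}\ =\ k^{-o(1)},
\]
using $r=o(\log k)$. This is then compared directly against the $2^{-k}$ bound of \Cref{lem:mis-base-case} to get the contradiction. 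You can repair your argument either by this unrolling or, equivalently, by strengthening the inductive hypothesis to a round-dependent threshold.
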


Our proof to~\Cref{thm:mis-dist-lb} for $r$-round protocols in general is by repeatedly applying the following round elimination lemma.

\begin{lemma}[Round Elimination]\label{lem:mis-round-elim}
	For $r = o(\log k)$ and $\delta \in [0,1]$,
	if there exists an $r$-round protocol for computing a maximal independent set that communicates at most $k$ bits per vertex in every round and succeeds with probability $\delta$ over $\misHard[r]$,
	then there also exists an $(r-1)$-round protocol for computing a maximal independent set that communicates at most $k$ bits per vertex in every round and succeeds with probability $\delta/2 - 1/n_{r-1}$ over $\misHard[r-1]$.
\end{lemma}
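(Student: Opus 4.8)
The plan is to implement the round-elimination argument sketched in Section~\ref{sec:overview}, with the hard distributions $\misHard[r]$ and $\misHalf[r]$ exactly as constructed. Given an $r$-round protocol $\prot_r$ that succeeds with probability $\delta$ over $\misHard[r]$, I first use \Cref{clm:mis-solve-half} together with the random permutation $\perm$ in \Cref{dist:mis-hard-r}: since a valid MIS of $G\sim\misHard[r]$ restricts to a valid MIS of one of the two halves, and the two halves are symmetric, $\prot_r$ must solve a \emph{fixed} one of the two $\misHalf[r]$-instances (say the one on $U$) with probability at least $\delta/2$. Moreover, since there are no edges between principal blocks inside a half instance, solving the half instance forces $\prot_r$ to produce a valid MIS on $\perm(\pb_i)$ for at least half of the $\halfp_r$ principal blocks; hence for a uniformly random $i\in[\halfp_r]$, $\prot_r$ outputs a valid MIS on $\pb_i$ (which carries an independent copy of $\misHard[r-1]$) with probability at least $\delta/2 - \Pr[i \text{ is bad}] \ge \delta/2 - 1/2$... — more carefully, an averaging argument gives that conditioned on success (prob.\ $\ge\delta/2$), at least half the blocks are solved, so a random block is solved with probability $\ge \delta/4$; I will need to track constants so the final bound is $\delta/2 - 1/n_{r-1}$, which suggests the right bookkeeping is: with probability $\delta/2$ the half is solved and then \emph{all but possibly none} — actually the construction forces all principal blocks in a solved half to be solved — so a random principal block is solved with probability $\ge \delta/2$, and the $1/n_{r-1}$ slack absorbs the distributional error of the simulation below.

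Next I build the $(r-1)$-round protocol $\sigma$ for $\misHard[r-1]$ by the partial-input embedding of Section~\ref{sec:idea3}. On input $\Insstar \sim \misHard[r-1]$ on $n_{r-1}$ vertices, the $n_{r-1}$ players of $\sigma$ use public randomness to sample a uniform $i\in[\halfp_r]$ and to sample the entire first-round message $M^{(1)}=(M^{(1)}_P, M^{(1)}_F)$ of $\prot_r$ (on a half instance $\misHalf[r]$) from its correct marginal; they plant $\Insstar$ as $\Ins_i$ on $\pb_i$, and each player $v\in\pb_i$ privately samples its incident fooling edges $\Jns_{i,*}(v)$ conditioned only on $\Ins_i(v)$ and $M^{(1)}_P$. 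For each subsequent round $t>1$, the players of $\sigma$ first \emph{communicate} the true messages $M^{(t)}_{P,i}$ of block $\pb_i$ (computable from their planted input and the blackboard), and then use public randomness to sample the remaining messages $M^{(t)}_{-i}=(M^{(t)}_{P,-i},M^{(t)}_F)$ conditioned on $M^{(<t)},M^{(t)}_{P,i}$ (the non-simultaneous simulation). After $r$ rounds they have a full transcript of $\prot_r$ and run the referee restricted to $\perm(\pb_i)$. I must verify $\sigma$ uses $\le k$ bits per vertex (immediate: players of $\sigma$ only send messages of $\prot_r$), uses $r-1$ rounds (the first round of $\prot_r$ is entirely public, rounds $2,\dots,r$ of $\prot_r$ cost rounds $1,\dots,r-1$ of $\sigma$), and succeeds with the claimed probability.

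The analytic core is showing the joint distribution of $(\,M^{(\le r)}, \Jns_{i,*}, \Ins_i\,)$ produced by $\sigma$ is $o(1)$-close in total variation to the true distribution under $\prot_r$ when averaged over $i$, so that by \Cref{fact:tvd-small} the referee's success probability on $\pb_i$ transfers up to $o(1)$; choosing parameters so this $o(1)$ is below $1/n_{r-1}$ then gives $\delta/2 - 1/n_{r-1}$. Concretely I induct on $t$: at round $1$, I bound $\Exp_i\,\tvd{(\rJns_{i,*}\mid \rIns_i,\rM^{(1)}_P,\rM^{(1)}_F)}{(\rJns_{i,*}\mid \rIns_i,\rM^{(1)}_P)}$ via Pinsker and the chain rule, splitting $\mi{\rIns_i,\rJns_{i,*}}{\rM^{(1)}_F \mid \rM^{(1)}_P}$ and charging it against $\frac{1}{\halfp_r}\cdot\mi{\text{all principal inputs}}{\rM^{(1)}_F\mid \rM^{(1)}_P}$, which is $\le \frac{\halff_r\cdot n_{r-1}\cdot k}{\halfp_r} = o(1)$ by the choice $\halfp_r = k^6 n_{r-1}^3 \halff_r$; the principal-block contribution $\mi{\rIns_i}{\rM^{(1)}_P}$ is $o(1)$ because principal vertices cannot tell principal from fooling edges (marginal indistinguishability, inherited from how fooling instances are built from $\misHard[r-1]$). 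For the inductive step at round $t>1$, the key term is $\Exp_i\,\mi{\rM^{(t)}_{-i}}{\rJns_{i,*},\rIns_i \mid \rM^{(t)}_{P,i},\rM^{(<t)}}$; I split it by chain rule into the $M^{(t)}_{P,-i}$ part and the $M^{(t)}_F$ part. For the $M^{(t)}_{P,-i}$ part I use the telescoping identity of Section~\ref{sec:idea4}, $\mi{\rM^{(t)}_{P,-i}}{\rJns_{i,*},\rIns_i\mid \rM^{(t)}_{P,i},\rM^{(<t)}} = \mi{\rM^{(\le t)}_P}{\rJns_{i,*},\rIns_i} - \mi{\rM^{(<t)},\rM^{(t)}_{P,i}}{\rJns_{i,*},\rIns_i}$, and the superadditivity lemma $\mi{\rM^{(\le t)}_P}{\rJns,\rIns\mid \rM^{(<t)}_F} \le \sum_i \mi{\rM^{(\le t)}_P}{\rJns_{i,*},\rIns_i\mid \rM^{(<t)}_F}$ (proved by a careful per-message argument: conditioned on $\rM^{(<t)}_F$ the principal blocks' messages cannot manufacture cross-block correlation), together with a "fooling blocks leak little on average" bound to strip the $\rM^{(<t)}_F$ conditioning; for the $M^{(t)}_F$ part the argument is the same $1/\halfp_r$-averaging as in round $1$. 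Summing the per-round $o(1)$ losses over $r = o(\log k)$ rounds keeps the total below $1/n_{r-1}$ since each loss is polynomially small in $k$ while $r$ is only $o(\log k)$. The main obstacle, as the overview flags, is precisely the superadditivity step: the independence $\rJns_{i,*},\rIns_i \perp \rJns_{-i,*},\rIns_{-i}$ holds unconditionally but is destroyed once we condition on fooling-block messages (and, transitively, on later principal messages), so the clean "rectangle property" argument of \cite{AlonNRW15} does not apply and I must instead do a direct induction-on-messages computation showing that, conditioned on $\rM^{(<t)}_F$, each new principal message $M^{(t)}_{P,j}$ can only increase mutual information with $\rJns_{j,*},\rIns_j$ and not with other blocks' inputs — and it is here that the non-simultaneous ordering (reveal $M^{(t)}_{P,i}$ before sampling the rest) is essential, so that the correlation $M^{(t)}_{P,-i}$ has with $\Jns_{i,*},\Ins_i$ is already accounted for by $M^{(t)}_{P,i}$.
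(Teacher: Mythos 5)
Your proposal follows essentially the same route as the paper's proof: the same partial-input embedding with non-simultaneous simulation (reveal $M^{(t)}_{P,i}$ first, then publicly sample $M^{(t)}_{P,-i}$ and $M^{(t)}_F$ conditioned on it), the same first-round bound via marginal indistinguishability of the principal block's $\halff_r+1$ instances, the same averaged information bounds for later rounds via the telescoping identity and the superadditivity claim conditioned on fooling-block messages, and the same final accounting ($\delta/2$ from \Cref{clm:mis-solve-half}, a TVD transfer via \Cref{fact:tvd-small}, and picking the best index $i^*$). The only loose ends are bookkeeping: your first paragraph wavers between $\delta/4$ and $\delta/2$ before landing on the correct $\delta/2$, and you should state explicitly the product-decomposition lemma (the paper's \Cref{lem:mis-product-dist}) that justifies why per-vertex \emph{private} sampling of the fooling edges, each conditioned only on that vertex's own principal edges and the block's first-round message, reproduces the correct joint conditional distribution of $\rT_i$.
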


Before proving~\Cref{lem:mis-round-elim}, which is the main part of this section, we first show it easily implies~\Cref{thm:mis-dist-lb}.

\begin{proof}[Proof of~\Cref{thm:mis-dist-lb}]
	Suppose for the purpose of contradiction that there exists an $r$-round protocol that communicates at most $k$ bits per vertex in every round and that has success probability $0.1$ over $\misHard[r]$.
	Applying~\Cref{lem:mis-round-elim} for $r$ times, we obtain a $0$-round protocol having success probability
	\begin{align*}
		\frac{0.1}{2^r} - \sum_{t \in [r]} \frac{1}{2^{t-1} \cdot n_{t-1}}
		& \ge \frac{0.1}{2^r} - \frac{1}{n_0} \cdot \sum_{t \in [r]} \frac{1}{2^{t-1}} \tag{as $n_{t-1}$ is increasing}\\
		& \ge \frac{0.1}{2^r} - \frac{2}{n_0}\\
		& = \frac{1}{k^{o(1)}},
	\end{align*}
	over $\misHard[0]$, where the last step follows from the assumption $r = o(\log k)$.
	Recall that $n_0 = 2k$ so the second term above is $\Theta(1/k)$ and can be ignored.
	However, the existence of such a $0$-round protocol contradicts the lower bound of~\Cref{lem:mis-base-case}.
	This concludes the proof of the theorem.
\end{proof}

We prove~\Cref{lem:mis-round-elim} in the rest of this section.
To this end, fix any $r$-round protocol $\pi$ on $n_r$ vertices that communicates at most $k$ bits per vertex in every round and succeeds with probability $\delta$ over $\misHard[r]$.
By an averaging argument, we may assume without loss of generality $\pi$ is deterministic.
Before proceeding to the actual proof, let us first define the following random variables with respect to $\pi$ when its input is drawn from $\misHard[r]$.

\begin{itemize}
	\item $\rPerm$: the random permutation $\perm$ over $n_r$ vertices;
	\item $\rB_i$: the edges within the $i$-th principal block $\rPerm(\pb_i)$ for $i \in [p_r]$;
	\item $\rT_i$: the edges between the $i$-th principal block $\rPerm(\pb_i)$ and all fooling vertices $\rPerm(\fb[U \cup V])$ for $i \in [p_r]$;
	\item $\rG_i := (\rB_i, \rT_i)$: all edges incident to the $i$-th principal block $\rPerm(\pb_i)$ for $i \in [p_r]$ (there is no edge between principal blocks by our construction in~\Cref{dist:mis-hard-r});
	\item $\rG := (\rG_1,\ldots,\rG_{p_r})$: the set of all sampled edges (the edges between $\rPerm(\fb[U])$ and $\rPerm(\fb[V])$ are always present and thus not included here; there is no other edge between fooling blocks by our construction in~\Cref{dist:mis-hard-r});
	\item $\rM^{(t)}_{\sfP,i}$: the messages sent by the $i$-th principal block $\rPerm(\pb_i)$ in the $t$-th round for $i \in [p_r]$ and $t \in [r]$;
	\item $\rM^{(t)}_\sfP := (\rM^{(t)}_{\sfP,1},\ldots,\rM^{(t)}_{\sfP,p_r})$: the messages sent by all principal blocks in the $t$-th round for $t \in [r]$;
	\item $\rM^{(t)}_\sfF$: the messages sent by all fooling blocks in the $t$-th round for $t \in [r]$;
	\item $\rM^{(t)} := (\rM^{(t)}_\sfP, \rM^{(t)}_\sfF)$: all messages sent in the $t$-th round for $t \in [r]$.
\end{itemize}

Note that $\rM^{(< t)}$ is exactly the content of the blackboard at the beginning of the $t$-th round.
For any vertex $u \in \rPerm(\pb_i)$, we further define $\rB_i(u)$ as the subset of $\rB_i$ representing only edges incident to $u$.
$\rT_i(u), \rG_i(u)$ are similarly defined.
Let $\rM^{(t)}_{\sfP,i}(u)$ be the message sent by $u \in \rPerm(\pb_i)$ in the $t$-th round.
Fix any $\rPerm$, $\rM^{(t)}_{\sfP,i}$ is a function of $\rM^{(<t)}$ and $\rG_i$ while $\rM^{(t)}_{\sfP,i}(u)$ is only a function of $\rM^{(< t)}$ and $\rG_i(u)$.
After all $r$ rounds of communication, the referee has to output the solution based solely on $\rM^{(\le r)}$ since we have assumed $\pi$ to be deterministic.

\Cref{alg:mis} presents the complete simulation protocol for round elimination, formalizing our discussion in~\Cref{sec:overview}.
At a high level, we construct the following $(r-1)$-round (randomized) protocols $\tau_1,\ldots,\tau_{p_r}$ on $n_{r-1}$ vertices that are essentially simulating $\pi$ on $n_r$ vertices.
At the end of the proof, we will show there exists some index $i^* \in [p_r]$ such that $\tau_{i^*}$ simulates $\pi$ sufficiently well and is able to solve instances of $\misHard[r-1]$ with the desired probability.

\begin{Algorithm}\label{alg:mis}
	The $(r-1)$-round protocol $\tau_i$, for any \emph{fixed} $i \in [p_r]$, simulating $\pi$ for computing a maximal independent set.

	\begin{enumerate}
		\item Sample $\rPerm$ uniformly at random using public randomness. Identify the vertices of $\tau_i$ with $\rPerm(\pb_i)$ in $\pi$, and with a slight abuse of notation, any vertex $u$ of $\tau_i$ is used interchangeably with its counterpart in $\rPerm(\pb_i)$\footnote{At a high level, the vertices of $\tau_i$ are going to play the role of the $i$-th principal block in $\pi$ and jointly simulate all other vertices of $\pi$ using public randomness. That is, they proceed with $\pi$ as if they were $\rPerm(\pb_i)$.}.
			In addition, each vertex $u$ of $\tau_i$ identifies its input given in $\tau_i$ with $\rB_i(u)$ in $\pi$.
		\item Do the following \emph{without} any communication:
			\begin{enumerate}
				\item Sample $\rM^{(1)}_{\sfP,i}$, conditioned on $\rPerm$, using public randomness.
				\item For each vertex $u$ of $\tau_i$, independently sample $\rT_i(u)$, conditioned on $\rB_i(u),\rM^{(1)}_{\sfP,i},\rPerm$, using \emph{private} randomness.
				\item Sample $\rM^{(1)}_{\sfP,-i}$, conditioned on $\rM^{(1)}_{\sfP,i},\rPerm$, using public randomness.
				\item Sample $\rM^{(1)}_\sfF$, conditioned on $\rM^{(1)}_\sfP,\rPerm$, using public randomness.
			\end{enumerate}
		\item For every $t \in [2,r]$, do the following with one round of communication:
			\begin{enumerate}
				\item For each vertex $u$ of $\tau_i$, generate and broadcast $\rM^{(t)}_{\sfP,i}(u)$ as in $\pi$, based on $\rG_i(u),\rM^{(< t)},\rPerm$.
				\item Sample $\rM^{(t)}_{\sfP,-i}$, conditioned on $\rM^{(< t)},\rM^{(t)}_{\sfP,i},\rPerm$, using public randomness.
				\item Sample $\rM^{(t)}_\sfF$, conditioned on $\rM^{(< t)},\rM^{(t)}_\sfP,\rPerm$, using public randomness.
			\end{enumerate}
		\item Let $\Gamma$ be the output of the referee of $\pi$ when given $\rM^{(\le r)}$. The referee of $\tau_i$ finally outputs $\Gamma \cap \rPerm(\pb_i)$.
	\end{enumerate}
\end{Algorithm}

As discussed in~\Cref{sec:overview}, to prove~\Cref{lem:mis-round-elim}, our goal is to find an index $i^* \in [p_r]$ such that $\tau_{i^*}$ simulates $\pi$ almost perfectly.
Concretely, it is sufficient to have the distribution of the final blackboard $\rM^{(\le r)}$ sampled by $\tau_{i^*}$ be close to the true distribution generated by $\pi$.
These two distributions would be identical if $\tau_{i^*}$ \emph{were} able to do the sampling process in~\Cref{alg:mis} such that each random variable newly sampled in any step is drawn conditioned on all previously sampled random variables.
Unfortunately, this is impossible because $\rB_{i^*}$ is the input to $\tau_{i^*}$, which is not publicly known by all vertices: each vertex is only given the edges incident to it, essentially its ``local view''.
What $\tau_{i^*}$ can actually do is to sample new random variables conditioned on all random variables previously sampled using \emph{public} randomness.
The hope is that the joint distribution of all sampled random variables is not affected by much as $\tau_{i^*}$ drops conditioning on $\rB_{i^*}$ as well as all random variables sampled using private randomness, namely $\rT_{i^*}$ in~\Cref{alg:mis}.
In fact, we will show this is true \emph{on average} over all possible $i \in [p_r]$, and thus it is sufficient to pick the best index as $i^*$.

\Cref{tab:comp} makes a detailed comparison between the sampled distribution by $\tau_i$ and the true distribution in $\pi$.
Note that $\rB_i$ is given as the input to $\tau_{i}$ and by our construction in~\Cref{dist:mis-hard-r}, it has exactly the same distribution as any principal block in $\pi$.

\begin{table}[ht]
	\centering
	\begin{tabular}{|c|c|c|}
		\hline
		\diagbox{Sampled r.v.}{Conditioning r.v.}{Distribution} & Sampled distribution by $\tau_i$ & True distribution in $\pi$\\\hline
		$\rB_i$ &  & \\\hline
		$\rM^{(1)}_{\sfP,i}$ & \rPerm & $\rB_i,\rPerm$\\\hline
		$\rT_i(u)$ & $\rB_i(u),\rM^{(1)}_{\sfP,i},\rPerm$ & \\
		$\rT_i$ & & $\rB_i,\rM^{(1)}_{\sfP,i},\rPerm$\\\hline
		$\rM^{(t)}_{\sfP,-i}$ & $\rM^{(< t)},\rM^{(t)}_{\sfP,i},\rPerm$ & $\rG_i,\rM^{(< t)},\rM^{(t)}_{\sfP,i}$,\rPerm\\\hline
		$\rM^{(t)}_\sfF$ & $\rM^{(< t)},\rM^{(t)}_\sfP,\rPerm$ & $\rG_i,\rM^{(< t)},\rM^{(t)}_\sfP,\rPerm$\\\hline
	\end{tabular}
	\caption{Sampled distribution by $\tau_i$ v.s. True distribution in $\pi$}
	\label{tab:comp}
\end{table}

\begin{remark}\label{rem:comp}
	A couple of remarks about~\Cref{alg:mis} and~\Cref{tab:comp}.
	\begin{enumerate}
		\item In~\Cref{alg:mis}, $\rT_i(u)$ is sampled independently by each vertex $u$ using private randomness. This means the sampled $\rT_i$ in fact follows a product distribution, conditioned on $\rB_i,\rM^{(1)}_{\sfP,i},\rPerm$. We will prove in~\Cref{lem:mis-sample-1} that this generates precisely the true distribution of $\rT_i$.
		\item Recall that $\pi$ is assumed to be deterministic so $\rM^{(t)}_{\sfP,i}$ is a function of $\rG_i,\rM^{(< t)},\rPerm$. More specifically, $\rM^{(t)}_{\sfP,i}(u)$ is a function of $\rG_i(u),\rM^{(< t)},\rPerm$ for each vertex $u$. $\tau_i$ indeed generates them using this approach as shown in~\Cref{alg:mis}.
	\end{enumerate}
\end{remark}

Our next step is to prove every pair of the conditional distributions are close.
At the end, we will put them together to show the final blackboards $\rM^{(\le r)}$ are also close.
The comparison between the conditional distributions is split into three parts.
\Cref{lem:mis-sample-1} proves we can indeed sample the first round message $\rM^{(1)}_{\sfP,i}$ publicly and thus eliminate the first round of communication.
The first statement of~\Cref{rem:comp} is made precise by~\Cref{lem:mis-product-dist}.
A similar conditional decomposition lemma is established in~\cite{AlonNRW15}.
\Cref{lem:mis-sample-r} formalizes the intuition of directly sampling the messages of all other blocks.

With a slight abuse of notation, we may also use $< u$ to denote all vertices $v < u$ in $\rPerm(\pb_i)$ for some $i \in [p_r]$ that can be inferred from context.
Similarly, $-u$ is used as a shorthand for $\rPerm(\pb_i) \backslash \set{u}$.

\begin{lemma}\label{lem:mis-sample-1}
	Let $\epsilon_r = r/(k^4 \cdot n_{r-1}^2)$.
	For each $i \in [p_r]$,
	\[
		\mi{\rM^{(1)}_{\sfP,i}}{\rB_i \mid \rPerm} \le \epsilon_r.
	\]
\end{lemma}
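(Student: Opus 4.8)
The quantity to bound is the information that the first-round message $\rM^{(1)}_{\sfP,i}$ of the $i$-th principal block reveals about that block's internal edges $\rB_i$, conditioned on the permutation $\rPerm$. The key structural point is that a vertex $u \in \rPerm(\pb_i)$, when it sends $\rM^{(1)}_{\sfP,i}(u)$, sees only $\rG_i(u) = (\rB_i(u), \rT_i(u))$: its $\misHard[r-1]$-edges inside $\pb_i$ plus its ``fooling'' edges to the blocks $\fb_1, \ldots, \fb_{\halff_r}$. By construction of $\misHalf[r]$ (Distribution~\ref{dist:mis-half-r}, step 3), for a fixed $u$ the $\halff_r + 1$ instances --- namely $\rB_i(u)$ together with the $\halff_r$ fooling instances on $\fb_j \cup \set{u}$ restricted to edges at $u$ --- are i.i.d.\ samples from the marginal of a single vertex under $\misHard[r-1]$. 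So from $u$'s point of view, $\rB_i(u)$ is indistinguishable from any one of its $\halff_r$ fooling-edge sets; the message $\rM^{(1)}_{\sfP,i}(u)$ cannot "know" which of the $\halff_r+1$ candidate neighbor-sets is the real principal one.

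**Main steps.** First I would fix $\rPerm$ (everything is conditioned on it) and, for each vertex $u \in \rPerm(\pb_i)$, introduce the $\halff_r+1$ candidate neighborhoods: $\rT_i^{(0)}(u) := \rB_i(u)$ and $\rT_i^{(j)}(u)$ the edges at $u$ in the $j$-th fooling instance, $j \in [\halff_r]$. These are i.i.d., and $\rM^{(1)}_{\sfP,i}(u)$ is a deterministic function of $(\rT_i^{(0)}(u), \rT_i^{(1)}(u), \ldots, \rT_i^{(\halff_r)}(u), \rM^{(<1)}, \rPerm)$ --- but $\rM^{(<1)}$ is empty, so it is a function of the $\halff_r+1$ i.i.d.\ blocks (and $\rPerm$). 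By symmetry of i.i.d.\ coordinates, $\mi{\rM^{(1)}_{\sfP,i}(u)}{\rT_i^{(0)}(u) \mid \rPerm} = \mi{\rM^{(1)}_{\sfP,i}(u)}{\rT_i^{(j)}(u) \mid \rPerm}$ for every $j$, hence this common value is at most $\frac{1}{\halff_r+1}\sum_{j=0}^{\halff_r}\mi{\rM^{(1)}_{\sfP,i}(u)}{\rT_i^{(j)}(u)\mid\rPerm} \le \frac{1}{\halff_r+1}\HH(\rM^{(1)}_{\sfP,i}(u)\mid\rPerm) \le \frac{k}{\halff_r+1}$, using that the $\rT_i^{(j)}(u)$ are independent given $\rPerm$ (so the sum of informations is at most the entropy of the tuple, which is at most the $k$-bit message length) and superadditivity of mutual information over independent variables. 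Second, I would go from a single vertex to the whole block. Since $\rM^{(1)}_{\sfP,i} = (\rM^{(1)}_{\sfP,i}(u))_{u \in \rPerm(\pb_i)}$ and the vertices' local inputs are correlated (they share the $\misHard[r-1]$ instance on $\pb_i$), I'd use the chain rule over the $n_{r-1}$ vertices of $\pb_i$:
\[
	\mi{\rM^{(1)}_{\sfP,i}}{\rB_i \mid \rPerm} = \sum_{u \in \rPerm(\pb_i)} \mi{\rM^{(1)}_{\sfP,i}}{\rB_i(u) \mid \rB_i(<u), \rPerm},
\]
and then bound each summand. Here one shows $\mi{\rM^{(1)}_{\sfP,i}}{\rB_i(u) \mid \rB_i(<u), \rPerm}$ is controlled by $\mi{\rM^{(1)}_{\sfP,i}(u)}{\rT_i^{(0)}(u)\mid\rPerm}$ plus lower-order terms: the messages of the other vertices $v \neq u$ of $\pb_i$ depend on $\rB_i(u)$ only through the shared-input structure, which is already captured by conditioning on $\rB_i(<u)$ --- more carefully, one conditions on enough of the already-revealed block structure so that, given that, $\rB_i(u)$ is independent of the inputs of the vertices after $u$, and applies the single-vertex bound. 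Combining, $\mi{\rM^{(1)}_{\sfP,i}}{\rB_i\mid\rPerm} \lesssim n_{r-1} \cdot \frac{k \cdot n_{r-1}}{\halff_r} $ (the extra $n_{r-1}$ because vertex $u$'s message is $k$ bits but we may need to account for all $n_{r-1}$ vertices' messages carrying information about $\rB_i(u)$, or a union over rounds contributing the factor $r$); plugging in $\halff_r = k^6 n_{r-1}^3$ gives a bound of order $\frac{r}{k^4 n_{r-1}^2} = \epsilon_r$.

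**Main obstacle.** The delicate point is the second step: moving from "the message of vertex $u$ alone reveals little about $u$'s own input" to "the messages of the entire block $\pb_i$ reveal little about the block's internal instance $\rB_i$." Unlike in the fully independent number-in-hand setting, the vertices of $\pb_i$ share the correlated instance $\rB_i$, so vertex $v$'s message $\rM^{(1)}_{\sfP,i}(v)$ depends on $\rB_i$ globally, not just on $\rB_i(v)$; the fooling trick localizes only the per-vertex symmetry. The right way through is to exploit that the fooling edges $\rT_i^{(j)}$ of different vertices are entirely independent of one another (each vertex has its own independent fooling instances in step 3 of Distribution~\ref{dist:mis-half-r}), so after conditioning appropriately on the "revealed so far" portion of $\rB_i$, vertex $u$'s remaining randomness is an i.i.d.\ symmetric shuffle of principal-vs-fooling that the other vertices' messages cannot see; then the single-vertex argument applies conditionally. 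Getting the conditioning set exactly right --- enough to decouple, not so much that one has already leaked $\rB_i(u)$ --- and making the chain-rule bookkeeping produce the stated $\epsilon_r = r/(k^4 n_{r-1}^2)$ rather than something weaker is where the real care lies. I would also double-check the $r$ dependence: $\epsilon_r$ carries a factor $r$, suggesting the intended statement absorbs a contribution from all rounds' worth of blackboard content, even though the first-round message $\rM^{(<1)}$ is empty --- so the bound here is likely deliberately loose by a factor $r$ to match the cumulative bound needed later in the round-elimination recursion.
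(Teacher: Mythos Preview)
Your high-level plan is right --- a per-vertex fooling bound followed by aggregation over the block --- but both pieces have real gaps as written.

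\textbf{The per-vertex symmetry step.} You write ``fix $\rPerm$'' and then claim $\mi{\rM^{(1)}_{\sfP,i}(u)}{\rT_i^{(0)}(u)\mid\rPerm} = \mi{\rM^{(1)}_{\sfP,i}(u)}{\rT_i^{(j)}(u)\mid\rPerm}$ ``by symmetry of i.i.d.\ coordinates.'' This does not follow. At a fixed permutation $\sigma$, the $\halff_r+1$ blocks carry \emph{distinct vertex labels}, and the arbitrary deterministic message function of $u$ sees those labels; it need not be symmetric in the $\halff_r+1$ instances. (Concretely: a protocol that outputs the parity of $u$'s edges to vertices with labels in some fixed range reveals information only about whichever block happens to carry those labels, so at fixed $\sigma$ the per-block informations differ.) The paper's argument is exactly to exploit the randomness of $\rPerm$ here, not to condition it away: it introduces $\rZ$ (the unordered partition of $u$'s potential neighbors into $\halff_r+1$ blocks) and $\rW$ (the rank of the principal block in that partition), shows $\rM^{(1)}_{\sfP,i}(u) \perp \rPerm \mid \rS_\rW,\rZ,\rW$ so one can drop $\rPerm$ in favor of $(\rZ,\rW)$, and then uses that $\rW$ is \emph{uniform} over $[\halff_r+1]$ given $\rZ$ (because $\rPerm$ is uniform). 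Only after that averaging does the $\tfrac{1}{\halff_r+1}$ factor appear. Your writeup fixes $\rPerm$ before invoking symmetry, which is the wrong order.

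\textbf{The aggregation step.} You apply the chain rule over the input pieces $\rB_i(u)$, leaving terms $\mi{\rM^{(1)}_{\sfP,i}}{\rB_i(u)\mid\rB_i(<u),\rPerm}$ that still involve the \emph{entire} block's message --- hence your uncertain extra $n_{r-1}$ factor and the vague ``getting the conditioning set exactly right.'' The paper instead applies the chain rule over the \emph{messages} $\rM^{(1)}_{\sfP,i}(u)$: since $\rT_i(u)\perp\rT_i(<u)\mid\rB_i,\rPerm$, the messages are conditionally independent given $\rB_i,\rPerm$, and \Cref{prop:info-decrease} gives $\mi{\rM^{(1)}_{\sfP,i}}{\rB_i\mid\rPerm}\le\sum_u\mi{\rM^{(1)}_{\sfP,i}(u)}{\rB_i\mid\rPerm}$. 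Each summand now involves a single vertex's message versus all of $\rB_i$, which is precisely what the (correctly stated) per-vertex bound handles, yielding $n_{r-1}\cdot k/\halff_r = 1/(k^5 n_{r-1}^2)\le\epsilon_r$ with no extra factor and no delicate conditioning. (You are right that the $r$ in $\epsilon_r$ is slack; the proof gives $1/(k^5 n_{r-1}^2)$.)
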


\begin{proof}
	Assume without loss of generality that $i \in [\halfp_r]$.
	That is, we only consider the principal blocks on the side of $U$ in~\Cref{dist:mis-hard-r}.
	For any $u \in \rPerm(\pb_i)$, $\rT_i(u)$ is independent of $\rT_i(< u)$ given $\rB_i,\rPerm$ by our construction in~\Cref{dist:mis-hard-r}.
	This implies $\rG_i(u)$ and $\rG_i(< u)$ are independent conditioned on $\rB_i,\rPerm$.
	Using the second statement of~\Cref{rem:comp}, we know $\rM^{(1)}_{\sfP,i}(u)$ and $\rM^{(1)}_{\sfP,i}(< u)$ are independent conditioned on $\rB_i,\rPerm$ as well by the data processing inequality~(\itfacts{data-processing}).
	Then we can get
	\begin{align*}
		\phantom{=} & \mi{\rM^{(1)}_{\sfP,i}}{\rB_i \mid \rPerm}\\
		= {} & \sum_{u \in \rPerm(\pb_i)} \mi{\rM^{(1)}_{\sfP,i}(u)}{\rB_i \mid \rM^{(1)}_{\sfP,i}(< u),\rPerm} \tag{by the chain rule of mutual information~(\itfacts{chain-rule})}\\
		\le {} & \sum_{u \in \rPerm(\pb_i)} \mi{\rM^{(1)}_{\sfP,i}(u)}{\rB_i \mid \rPerm}. \tag{as $\rM^{(1)}_{\sfP,i}(u) \perp \rM^{(1)}_{\sfP,i}(< u) \mid \rB_i,\rPerm$ and by~\Cref{prop:info-decrease}}
	\end{align*}
	Since the vertices are symmetric, it suffices to show an individual term above is upper bounded by $\epsilon_r/n_{r-1}$ as $\rPerm(\pb_i)$ contains $n_{r-1}$ vertices.
	So we fix the vertex $u$ in the following.

	One crucial observation is that $u$ is simultaneously participating in $\halff_r+1$ independent instances drawn from $\misHard[r-1]$: the principal one with $\rPerm(\pb_i) \backslash \set{u}$ and $\halff_r$ fooling ones with each of the fooling blocks $\rPerm(\fb_j)$ for $j \in [\halff_r]$.
	Collectively these $\halff_r+1$ instances constitute $\rG_i(u)$.
	Fix an ordering $\Lambda$ for subsets of vertices with size $n_{r-1} - 1$.
	Let $\rS_1,\ldots,\rS_{\halff_r+1}$ denote these $\halff_r+1$ instances in the order consistent with $\Lambda$ and $\rS_{< j} = (\rS_1,\ldots,\rS_{j-1})$ for $j \in [\halff_r+1]$.
	Note that $\rG_i(u) = (\rS_1,\ldots,\rS_{\halff_r+1})$.
	Define $\rZ$ to be the set of all these $\halff_r+1$ \emph{blocks} of vertices, i.e. $\rZ := \set{\rPerm(\pb_i) \backslash \set{u}} \cup \set{\rPerm(\fb_j) \mid j \in [\halff_r]}$.
	We emphasize that $\rZ$ records the \emph{partition} of all $u$'s possible neighbors into $\halff_r+1$ blocks, but not which one corresponds to the principal instance.
	This is important because $\rS_j$ are mutually independent conditioned on $\rZ$ whereas they are not necessarily independent conditioned only on the set of all $u$'s possible neighbors.
	Let $\rW$ be the rank of the principal block among $\rZ$ according to the order defined by $\Lambda$, so $\rS_{\rW} = \rB_i$.
	Given $\rZ$, $\rW$ is uniformly distributed over $[\halff_r+1]$ because $\rPerm$ is a uniformly random permutation.
	Intuitively, $u$ cannot distinguish between all $\halff_r+1$ instances by itself, implying that $\rM^{(1)}_{\sfP,i}(u)$ should only reveal little information about the principal instance $\rB_i$.
	Formally, we have
	\begin{align*}
		\phantom{=} & \mi{\rM^{(1)}_{\sfP,i}(u)}{\rB_i \mid \rPerm}\\
		= {} & \mi{\rM^{(1)}_{\sfP,i}(u)}{\rS_{\rW} \mid \rPerm,\rZ,\rW} \tag{as $\rZ,\rW$ are completely determined by $\rPerm$ for any fixed $i,u$}\\
		\le {} & \mi{\rM^{(1)}_{\sfP,i}(u)}{\rS_{\rW} \mid \rZ,\rW} \tag{as $\rM^{(1)}_{\sfP,i}(u) \perp \rPerm \mid \rS_{\rW},\rZ,\rW$ and by~\Cref{prop:info-decrease}}\\
		= {} &  \sum_{j \in [\halff_r+1]} \prob{\rW = j} \cdot \mi{\rM^{(1)}_{\sfP,i}(u)}{\rS_j \mid \rZ,\rW=j}\\
		= {} & \frac{1}{\halff_r+1} \cdot \sum_{j \in [\halff_r+1]} \mi{\rM^{(1)}_{\sfP,i}(u)}{\rS_j \mid \rZ},
	\end{align*}
	as the joint distribution of $(\rM^{(1)}_{\sfP,i}(u),\rS_j,\rZ)$ is independent of the event $\rW=j$.
	Continuing,
	\begin{align*}
		\phantom{=} & \mi{\rM^{(1)}_{\sfP,i}(u)}{\rB_i \mid \rPerm}\\
		\le {} & \frac{1}{\halff_r+1} \cdot \sum_{j \in [\halff_r+1]} \mi{\rM^{(1)}_{\sfP,i}(u)}{\rS_j \mid \rZ}\\
		\le {} & \frac{1}{\halff_r+1} \cdot \sum_{j \in [\halff_r+1]} \mi{\rM^{(1)}_{\sfP,i}(u)}{\rS_j \mid \rS_{< j},\rZ} \tag{as $\rS_j \perp \rS_{< j} \mid \rZ$ and by~\Cref{prop:info-increase}}\\
		= {} & \frac{1}{\halff_r+1} \cdot \mi{\rM^{(1)}_{\sfP,i}(u)}{\rG_i(u) \mid \rZ} \tag{by the chain rule of mutual information~(\itfacts{chain-rule})}\\
		\le {} & \frac{1}{\halff_r} \cdot \en{\rM^{(1)}_{\sfP,i}(u) \mid \rZ} \tag{by the definition of mutual information and non-negativity of entropy~(\itfacts{uniform})}\\
		\le {} & \frac{1}{\halff_r} \cdot \en{\rM^{(1)}_{\sfP,i}(u)} \tag{as conditioning can only reduce entropy~(\itfacts{cond-reduce})}\\
		\le {} & \frac{k}{\halff_r}. \tag{by the assumption on $\pi$'s communication and~\itfacts{uniform}}
	\end{align*}
	Plugging in $\halff_r$ as defined in~\Cref{dist:mis-hard-r}, we finally get the desired upper bound $k/\halff_r = 1/(k^5 \cdot n_{r-1}^3) \le \epsilon_r/n_{r-1}$.
	This concludes the proof by our argument at the beginning.
\end{proof}

\begin{lemma}\label{lem:mis-product-dist}
	For each $i \in [p_r]$, and fixed $\rB_i,\rM^{(1)}_{\sfP,i},\rPerm$,
	\[
		\distribution{\rT_i \mid \rB_i,\rM^{(1)}_{\sfP,i},\rPerm} \sim \bigtimes_{u \in \rPerm(\pb_i)} \distribution{\rT_i(u) \mid \rB_i(u),\rM^{(1)}_{\sfP,i},\rPerm}.
	\]
\end{lemma}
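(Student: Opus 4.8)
The plan is to read the product structure straight off the construction. Fix in the support of the joint distribution a permutation $\perm$, a value $b$ of $\rB_i$, and a value $m$ of $\rM^{(1)}_{\sfP,i}$, and write $m(u)$ for the part of $m$ broadcast by $u\in\rPerm(\pb_i)$. Two features of the construction do all the work. First, by step~$3$ of~\Cref{dist:mis-half-r}, the sets $\set{\rT_i(u)}_{u\in\rPerm(\pb_i)}$ are drawn as \emph{mutually independent} instances of $\misHard[r-1]$, and the whole collection is also jointly independent of $\rB_i$, everything conditioned on $\rPerm$. Second, by the second item of~\Cref{rem:comp} (and since the blackboard is empty before round~$1$), each $\rM^{(1)}_{\sfP,i}(u)$ is a \emph{deterministic} function of $\rG_i(u)=(\rB_i(u),\rT_i(u))$ and $\rPerm$, and $\rM^{(1)}_{\sfP,i}$ is just the concatenation $\paren{\rM^{(1)}_{\sfP,i}(u)}_{u}$.

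I would then argue in three steps. (i) Conditioned on $(\rB_i,\rPerm)$, each pair $\paren{\rT_i(u),\rM^{(1)}_{\sfP,i}(u)}$ is a deterministic function of $\rT_i(u)$ alone (the other inputs being fixed); since the $\rT_i(u)$ are mutually independent given $(\rB_i,\rPerm)$, so are these pairs, and the conditional law of the $u$-th pair depends on $\rB_i$ only through $\rB_i(u)$. Hence $\distribution{\paren{\rT_i(u),\rM^{(1)}_{\sfP,i}(u)}_u\mid\rB_i,\rPerm}\sim\bigtimes_{u}\distribution{\rT_i(u),\rM^{(1)}_{\sfP,i}(u)\mid\rB_i(u),\rPerm}$. (ii) The event $\set{\rM^{(1)}_{\sfP,i}=m}$ is the intersection over $u$ of the per-vertex events $\set{\rM^{(1)}_{\sfP,i}(u)=m(u)}$, and each of these has positive probability under the corresponding factor precisely because $(b,m,\perm)$ lies in the support and the law factors as in (i); conditioning a product distribution on such a product event gives the product of the conditioned factors, so $\distribution{\rT_i\mid\rB_i,\rM^{(1)}_{\sfP,i},\rPerm}\sim\bigtimes_{u}\distribution{\rT_i(u)\mid\rB_i(u),\rM^{(1)}_{\sfP,i}(u),\rPerm}$. (iii) Finally, for each fixed $u$, conditioned on $(\rB_i(u),\rPerm)$ the pair $\paren{\rT_i(u),\rM^{(1)}_{\sfP,i}(u)}$ is independent of $\set{\rM^{(1)}_{\sfP,i}(v)}_{v\neq u}$, because the latter is a function of $\set{(\rB_i(v),\rT_i(v))}_{v\neq u}$ and $\rPerm$, all of which are independent of $\rT_i(u)$ given $\rPerm$; hence adding $\set{\rM^{(1)}_{\sfP,i}(v)}_{v\neq u}$ to the conditioning leaves the law of $\rT_i(u)$ unchanged, i.e. $\distribution{\rT_i(u)\mid\rB_i(u),\rM^{(1)}_{\sfP,i}(u),\rPerm}\sim\distribution{\rT_i(u)\mid\rB_i(u),\rM^{(1)}_{\sfP,i},\rPerm}$. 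Combining (ii) and (iii) gives the stated decomposition.

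The proof has no conceptual hurdle; the one place to be careful is the bookkeeping in steps (ii)--(iii): one must check that every conditioning is on a positive-probability event (so that a conditioned product measure genuinely factors) and keep the single-vertex message $\rM^{(1)}_{\sfP,i}(u)$ distinct from the full first-round principal transcript $\rM^{(1)}_{\sfP,i}$ until step (iii) eliminates the difference. Both points follow directly from the per-vertex determinism of the messages (\Cref{rem:comp}) and the mutual independence of the cross-edge sets $\rT_i(\cdot)$ given $\rPerm$ built into~\Cref{dist:mis-half-r}.
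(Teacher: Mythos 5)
Your proof is correct and rests on exactly the two facts that drive the paper's proof — the $\rT_i(u)$ are mutually independent of each other and of $\rB_i$ given $\rPerm$, and each $\rM^{(1)}_{\sfP,i}(u)$ is a deterministic function of $(\rG_i(u),\rPerm)$ — so it is essentially the same argument: the paper just phrases it as showing $\mi{\rT_i(u)}{\rT_i(-u),\overline{\rB}_i(u) \mid \rB_i(u),\rM^{(1)}_{\sfP,i},\rPerm}=0$ and peels the conditioning on $\rM^{(1)}_{\sfP,i}(u)$ and $\rM^{(1)}_{\sfP,i}(-u)$ off one at a time via \Cref{prop:info-decrease}, which is the information-theoretic mirror of your steps (ii)--(iii) and sidesteps your positive-probability bookkeeping. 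No gaps.
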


\begin{proof}
	Let $\overline{\rB}_i(u)$ denote the subset of $\rB_i$ representing edges not incident to $u$\footnote{Note that $\overline{\rB}_i(u) \ne \rB_i(-u)$ since each edge $(u,v)$ appears in both $\rB_i(u)$ and $\rB_i(v)$.} for $u \in \rPerm(\pb_i)$.
	So $\rB_i = (\rB_i(u),\overline{\rB}_i(u))$.
	It suffices to show $\mi{\rT_i(u)}{\rT_i(-u),\overline{\rB}_i(u) \mid \rB_i(u),\rM^{(1)}_{\sfP,i},\rPerm} = 0$.
	Using the second statement of~\Cref{rem:comp}, we have
	\begin{align}
		\mi{\rM^{(1)}_{\sfP,i}(u)}{\rT_i(-u),\overline{\rB}_i(u) \mid \rT_i(u),\rB_i(u),\rM^{(1)}_{\sfP,i}(-u),\rPerm} = 0, \label{eqn:mis-independence-u}
	\end{align}
	because $\rM^{(1)}_{\sfP,i}(u)$ is completely determined by $\rG_i(u) = (\rB_i(u),\rT_i(u)),\rPerm$.
	Similarly, we also have
	\begin{align}
		\mi{\rM^{(1)}_{\sfP,i}(-u)}{\rT_i(u) \mid \rT_i(-u),\overline{\rB}_i(u),\rB_i(u),\rPerm} = 0, \label{eqn:mis-independence-minus-u}
	\end{align}
	because $\rM^{(1)}_{\sfP,i}(-u)$ is completely determined by $\rB_i = (\rB_i(u),\overline{\rB}_i(u)),\rT_i(-u),\rPerm$.
	Combining \Cref{eqn:mis-independence-u,eqn:mis-independence-minus-u}, we then get
	\begin{align*}
		\phantom{=} & \mi{\rT_i(u)}{\rT_i(-u),\overline{\rB}_i(u) \mid \rB_i(u),\rM^{(1)}_{\sfP,i},\rPerm}\\
		= {} & \mi{\rT_i(u)}{\rT_i(-u),\overline{\rB}_i(u) \mid \rB_i(u),\rM^{(1)}_{\sfP,i}(u),\rM^{(1)}_{\sfP,i}(-u),\rPerm}\\
		\le {} & \mi{\rT_i(u)}{\rT_i(-u),\overline{\rB}_i(u) \mid \rB_i(u),\rM^{(1)}_{\sfP,i}(-u),\rPerm} \tag{by~\Cref{eqn:mis-independence-u} and~\Cref{prop:info-decrease}}\\
		\le {} & \mi{\rT_i(u)}{\rT_i(-u),\overline{\rB}_i(u) \mid \rB_i(u),\rPerm} \tag{by~\Cref{eqn:mis-independence-minus-u} and~\Cref{prop:info-decrease}}\\
		= {} & 0,
	\end{align*}
	by our construction in~\Cref{dist:mis-hard-r}.
\end{proof}

\Cref{lem:mis-sample-1,lem:mis-product-dist} together ensure~\Cref{alg:mis} simulates the input and the first round of communication with little bias.
Building upon this, \Cref{lem:mis-sample-r} takes care of all remaining rounds.
This is accomplished using the novel idea of non-simultaneous simulation as discussed in~\Cref{sec:idea3}.

\begin{lemma}\label{lem:mis-sample-r}
	Let $\epsilon_r = 1/(k^4 \cdot n_{r-1}^2)$.
	For each $t \in [r]$,
	\begin{enumerate}
		\item $\Exp_{i \in [p_r]}\mi{\rM^{(t)}_{\sfP,-i}}{\rG_i \mid \rM^{(< t)},\rM^{(t)}_{\sfP,i},\rPerm} \le \epsilon_r$.
		\item $\Exp_{i \in [p_r]}\mi{\rM^{(t)}_\sfF}{\rG_i \mid \rM^{(< t)},\rM^{(t)}_\sfP,\rPerm} \le \epsilon_r$.
	\end{enumerate}
\end{lemma}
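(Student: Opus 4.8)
The plan is to prove both parts by one recipe, conditioning throughout on the permutation $\rPerm$ and on (parts of) the blackboard, and exploiting that: (i) by~\Cref{dist:mis-hard-r} the inputs $\rG_1,\dots,\rG_{p_r}$ are mutually independent given $\rPerm$; (ii) since $\pi$ is deterministic, once we also fix the fooling messages $\rM^{(<s)}_\sfF$ of rounds $1,\dots,s-1$, each round-$s$ principal message $\rM^{(s)}_{\sfP,i}$ is a deterministic function of $\rG_i$ together with the earlier principal messages $\rM^{(<s)}_\sfP$; and (iii) by~\Cref{dist:mis-half-r,dist:mis-hard-r} the total number of bits ever sent by fooling vertices is at most $r\cdot f_r(n_{r-1}-1)\cdot k\le 2r\halff_r n_{r-1}k$, which is $o(\epsilon_r\cdot p_r)$ since $p_r=2\halfp_r=2\halff_r k^6 n_{r-1}^3$ and $r=o(\log k)$. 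From (i) and (ii) one gets the key estimate $(\ast)$: the multi-information of $(\rG_1,\dots,\rG_{p_r})$ given \emph{any} conditioning of the form ``(some principal messages, fooling messages of rounds $<t$, $\rPerm$)'' is at most $\en{\rM^{(<t)}_\sfF}\le 2r\halff_r n_{r-1}k$. Indeed, conditioning on $\rM^{(<t)}_\sfF$ alone raises the multi-information from its value $0$ given $\rPerm$ (by (i)) by at most $\mi{\rG}{\rM^{(<t)}_\sfF\mid\rPerm}\le\en{\rM^{(<t)}_\sfF}$; and then conditioning further on $\rM^{(<t)}_\sfP$ (and even on $\rM^{(t)}_\sfP$) is, by (ii), conditioning round by round on tuples of functions of individual $\rG_i$'s, which can only \emph{decrease} the multi-information (conditioning on a function of $\rG_i$ cannot increase the entropy of $\rG_i$, while it adds nothing to the entropy of the full tuple).

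For the first part, fix $t$ and set $\rC:=(\rM^{(<t)},\rPerm)$. By the chain rule~(\itfacts{chain-rule}),
\[
	\mi{\rM^{(t)}_{\sfP,-i}}{\rG_i\mid\rM^{(<t)},\rM^{(t)}_{\sfP,i},\rPerm}=\mi{\rM^{(t)}_\sfP}{\rG_i\mid\rC}-\mi{\rM^{(t)}_{\sfP,i}}{\rG_i\mid\rC};
\]
the point of the non-simultaneous conditioning on $\rM^{(t)}_{\sfP,i}$ in~\Cref{alg:mis} is precisely that the subtracted term is block $i$'s \emph{round-$t$} message, matching what subadditivity will produce. Using the rearrangement of the chain rule $\mi{\rX}{\rG_i\mid\rC}\le\mi{\rG_{<i}}{\rG_i\mid\rC}+\mi{\rX}{\rG_i\mid\rG_{<i},\rC}$ with $\rX=\rM^{(t)}_\sfP$, summing over $i$, and recalling $\sum_i\mi{\rM^{(t)}_\sfP}{\rG_i\mid\rG_{<i},\rC}=\mi{\rM^{(t)}_\sfP}{\rG\mid\rC}$, we get $\sum_i\mi{\rM^{(t)}_\sfP}{\rG_i\mid\rC}\le\big(\sum_i\mi{\rG_{<i}}{\rG_i\mid\rC}\big)+\mi{\rM^{(t)}_\sfP}{\rG\mid\rC}$. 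Since given $\rC$ the tuple $\rM^{(t)}_\sfP=(\rM^{(t)}_{\sfP,i})_{i\in[p_r]}$ is a deterministic function of $\rG$, subadditivity of entropy yields $\mi{\rM^{(t)}_\sfP}{\rG\mid\rC}=\en{\rM^{(t)}_\sfP\mid\rC}\le\sum_i\en{\rM^{(t)}_{\sfP,i}\mid\rC}=\sum_i\mi{\rM^{(t)}_{\sfP,i}}{\rG_i\mid\rC}$, the last equality because $\rM^{(t)}_{\sfP,i}$ is a function of $\rG_i$ given $\rC$. Combining, the two $\sum_i\mi{\rM^{(t)}_{\sfP,i}}{\rG_i\mid\rC}$ terms cancel and, with $\sum_i\mi{\rG_{<i}}{\rG_i\mid\rC}$ equal to the multi-information of $\rG$ given $\rC$,
\begin{align*}
	\Exp_{i\in[p_r]}\mi{\rM^{(t)}_{\sfP,-i}}{\rG_i\mid\rM^{(<t)},\rM^{(t)}_{\sfP,i},\rPerm}
	&\le\frac1{p_r}\sum_{i\in[p_r]}\mi{\rG_{<i}}{\rG_i\mid\rC}\\
	&\le\frac{\en{\rM^{(<t)}_\sfF}}{p_r}\ <\ \epsilon_r,
\end{align*}
using $(\ast)$, (iii), $r=o(\log k)$, and $k$ large enough.

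For the second part, put $\rC':=(\rM^{(\le t)}_\sfP,\rM^{(<t)}_\sfF,\rPerm)=(\rM^{(<t)},\rM^{(t)}_\sfP,\rPerm)$ and apply the same rearrangement of the chain rule with $\rX=\rM^{(t)}_\sfF$, giving $\sum_i\mi{\rM^{(t)}_\sfF}{\rG_i\mid\rC'}\le\big(\sum_i\mi{\rG_{<i}}{\rG_i\mid\rC'}\big)+\mi{\rM^{(t)}_\sfF}{\rG\mid\rC'}$. The first sum is the multi-information of $(\rG_1,\dots,\rG_{p_r})$ given $\rC'$, which is $\le\en{\rM^{(<t)}_\sfF}$ by $(\ast)$; the second is $\le\en{\rM^{(t)}_\sfF}$. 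Together they are at most $\en{\rM^{(\le t)}_\sfF}\le 2r\halff_r n_{r-1}k$, so dividing by $p_r$ gives $\Exp_{i\in[p_r]}\mi{\rM^{(t)}_\sfF}{\rG_i\mid\rM^{(<t)},\rM^{(t)}_\sfP,\rPerm}<\epsilon_r$, as claimed. The main obstacle is fact (ii) (equivalently, the ``decoupling'' half of $(\ast)$): it is the substitute, in our edge-sharing model, for the usual rectangle property of blackboard protocols. In the number-in-hand setting one would just use that $\rG_i$ is the private input of a single party; here every fooling vertex sees slices of many of the $\rT_i$'s, so $\rG_i$ is private to no one and the decoupling fails unless the fooling transcript is held fixed. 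Making it rigorous requires a careful round-by-round bookkeeping of the relevant $\sigma$-algebras -- in particular, that the round-$s$ principal message $\rM^{(s)}_{\sfP,i}$ depends on the fooling transcript only through its prefix $\rM^{(<s)}_\sfF$, which is always contained in the conditioning -- and then assembling the one-round ``conditioning on a function of $\rG_i$ cannot increase its entropy'' steps into the statement that conditioning on the entire principal blackboard does not raise the multi-information of $(\rG_1,\dots,\rG_{p_r})$ above what the fooling messages already force. Once that is in place, everything reduces to the elementary chain-rule manipulations above.
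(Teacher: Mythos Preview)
Your proof is correct and reaches the same bound as the paper, but the organization is genuinely different. The paper decomposes at the $\rPerm$-only level: it writes $\sum_i\mi{\rM^{(t)}_{\sfP,-i}}{\rG_i\mid\rM^{(<t)},\rM^{(t)}_{\sfP,i},\rPerm}=\sum_i\mi{\rM^{(<t)},\rM^{(t)}_\sfP}{\rG_i\mid\rPerm}-\sum_i\mi{\rM^{(<t)},\rM^{(t)}_{\sfP,i}}{\rG_i\mid\rPerm}$, bounds the first sum by $\en{\rM^{(<t)}_\sfF}+\mi{\rM^{(\le t)}_\sfP}{\rG\mid\rM^{(<t)}_\sfF,\rPerm}$ using the unconditional independence $\rG_i\perp\rG_{<i}\mid\rPerm$, and then invokes a separate technical claim (\Cref{clm:mis-sum-info}), namely $\mi{\rM^{(\le t)}_\sfP}{\rG\mid\rM^{(<t)}_\sfF,\rPerm}\le\sum_i\mi{\rM^{(<t)}_\sfP,\rM^{(t)}_{\sfP,i}}{\rG_i\mid\rM^{(<t)}_\sfF,\rPerm}$, to effect the cancellation. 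You instead work directly at the $\rM^{(<t)}$-conditioning level and make the \emph{multi-information} of $(\rG_1,\dots,\rG_{p_r})$ the organizing quantity: your key lemma $(\ast)$ says this multi-information, given any conditioning containing $\rM^{(<t)}_\sfF,\rPerm$ and arbitrary principal messages, is at most $\en{\rM^{(<t)}_\sfF}$, because conditioning on a function of $\rG$ can only shrink each marginal entropy while leaving the joint entropy unchanged. Together with the identity $\mi{\rM^{(t)}_\sfP}{\rG\mid\rC}=\en{\rM^{(t)}_\sfP\mid\rC}\le\sum_i\en{\rM^{(t)}_{\sfP,i}\mid\rC}=\sum_i\mi{\rM^{(t)}_{\sfP,i}}{\rG_i\mid\rC}$, this makes the cancellation immediate and avoids \Cref{clm:mis-sum-info} entirely. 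Your route is arguably more transparent, since multi-information is exactly the quantity measuring ``how much has the fooling transcript correlated the originally independent $\rG_i$'s,'' and the monotonicity under conditioning on functions of $\rG$ is a clean one-line fact; the paper's \Cref{clm:mis-sum-info} encodes the same phenomenon but through a longer round-by-round double chain-rule expansion. Both arguments ultimately rest on the same structural observation you flag as the main obstacle: that $\rM^{(s)}_{\sfP,i}$ depends on the fooling transcript only through $\rM^{(<s)}_\sfF$, which is always already in the conditioning.
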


Before going into the actual proof of~\Cref{lem:mis-sample-r}, we first present the following technical claim.
Roughly, it shows what is revealed about $\rG$ as a whole is no more than the sum of the information revealed about individual $\rG_i$ by each principal block itself, justifying~\Cref{sec:idea4}.

\begin{claim}\label{clm:mis-sum-info}
	For each $t \in [r]$,
	\begin{align}
		\mi{\rM^{(\le t)}_\sfP}{\rG \mid \rM^{(< t)}_\sfF,\rPerm} \le \sum_{i \in [p_r]} \mi{\rM^{(< t)}_\sfP,\rM^{(t)}_{\sfP,i}}{\rG_i \mid \rM^{(< t)}_\sfF,\rPerm}.\label{eqn:mis-claim}
	\end{align}
\end{claim}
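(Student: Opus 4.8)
The plan is to prove \Cref{clm:mis-sum-info} by a single, carefully organized application of the chain rule of mutual information, set up so that the correlations that the fooling messages induce among the principal blocks' inputs never obstruct the argument. The starting point is a purely structural observation about the deterministic protocol $\pi$: for every $s\in[t]$ and every $i\in[p_r]$, the message $\rM^{(s)}_{\sfP,i}$ is a function of $\rG_i,\rM^{(<s)},\rPerm$; therefore, conditioned on $\rM^{(<s)}_\sfP,\rM^{(<t)}_\sfF,\rPerm$ (note $\rM^{(<s)}_\sfF\subseteq\rM^{(<t)}_\sfF$ since $s\le t$), the message $\rM^{(s)}_{\sfP,i}$ is a deterministic function of $\rG_i$ alone. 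I will use this fact repeatedly, together with \itfacts{chain-rule} and \itfacts{info-zero}, and — crucially — I will \emph{not} use any independence among $\rG_1,\dots,\rG_{p_r}$; that independence is exactly what fails once we condition on the fooling messages.

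First I would expand the left-hand side. Write $\event:=(\rM^{(<t)}_\sfF,\rPerm)$ for the conditioning. Listing all principal messages $\{\rM^{(s)}_{\sfP,i}\}$ in round-major, block-minor order and applying the chain rule gives
\[
\mi{\rM^{(\le t)}_\sfP}{\rG\mid\event}=\sum_{s\in[t]}\sum_{i\in[p_r]}\mi{\rM^{(s)}_{\sfP,i}}{\rG\mid\rM^{(<s)}_\sfP,\rM^{(s)}_{\sfP,<i},\event}.
\]
By the structural observation, under the conditioning $\rM^{(<s)}_\sfP,\rM^{(s)}_{\sfP,<i},\event$ the variable $\rM^{(s)}_{\sfP,i}$ is a function of $\rG_i$, so each summand equals $\mi{\rM^{(s)}_{\sfP,i}}{\rG_i\mid\rM^{(<s)}_\sfP,\rM^{(s)}_{\sfP,<i},\event}$ (the extra information about $\rG_{-i}$ is zero by \itfacts{info-zero}). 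Regrouping by block, $\mi{\rM^{(\le t)}_\sfP}{\rG\mid\event}=\sum_{i\in[p_r]}S_i$, where $S_i:=\sum_{s\in[t]}\mi{\rM^{(s)}_{\sfP,i}}{\rG_i\mid\rM^{(<s)}_\sfP,\rM^{(s)}_{\sfP,<i},\event}$.

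The second step is to show $S_i\le\mi{\rM^{(<t)}_\sfP,\rM^{(t)}_{\sfP,i}}{\rG_i\mid\event}$ for each fixed $i$, which is the claim after summing over $i$. I would expand the right-hand side by the chain rule in the same round-major, block-minor order but with $\rM^{(t)}_{\sfP,i}$ placed last; this produces a non-negative term for every message $\rM^{(s)}_{\sfP,j}$ with $s<t$, plus the term $\mi{\rM^{(t)}_{\sfP,i}}{\rG_i\mid\rM^{(<t)}_\sfP,\event}$. Discarding every term with $j\neq i$ and matching the rest against $S_i$ reduces the task to the single inequality
\[
\mi{\rM^{(t)}_{\sfP,i}}{\rG_i\mid\rM^{(<t)}_\sfP,\rM^{(t)}_{\sfP,<i},\event}\le\mi{\rM^{(t)}_{\sfP,i}}{\rG_i\mid\rM^{(<t)}_\sfP,\event},
\]
i.e., additionally conditioning block $i$'s round-$t$ message on the round-$t$ messages of blocks $1,\dots,i-1$ does not increase what it reveals about $\rG_i$. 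I would derive this from the identity $\mi{X}{Y\mid Z,C}-\mi{X}{Y\mid C}=\mi{X}{Z\mid Y,C}-\mi{X}{Z\mid C}$ (two applications of \itfacts{chain-rule}) with $X=\rM^{(t)}_{\sfP,i}$, $Y=\rG_i$, $Z=\rM^{(t)}_{\sfP,<i}$, $C=(\rM^{(<t)}_\sfP,\event)$: the term $\mi{X}{Z\mid Y,C}$ vanishes because $\rM^{(t)}_{\sfP,i}$ is a deterministic function of $(\rG_i,\rM^{(<t)}_\sfP,\rM^{(<t)}_\sfF,\rPerm)$, hence constant given $(Y,C)$, leaving $-\mi{X}{Z\mid C}\le0$.

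The main obstacle to anticipate is conceptual rather than computational: since each fooling vertex has neighbors in many different principal blocks, conditioning on the fooling transcript $\rM^{(<t)}_\sfF$ genuinely correlates $\rG_1,\dots,\rG_{p_r}$ (the example in \Cref{sec:idea4} is exactly such a case), so the naive ``the other principal blocks know nothing about $\rG_i$'' argument is simply false. The design above is meant to route around this: the round-major chain rule isolates, one message at a time, the information each principal block reveals about \emph{its own} input given everything already written by principal blocks, and the only non-trivial inequality that survives is the local ``non-simultaneity'' fact that block $i$ speaking before blocks $1,\dots,i-1$ within round $t$ can only lower the $\rG_i$-information carried by its message. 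The delicate part is the bookkeeping — keeping the conditioning sets precisely so that each $\rM^{(s)}_{\sfP,i}$ is a function of $\rG_i$ under the conditioning, and $\rM^{(t)}_{\sfP,i}$ is outright constant given $(\rG_i,\rM^{(<t)}_\sfP,\event)$ — but once this is arranged, every inequality used is either the chain rule or non-negativity of mutual information.
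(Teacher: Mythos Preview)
Your proof is correct and follows essentially the same approach as the paper: expand both sides via the chain rule in round-major order and exploit that each $\rM^{(s)}_{\sfP,i}$ is determined by $\rG_i$ once $\rM^{(<s)}_\sfP,\rM^{(<t)}_\sfF,\rPerm$ are fixed. The only cosmetic difference is that the paper first drops the $\rM^{(s)}_{\sfP,<i}$ conditioning on the left (via \Cref{prop:info-decrease}, using that $\rM^{(s)}_{\sfP,i}$ is a function of $\rG,\rPerm$) before reducing $\rG$ to $\rG_i$, so that the left- and right-hand sides collapse to the \emph{same} double sum $\sum_{s,i}\mi{\rM^{(s)}_{\sfP,i}}{\rG_i\mid\rM^{(<s)}_\sfP,\event}$; you instead retain the $\rM^{(s)}_{\sfP,<i}$ conditioning throughout and match terms directly, leaving a single round-$t$ inequality handled by the identical monotonicity step.
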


\begin{proof}
	The proof is by rewriting both sides of the above inequality using the chain rule of mutual information~(\itfacts{chain-rule}) for multiple times.
	For the left hand side of~\Cref{eqn:mis-claim}, we have
	\begin{align*}
		\phantom{=} & \mi{\rM^{(\le t)}_\sfP}{\rG \mid \rM^{(< t)}_\sfF,\rPerm}\\
		= {} & \sum_{t' \in [t]} \mi{\rM^{(t')}_\sfP}{\rG \mid \rM^{(< t')}_\sfP,\rM^{(< t)}_\sfF,\rPerm}\\
		= {} & \sum_{t' \in [t]} \sum_{i \in [p_r]} \mi{\rM^{(t')}_{\sfP,i}}{\rG \mid \rM^{(t')}_{\sfP,< i},\rM^{(< t')}_\sfP,\rM^{(< t)}_\sfF,\rPerm}\\
		\le {} & \sum_{t' \in [t]} \sum_{i \in [p_r]} \mi{\rM^{(t')}_{\sfP,i}}{\rG \mid \rM^{(< t')}_\sfP,\rM^{(< t)}_\sfF,\rPerm}, \tag{by~\Cref{prop:info-decrease}}
	\end{align*}
	where we use the observation that $\rM^{(t')}_{\sfP,i}$ is fully determined by $\rG,\rPerm$, as $\pi$ is deterministic, and thus conditionally independent of $\rM^{(t')}_{\sfP,< i}$.
	Continuing,
	\begin{align*}
		\phantom{=} & \mi{\rM^{(\le t)}_\sfP}{\rG \mid \rM^{(< t)}_\sfF,\rPerm}\\
		\le {} & \sum_{t' \in [t]} \sum_{i \in [p_r]} \mi{\rM^{(t')}_{\sfP,i}}{\rG \mid \rM^{(< t')}_\sfP,\rM^{(< t)}_\sfF,\rPerm}\\
		= {} & \sum_{t' \in [t]} \sum_{i \in [p_r]} \mi{\rM^{(t')}_{\sfP,i}}{\rG_i,\rG_{-i} \mid \rM^{(< t')}_\sfP,\rM^{(< t)}_\sfF,\rPerm}\\
		= {} & \sum_{t' \in [t]} \sum_{i \in [p_r]} \left[ \mi{\rM^{(t')}_{\sfP,i}}{\rG_i \mid \rM^{(< t')}_\sfP,\rM^{(< t)}_\sfF,\rPerm} + \mi{\rM^{(t')}_{\sfP,i}}{\rG_{-i} \mid \rG_i,\rM^{(< t')}_\sfP,\rM^{(< t)}_\sfF,\rPerm} \right]\\
		= {} & \sum_{t' \in [t]} \sum_{i \in [p_r]} \mi{\rM^{(t')}_{\sfP,i}}{\rG_i \mid \rM^{(< t')}_\sfP,\rM^{(< t)}_\sfF,\rPerm},
	\end{align*}
	as $\rM^{(t')}_{\sfP,i}$ is fully determined by $\rG_i,\rM^{(< t')},\rPerm$ using the second statement of~\Cref{rem:comp}.
	The right hand side of~\Cref{eqn:mis-claim} can be bounded as follows.
	\begin{align*}
		\phantom{=} & \sum_{i \in [p_r]} \mi{\rM^{(< t)}_\sfP,\rM^{(t)}_{\sfP,i}}{\rG_i \mid \rM^{(< t)}_\sfF,\rPerm}\\
		= {} & \sum_{i \in [p_r]} \mi{\rM^{(t)}_{\sfP,i}}{\rG_i \mid \rM^{(< t)}_\sfP,\rM^{(< t)}_\sfF,\rPerm} + \sum_{i \in [p_r]} \mi{\rM^{(< t)}_\sfP}{\rG_i \mid \rM^{(< t)}_\sfF,\rPerm}\\
		= {} & \sum_{i \in [p_r]} \mi{\rM^{(t)}_{\sfP,i}}{\rG_i \mid \rM^{(< t)}_\sfP,\rM^{(< t)}_\sfF,\rPerm} + \sum_{i \in [p_r]} \sum_{t' \in [t-1]} \mi{\rM^{(t')}_\sfP}{\rG_i \mid \rM^{(< t')}_\sfP,\rM^{(< t)}_\sfF,\rPerm}\\
		= {} & \sum_{i \in [p_r]} \mi{\rM^{(t)}_{\sfP,i}}{\rG_i \mid \rM^{(< t)}_\sfP,\rM^{(< t)}_\sfF,\rPerm} + \sum_{i \in [p_r]} \sum_{t' \in [t-1]} \mi{\rM^{(t')}_{\sfP,i},\rM^{(t')}_{\sfP,-i}}{\rG_i \mid \rM^{(< t')}_\sfP,\rM^{(< t)}_\sfF,\rPerm}\\
		\ge {} & \sum_{i \in [p_r]} \mi{\rM^{(t)}_{\sfP,i}}{\rG_i \mid \rM^{(< t)}_\sfP,\rM^{(< t)}_\sfF,\rPerm} + \sum_{i \in [p_r]} \sum_{t' \in [t-1]} \mi{\rM^{(t')}_{\sfP,i}}{\rG_i \mid \rM^{(< t')}_\sfP,\rM^{(< t)}_\sfF,\rPerm} \tag{by the non-negativity and chain rule of mutual information~(\itfacts{chain-rule})}\\
		= {} & \sum_{i \in [p_r]} \sum_{t' \in [t]} \mi{\rM^{(t')}_{\sfP,i}}{\rG_i \mid \rM^{(< t')}_\sfP,\rM^{(< t)}_\sfF,\rPerm}.
	\end{align*}
	We finally reach the desired conclusion by putting the two sides together.
\end{proof}

Now we are ready to prove~\Cref{lem:mis-sample-r}.

\begin{proof}[Proof of~\Cref{lem:mis-sample-r}]~
	\paragraph{Proof of the first statement:}
	Instead of bounding the expectation directly, for convenience, we are going to work with the following summation:
	\begin{align}
		\phantom{=} & \sum_{i \in [p_r]} \mi{\rM^{(t)}_{\sfP,-i}}{\rG_i \mid \rM^{(< t)},\rM^{(t)}_{\sfP,i},\rPerm}\nonumber\\
		= {} & \sum_{i \in [p_r]} \mi{\rM^{(< t)},\rM^{(t)}_\sfP}{\rG_i \mid \rPerm} - \sum_{i \in [p_r]} \mi{\rM^{(< t)},\rM^{(t)}_{\sfP,i}}{\rG_i \mid \rPerm},\label{eqn:mis-sum-pb}
	\end{align}
	as $\rM^{(t)}_\sfP = (\rM^{(t)}_{\sfP,i},\rM^{(t)}_{\sfP,-i})$ and by the chain rule of mutual information~(\itfacts{chain-rule}).
	The first term above can be upper bounded as
	\begin{align*}
		\phantom{=} & \sum_{i \in [p_r]} \mi{\rM^{(< t)},\rM^{(t)}_\sfP}{\rG_i \mid \rPerm}\\
		\le {} & \sum_{i \in [p_r]} \mi{\rM^{(< t)},\rM^{(t)}_\sfP}{\rG_i \mid \rG_{< i},\rPerm} \tag{as $\rG_i \perp \rG_{< i} \mid \rPerm$ and by~\Cref{prop:info-increase}}\\
		= {} & \mi{\rM^{(< t)},\rM^{(t)}_\sfP}{\rG \mid \rPerm} \tag{by the chain rule of mutual information~(\itfacts{chain-rule})}\\
		= {} & \mi{\rM^{(\le t)}_\sfP,\rM^{(< t)}_\sfF}{\rG \mid \rPerm}\\
		\le {} & \en{\rM^{(< t)}_\sfF} + \mi{\rM^{(\le t)}_\sfP}{\rG \mid \rM^{(< t)}_\sfF,\rPerm}. \tag{by the chain-rule of mutual information~(\itfacts{chain-rule})}
	\end{align*}
	Plugging into~\Cref{eqn:mis-sum-pb}, we have
	\begin{align*}
		\phantom{=} & \sum_{i \in [p_r]} \mi{\rM^{(t)}_{\sfP,-i}}{\rG_i \mid \rM^{(< t)},\rM^{(t)}_{\sfP,i},\rPerm}\\
		\le {} & \en{\rM^{(< t)}_\sfF} + \mi{\rM^{(\le t)}_\sfP}{\rG \mid \rM^{(< t)}_\sfF,\rPerm} - \sum_{i \in [p_r]} \mi{\rM^{(< t)},\rM^{(t)}_{\sfP,i}}{\rG_i \mid \rPerm}\\
		= {} & \en{\rM^{(< t)}_\sfF} + \mi{\rM^{(\le t)}_\sfP}{\rG \mid \rM^{(< t)}_\sfF,\rPerm} - \sum_{i \in [p_r]} \mi{\rM^{(< t)}_\sfP,\rM^{(< t)}_\sfF,\rM^{(t)}_{\sfP,i}}{\rG_i \mid \rPerm}\\
		\le {} & \en{\rM^{(< t)}_\sfF} + \mi{\rM^{(\le t)}_\sfP}{\rG \mid \rM^{(< t)}_\sfF,\rPerm} - \sum_{i \in [p_r]} \mi{\rM^{(< t)}_\sfP,\rM^{(t)}_{\sfP,i}}{\rG_i \mid \rM^{(< t)}_\sfF,\rPerm} \tag{by the non-negativity and chain rule of mutual information~(\itfacts{chain-rule})}\\
		\le {} & \en{\rM^{(< t)}_\sfF} \tag{by~\Cref{clm:mis-sum-info}}\\
		\le {} & k \cdot (n_{r-1}-1) \cdot f_r \cdot (t-1), \tag{by the subadditivity of entropy~(\itfacts{sub-additivity})}
	\end{align*}
	since there are $f_r$ fooling blocks of $n_{r-1}-1$ fooling vertices each, and every fooling vertex communicates at most $k$ bits in each of the first $t-1$ rounds.
	Going back to the expectation, we finally get
	\begin{align*}
		\phantom{=} & \Exp_{i \in [p_r]}\mi{\rM^{(t)}_{\sfP,-i}}{\rG_i \mid \rM^{(< t)},\rM^{(t)}_{\sfP,i},\rPerm}\\
		= {} & \frac{1}{p_r} \cdot \sum_{i \in [p_r]} \mi{\rM^{(t)}_{\sfP,-i}}{\rG_i \mid \rM^{(< t)},\rM^{(t)}_{\sfP,i},\rPerm}\\
		\le {} & \frac{k \cdot n_{r-1} \cdot f_r \cdot r}{p_r}\\
		\le {} & \epsilon_r,
	\end{align*}
	by the assumption $r = o(\log k)$.

	\paragraph{Proof of the second statement:}
	The proof is quite similar to the first one.
	Our goal is still to the bound the following summation:
	\begin{align}
		\phantom{=} & \sum_{i \in [p_r]} \mi{\rM^{(t)}_\sfF}{\rG_i \mid \rM^{(< t)},\rM^{(t)}_\sfP,\rPerm}\nonumber\\
		= {} & \sum_{i \in [p_r]} \mi{\rM^{(\le t)}}{\rG_i \mid \rPerm} - \sum_{i \in [p_r]} \mi{\rM^{(< t)},\rM^{(t)}_\sfP}{\rG_i \mid \rPerm},\label{eqn:mis-sum-fb}
	\end{align}
	as $\rM^{(t)} = (\rM^{(t)}_\sfP,\rM^{(t)}_\sfF)$ and by the chain rule of mutual information~(\itfacts{chain-rule}).
	Again we bound the first term above as follows.
	\begin{align*}
		\phantom{=} & \sum_{i \in [p_r]} \mi{\rM^{(\le t)}}{\rG_i \mid \rPerm}\\
		\le {} & \sum_{i \in [p_r]} \mi{\rM^{(\le t)}}{\rG_i \mid \rG_{< i},\rPerm} \tag{as $\rG_i \perp \rG_{< i} \mid \rPerm$ and by~\Cref{prop:info-increase}}\\
		= {} & \mi{\rM^{(\le t)}}{\rG \mid \rPerm} \tag{by the chain rule of mutual information~(\itfacts{chain-rule})}\\
		= {} & \mi{\rM^{(\le t)}_\sfP,\rM^{(< t)}_\sfF,\rM^{(t)}_\sfF}{\rG \mid \rPerm}\\
		\le {} & \en{\rM^{(t)}_\sfF} + \mi{\rM^{(\le t)}_\sfP,\rM^{(< t)}_\sfF}{\rG \mid \rPerm} \tag{by the chain-rule of mutual information~(\itfacts{chain-rule})}\\
		\le {} & \en{\rM^{(t)}_\sfF} + \en{\rM^{(< t)}_\sfF} + \mi{\rM^{(\le t)}_\sfP}{\rG \mid \rM^{(< t)}_\sfF,\rPerm}. \tag{by the chain-rule of mutual information~(\itfacts{chain-rule})}
	\end{align*}
	Plugging into~\Cref{eqn:mis-sum-fb}, we have
	\begin{align*}
		\phantom{=} & \sum_{i \in [p_r]} \mi{\rM^{(t)}_\sfF}{\rG_i \mid \rM^{(< t)},\rM^{(t)}_\sfP,\rPerm}\\
		\le {} & \en{\rM^{(t)}_\sfF} + \en{\rM^{(< t)}_\sfF} + \mi{\rM^{(\le t)}_\sfP}{\rG \mid \rM^{(< t)}_\sfF,\rPerm} - \sum_{i \in [p_r]} \mi{\rM^{(< t)},\rM^{(t)}_\sfP}{\rG_i \mid \rPerm}\\
		= {} & \en{\rM^{(t)}_\sfF} + \en{\rM^{(< t)}_\sfF} + \mi{\rM^{(\le t)}_\sfP}{\rG \mid \rM^{(< t)}_\sfF,\rPerm} - \sum_{i \in [p_r]} \mi{\rM^{(< t)}_\sfP,\rM^{(< t)}_\sfF,\rM^{(t)}_{\sfP,i},\rM^{(t)}_{\sfP,-i}}{\rG_i \mid \rPerm}\\
		\le {} & \en{\rM^{(t)}_\sfF} + \en{\rM^{(< t)}_\sfF} + \mi{\rM^{(\le t)}_\sfP}{\rG \mid \rM^{(< t)}_\sfF,\rPerm} - \sum_{i \in [p_r]} \mi{\rM^{(< t)}_\sfP,\rM^{(t)}_{\sfP,i}}{\rG_i \mid \rM^{(< t)}_\sfF,\rPerm}. \tag{by the non-negativity and chain rule of mutual information~(\itfacts{chain-rule})}\\
		\le {} & \en{\rM^{(t)}_\sfF} + \en{\rM^{(< t)}_\sfF}\tag{by~\Cref{clm:mis-sum-info}}\\
		\le {} & k \cdot (n_{r-1}-1) \cdot f_r \cdot t, \tag{by the subadditivity of entropy~(\itfacts{sub-additivity})}
	\end{align*}
	by counting the total communication of all fooling vertices.
	The desired upper bound on the expectation is derived similarly to the first statement.
\end{proof}

Technically, it is actually possible to prove the following similar to~\Cref{clm:mis-sum-info}:
\[
	\mi{\rM^{(t)}_\sfP}{\rG \mid \rM^{(< t)},\rPerm} \le \sum_{i \in [p_r]} \mi{\rM^{(t)}_{\sfP,i}}{\rG_i \mid \rM^{(< t)},\rPerm},
\]
which would justify the intuition we provided for~\Cref{clm:mis-sum-info} even better.
However, the proof of~\Cref{lem:mis-sample-r} is complicated by the fact that neither $\rG_i \perp \rG_{< i} \mid \rM^{(\le t)}_\sfP,\rPerm$ nor $\rG_i \perp \rG_{< i} \mid \rM^{(\le t)},\rPerm$ is true.
In general $\rM^{(t)}_\sfP$ depends on $\rM^{(< t)}_\sfF$, which in turn is able to correlate $\rG_i$ and $\rG_{-i}$.
At the core of the proof of~\Cref{lem:mis-sample-r} is applying the chain rule of mutual information~(\itfacts{chain-rule}) over all $\rG_i$.
To have the chain rule go through in the correct direction, what we need is the conditional independence between all $\rG_i$.
As a result, we are forced to rewrite the summation as in~\Cref{eqn:mis-sum-pb,eqn:mis-sum-fb} such that the conditional independence between all $\rG_i$ hold, and then conduct a more careful analysis to bound the amount of correlation caused by the messages of fooling vertices.
The current form of~\Cref{clm:mis-sum-info} turns out to be more appropriate for this purpose.

Combining~\Cref{lem:mis-sample-1,lem:mis-product-dist,lem:mis-sample-r}, the following corollary follows directly from Pinsker's inequality~(\Cref{fact:pinskers}).
It essentially captures our initial intuition that the final blackboard $\rM^{(\le r)}$ sampled by $\tau_i$ is close to the true distribution on average over all possible $i \in [p_r]$.

\begin{corollary}\label{cor:mis-tvd}
	Let $\mu$ be the true distribution for $(\rG,\rM^{(\le r)},\rPerm)$ in $\pi$ and for $i \in [p_r]$, $\mu_i$ be the marginal distribution of $\mu$ for $(\rG_i,\rM^{(\le r)},\rPerm)$.
	For $i \in [p_r]$, let $\nu_i$ be the distribution of $(\rG_i,\rM^{(\le r)},\rPerm)$ defined by
	\begin{align*}
		\nu_i(\rG_i,\rM^{(\le r)},\rPerm)
		& := \mu(\rB_i,\rPerm) \cdot \mu(\rM^{(1)}_{\sfP,i} \mid \rPerm) \cdot \prod_{u \in \rPerm(\pb_i)} \mu(\rT_i(u) \mid \rB_i(u),\rM^{(1)}_{\sfP,i},\rPerm)\\
		& \qquad \cdot \prod_{t \in [r]} \mu(\rM^{(t)}_{\sfP,-i} \mid \rM^{(< t)},\rM^{(t)}_{\sfP,i},\rPerm) \cdot \prod_{t \in [r]} \mu(\rM^{(t)}_\sfF \mid \rM^{(< t)},\rM^{(t)}_\sfP,\rPerm)\\
		& \qquad \cdot \prod_{t \in [2,r]}\prod_{u \in \rPerm(\pb_i)} \mu(\rM^{(t)}_{\sfP,i}(u) \mid \rG_i(u),\rM^{(< t)},\rPerm).
	\end{align*}
	It holds that
	\[
		\Exp_{i \in [p_r]}\Exp_{\rB_i \sim \mu}\tvd{\mu_i(\rM^{(\le r)},\rPerm \mid \rB_i)}{\nu_i(\rM^{(\le r)},\rPerm \mid \rB_i)} \le \frac{1}{k \cdot n_{r-1}}.
	\]
\end{corollary}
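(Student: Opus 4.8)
The plan is to prove the corollary by a hybrid (chain-rule) argument, bounding the relative entropy link by link and then passing to total variation distance via Pinsker's inequality. Fix $i\in[p_r]$ and regard both $\mu_i(\cdot\mid\rB_i)$ and $\nu_i(\cdot\mid\rB_i)$ as distributions over the tuple
\[
	\big(\rPerm,\, \rM^{(1)}_{\sfP,i},\, \rT_i,\, \rM^{(1)}_{\sfP,-i},\, \rM^{(1)}_\sfF,\, \rM^{(2)}_{\sfP,i},\, \rM^{(2)}_{\sfP,-i},\, \rM^{(2)}_\sfF,\, \dots,\, \rM^{(r)}_{\sfP,i},\, \rM^{(r)}_{\sfP,-i},\, \rM^{(r)}_\sfF\big),
\]
factored as a product of conditionals in exactly this order (the order in which \Cref{alg:mis} samples them), and compare the two factorizations term by term. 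I would keep $\rT_i$ — equivalently $\rG_i=(\rB_i,\rT_i)$ — as an explicit link rather than marginalize it; since marginalizing out $\rT_i$ cannot increase total variation distance, a bound on the joint law immediately gives the claimed bound for $(\rM^{(\le r)},\rPerm\mid\rB_i)$ at the end. For $\nu_i$ this factorization is precisely its defining product in the statement; for $\mu_i$ it is the chain rule, together with the bookkeeping of \Cref{tab:comp} and \Cref{rem:comp}: since $\pi$ is deterministic, $\rM^{(t)}_{\sfP,i}$ is a deterministic function of $\rG_i,\rM^{(<t)},\rPerm$ and $\rM^{(t)}_{\sfP,i}(u)$ of $\rG_i(u),\rM^{(<t)},\rPerm$, while the true conditionals of $\rM^{(t)}_{\sfP,-i}$ and $\rM^{(t)}_\sfF$ additionally condition on $\rG_i$.

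Then I would apply the chain rule for KL divergence (\itfacts{chain-rule}), average over $\rB_i\sim\mu$ and over $i\in[p_r]$, and match each link to an earlier result. The $\rPerm$ link contributes $0$ since $\mu_i(\rB_i,\rPerm)=\mu(\rB_i,\rPerm)=\nu_i(\rB_i,\rPerm)$. The $\rM^{(1)}_{\sfP,i}$ link contributes $\Exp_{\rB_i,\rPerm}\kl{\mu(\rM^{(1)}_{\sfP,i}\mid\rB_i,\rPerm)}{\mu(\rM^{(1)}_{\sfP,i}\mid\rPerm)}=\mi{\rM^{(1)}_{\sfP,i}}{\rB_i\mid\rPerm}$, which is at most $\epsilon_r$ for every $i$ by \Cref{lem:mis-sample-1}. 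The $\rT_i$ link contributes $0$ because, by \Cref{lem:mis-product-dist}, the product $\bigtimes_{u}\mu(\rT_i(u)\mid\rB_i(u),\rM^{(1)}_{\sfP,i},\rPerm)$ used by $\nu_i$ equals the true conditional $\mu(\rT_i\mid\rB_i,\rM^{(1)}_{\sfP,i},\rPerm)$. Each $\rM^{(t)}_{\sfP,i}$ link with $t\ge2$ contributes $0$ since in both $\mu_i$ and $\nu_i$ every $\rM^{(t)}_{\sfP,i}(u)$ is the same deterministic function of $\rG_i(u),\rM^{(<t)},\rPerm$. Finally, after the two expectations, each $\rM^{(t)}_{\sfP,-i}$ link contributes $\Exp_{i}\mi{\rM^{(t)}_{\sfP,-i}}{\rG_i\mid\rM^{(<t)},\rM^{(t)}_{\sfP,i},\rPerm}\le\epsilon_r$ by the first part of \Cref{lem:mis-sample-r}, and each $\rM^{(t)}_\sfF$ link contributes $\Exp_{i}\mi{\rM^{(t)}_\sfF}{\rG_i\mid\rM^{(<t)},\rM^{(t)}_\sfP,\rPerm}\le\epsilon_r$ by its second part.

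The step I expect to require genuine care — though no new idea — is the interchange of the chain-rule expectation (over the sampled prefix, under $\mu_i(\cdot\mid\rB_i)$) with $\Exp_{\rB_i}$ and $\Exp_i$. For each link one checks that the preceding prefix together with the fixed $\rB_i$ pins down exactly the conditioning variables of the corresponding row of \Cref{tab:comp} (for instance $(\rG_i,\rM^{(<t)},\rM^{(t)}_{\sfP,i},\rPerm)$ for an $\rM^{(t)}_{\sfP,-i}$ link) and nothing more, and that averaging the $\mu_i(\cdot\mid\rB_i)$-law of that prefix over $\rB_i\sim\mu$ reproduces the $\mu$-marginal of those variables; the link's contribution then becomes literally the conditional mutual information named above. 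Summing the links and using that each $\epsilon_r$ is at most $r/(k^4\cdot n_{r-1}^2)$ gives
\[
	\Exp_{i\in[p_r]}\Exp_{\rB_i\sim\mu}\kl{\mu_i(\rT_i,\rM^{(\le r)},\rPerm\mid\rB_i)}{\nu_i(\rT_i,\rM^{(\le r)},\rPerm\mid\rB_i)}\le(2r+1)\cdot\frac{r}{k^4\cdot n_{r-1}^2}.
\]
Pinsker's inequality (\Cref{fact:pinskers}) bounds total variation distance by a constant times the square root of this quantity, and moving $\Exp_i\Exp_{\rB_i}$ inside the square root by Jensen yields $\Exp_i\Exp_{\rB_i}\tvd{\mu_i(\cdot\mid\rB_i)}{\nu_i(\cdot\mid\rB_i)}=O\!\big(r/(k^2 n_{r-1})\big)\le 1/(k\cdot n_{r-1})$, the last step using $r=o(\log k)$ and $k$ sufficiently large; marginalizing out $\rT_i$ gives the stated bound. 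Thus the only real work is the bookkeeping that aligns the chain-rule links with \Cref{tab:comp} and justifies the expectation interchange, so that every nonzero link is exactly a quantity already bounded by \Cref{lem:mis-sample-1} or \Cref{lem:mis-sample-r}.
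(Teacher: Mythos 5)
Your proposal is correct and follows essentially the same route as the paper: the same link-by-link decomposition in the sampling order of \Cref{alg:mis}, with the $\rPerm$, $\rT_i$, and $\rM^{(t)}_{\sfP,i}$ links vanishing and the remaining links matched to \Cref{lem:mis-sample-1,lem:mis-sample-r} exactly as in the paper's proof. The only (harmless, in fact slightly tighter) difference is the order of operations — you sum KL divergences via the chain rule and apply Pinsker once at the end, whereas the paper applies Pinsker to each link separately and then combines with the chain rule for total variation distance (\Cref{fact:tvd-chain-rule}).
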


\begin{proof}
	Firstly, we convert the statements of~\Cref{lem:mis-sample-1,lem:mis-sample-r} to the language of total variation distance.
	For each $i \in [p_r]$, we have
	\begin{align*}
		\phantom{=} & \Exp_{(\rB_i,\rPerm) \sim \mu}\tvd{\mu_i(\rM^{(1)}_{\sfP,i} \mid \rB_i,\rPerm)}{\nu_i(\rM^{(1)}_{\sfP,i} \mid \rB_i,\rPerm)}\\
		= {} & \Exp_{(\rB_i,\rPerm) \sim \mu}\tvd{\mu(\rM^{(1)}_{\sfP,i} \mid \rB_i,\rPerm)}{\mu(\rM^{(1)}_{\sfP,i} \mid \rPerm)}\\
		\le {} & \Exp_{(\rB_i,\rPerm) \sim \mu}\sqrt{\kl{\mu(\rM^{(1)}_{\sfP,i} \mid \rB_i,\rPerm)}{\mu(\rM^{(1)}_{\sfP,i} \mid \rPerm)}} \tag{by Pinsker's inequality~(\Cref{fact:pinskers})}\\
		\le {} & \sqrt{\Exp_{(\rB_i,\rPerm) \sim \mu}\kl{\mu(\rM^{(1)}_{\sfP,i} \mid \rB_i,\rPerm)}{\mu(\rM^{(1)}_{\sfP,i} \mid \rPerm)}} \tag{by the concavity of $\sqrt{\cdot}$}\\
		= {} & \sqrt{\mi{\rM^{(1)}_{\sfP,i}}{\rB_i \mid \rPerm}} \tag{by~\Cref{fact:kl-info}}\\
		\le {} & \epsilon_r^{1/2}. \tag{by~\Cref{lem:mis-sample-1}}
	\end{align*}
	Meanwhile, for each $t \in [r]$, we can get
	\begin{align*}
		\phantom{=} & \Exp_{i \in [p_r]}\Exp_{(\rG_i,\rM^{(< t)},\rM^{(t)}_{\sfP,i},\rPerm) \sim \mu}\tvd{\mu_i(\rM^{(t)}_{\sfP,-i} \mid \rG_i,\rM^{(< t)},\rM^{(t)}_{\sfP,i},\rPerm)}{\nu_i(\rM^{(t)}_{\sfP,-i} \mid \rG_i,\rM^{(< t)},\rM^{(t)}_{\sfP,i},\rPerm)}\\
		= {} & \Exp_{i \in [p_r]}\Exp_{(\rG_i,\rM^{(< t)},\rM^{(t)}_{\sfP,i},\rPerm) \sim \mu}\tvd{\mu(\rM^{(t)}_{\sfP,-i} \mid \rG_i,\rM^{(< t)},\rM^{(t)}_{\sfP,i},\rPerm)}{\mu(\rM^{(t)}_{\sfP,-i} \mid \rM^{(< t)},\rM^{(t)}_{\sfP,i},\rPerm)}\\
		\le {} & \Exp_{i \in [p_r]}\Exp_{(\rG_i,\rM^{(< t)},\rM^{(t)}_{\sfP,i},\rPerm) \sim \mu}\sqrt{\kl{\mu(\rM^{(t)}_{\sfP,-i} \mid \rG_i,\rM^{(< t)},\rM^{(t)}_{\sfP,i},\rPerm)}{\mu(\rM^{(t)}_{\sfP,-i} \mid \rM^{(< t)},\rM^{(t)}_{\sfP,i},\rPerm)}} \tag{by Pinsker's inequality~(\Cref{fact:pinskers})}\\
		\le {} & \sqrt{\Exp_{i \in [p_r]}\Exp_{(\rG_i,\rM^{(< t)},\rM^{(t)}_{\sfP,i},\rPerm) \sim \mu}\kl{\mu(\rM^{(t)}_{\sfP,-i} \mid \rG_i,\rM^{(< t)},\rM^{(t)}_{\sfP,i},\rPerm)}{\mu(\rM^{(t)}_{\sfP,-i} \mid \rM^{(< t)},\rM^{(t)}_{\sfP,i},\rPerm)}} \tag{by the concavity of $\sqrt{\cdot}$}\\
		= {} & \sqrt{\Exp_{i \in [p_r]} \mi{\rM^{(t)}_{\sfP,-i}}{\rG_i \mid \rM^{(< t)},\rM^{(t)}_{\sfP,i},\rPerm}} \tag{by~\Cref{fact:kl-info}}\\
		\le {} & \epsilon_r^{1/2}, \tag{by~\Cref{lem:mis-sample-r}}
	\end{align*}
	and similarly
	\[
		\Exp_{i \in [p_r]}\Exp_{(\rG_i,\rM^{(< t)},\rM^{(t)}_\sfP,\rPerm) \sim \mu}\tvd{\mu_i(\rM^{(t)}_\sfF \mid \rG_i,\rM^{(< t)},\rM^{(t)}_\sfP,\rPerm)}{\nu_i(\rM^{(t)}_\sfF \mid \rG_i,\rM^{(< t)},\rM^{(t)}_\sfP,\rPerm)} \le \epsilon_r^{1/2}.
	\]
	We also trivially have
	\[
		\Exp_{(\rB_i,\rM^{(1)}_{\sfP,i},\rPerm) \sim \mu}\tvd{\mu_i(\rT_i \mid \rB_i,\rM^{(1)}_{\sfP,i},\rPerm)}{\nu_i(\rT_i \mid \rB_i,\rM^{(1)}_{\sfP,i},\rPerm)} = 0,
	\]
	by~\Cref{lem:mis-product-dist} for each $i \in [p_r]$, and
	\[
		\Exp_{(\rG_i,\rM^{(< t)},\rPerm) \sim \mu}\tvd{\mu_i(\rM^{(t)}_{\sfP,i} \mid \rG_i,\rM^{(< t)},\rPerm)}{\nu_i(\rM^{(t)}_{\sfP,i} \mid \rG_i,\rM^{(< t)},\rPerm)} = 0,
	\]
	by the second statement of~\Cref{rem:comp} for each $t \in [2,r]$.
	Additionally observe that
	\[
		\Exp_{\rB_i \sim \mu}\tvd{\mu_i(\rPerm \mid \rB_i)}{\nu_i(\rPerm \mid \rB_i)} = 0,
	\]
	since $\rPerm$ is a uniformly random permutation drawn independent of $\rB_i$.
	Combining all these conditional distributions using the chain rule of total variation distance~(\Cref{fact:tvd-chain-rule}), it holds that
	\begin{align*}
		\phantom{=} & \Exp_{i \in [p_r]}\Exp_{\rB_i \sim \mu}\tvd{\mu_i(\rM^{(\le r)},\rT_i,\rPerm \mid \rB_i)}{\nu_i(\rM^{(\le r)},\rT_i,\rPerm \mid \rB_i)}\\
		\le {} & \epsilon_r^{1/2} \cdot (2r+1)\\
		= {} & \frac{(2r+1)}{k^2 \cdot n_{r-1}}\\
		\le {} & \frac{1}{k \cdot n_{r-1}},
	\end{align*}
	by the linearity of expectation and the assumption $r = o(\log k)$.
	This concludes the proof as marginalization can never increase total variation distance~(\Cref{fact:tvd-marginal}).
\end{proof}

In~\Cref{cor:mis-tvd}, note that $\mu$ and $\mu_i$ are the true distributions in $\pi$ while $\nu_i$ is the distribution sampled by $\tau_i$.
We conclude this section by finishing the proof of~\Cref{lem:mis-round-elim}.

\begin{proof}[Proof of~\Cref{lem:mis-round-elim}]
	For each $i \in [p_r]$, define $\rO^\pi_i \in \set{0,1}$ to be $1$ if and only if the referee of $\pi$ outputs a valid maximal independent set $\Gamma$ for an $r$-round instance such that $\Gamma \cap \rPerm(\pb_i)$ is also a valid maximal independent set for the induced subgraph on $\rPerm(\pb_i)$.
	Also define $\rO^\tau_i \in \set{0,1}$ to be $1$ if and only if the referee of $\tau_i$ outputs a valid maximal independent set for an $(r-1)$-round instance.
	Recall that the referee of $\pi$ is a deterministic function of $\rM^{(\le r)}$, so for each $i \in [p_r]$, the referee of $\tau_i$ is a deterministic function of $\rM^{(\le r)}$ and $\rPerm$, by~\Cref{alg:mis}.

	Firstly imagine the idealized situation where $\tau_i$ \emph{were} able to sample $\rM^{(\le r)},\rPerm \mid \rB_i$ precisely following $\mu_i$.
	Since the marginal distribution for each principal instance in $\misHard[r]$ is the same as $\misHard[r-1]$, by the linearity of expectation we get
	\begin{align}
		\phantom{=} & \Exp_{i \in [p_r]}\Exp_{\rB_i \sim \mu_i}\Pr_{(\rM^{(\le r)},\rPerm \mid \rB_i) \sim \mu_i}\paren{\rO^\tau_i = 1}\nonumber\\
		= {} & \Exp_{i \in [p_r]}\Exp_{\rB_i \sim \mu}\Pr_{(\rM^{(\le r)},\rPerm \mid \rB_i) \sim \mu}\paren{\rO^\tau_i = 1} \tag{as $\mu_i(\rM^{(\le r)},\rB_i,\rPerm) = \mu(\rM^{(\le r)},\rB_i,\rPerm)$}\nonumber\\
		= {} & \Exp_{i \in [p_r]}\Pr_{(\rM^{(\le r)},\rG,\rPerm) \sim \mu}\paren{\rO^\pi_i = 1} \tag{by~\Cref{alg:mis}}\nonumber\\
		= {} & \Exp_{(\rG,\rPerm) \sim \mu}\Pr_{i \in [p_r]}\paren{\rO^\pi_i = 1} \tag{as $\rM^{(\le r)}$ is fully determined by $\rG,\rPerm$}\nonumber\\
		\ge {} & \delta/2,\label{eqn:mis-half}
	\end{align}
	because $\pi$ succeeds with probability $\delta$ by assumption, and conditioned on this event, at least half of the principal instances are solved by~\Cref{clm:mis-solve-half}.
	Now consider the real success probability of $\tau_i$ over $\nu_i$.
	By~\Cref{fact:tvd-small}, we have for each $i \in [p_r]$,
	\begin{align}
		\phantom{=} & \Exp_{\rB_i \sim \nu_i}\Pr_{(\rM^{(\le r)},\rPerm \mid \rB_i) \sim \nu_i}\paren{\rO^\tau_i = 1}\nonumber\\
		= {} & \Exp_{\rB_i \sim \mu_i}\Pr_{(\rM^{(\le r)},\rPerm \mid \rB_i) \sim \nu_i}\paren{\rO^\tau_i = 1} \tag{as $\nu_i(\rB_i) = \mu_i(\rB_i)$}\nonumber\\
		\ge {} & \Exp_{\rB_i \sim \mu_i}\Pr_{(\rM^{(\le r)},\rPerm \mid \rB_i) \sim \mu_i}\paren{\rO^\tau_i = 1} - \Exp_{\rB_i \sim \mu_i}\tvd{\mu_i(\rM^{(\le r)},\rPerm \mid \rB_i)}{\nu_i(\rM^{(\le r)},\rPerm \mid \rB_i)}.\label{eqn:mis-diff}
	\end{align}
	Combining~\Cref{cor:mis-tvd} with~\Cref{eqn:mis-half,eqn:mis-diff}, we finally get
	\begin{align*}
		\phantom{=} & \Exp_{i \in [p_r]}\Exp_{\rB_i \sim \nu_i}\Pr_{(\rM^{(\le r)},\rPerm \mid \rB_i) \sim \nu_i}\paren{\rO^\tau_i = 1}\\
		\ge {} & \Exp_{i \in [p_r]}\Exp_{\rB_i \sim \mu_i}\Pr_{(\rM^{(\le r)},\rPerm \mid \rB_i) \sim \mu_i}\paren{\rO^\tau_i = 1} - \Exp_{i \in [p_r]}\Exp_{\rB_i \sim \mu_i}\tvd{\mu_i(\rM^{(\le r)},\rPerm \mid \rB_i)}{\nu_i(\rM^{(\le r)},\rPerm \mid \rB_i)}\\
		\ge {} & \delta/2 - 1/n_{r-1},
	\end{align*}
	as desired.
	Picking the index $i^* \in [p_r]$ that maximizes the success probability of $\tau_{i^*}$ concludes the proof.
\end{proof}

\clearpage


\section{The Lower Bound for Approximate Bipartite Matching}\label{sec:apx}

In this section we adapt the techniques for maximal independent set to prove the following formal version of~\Cref{res:mm}.

\begin{theorem}[\Cref{res:mm}, formal]\label{thm:mm}
	For $r \ge 0$ and any $r$-round multi-party protocol (deterministic or randomized) in the shared blackboard model for computing a maximal matching or any constant factor approximation to maximum matching on $n$-vertex (bipartite) graphs, there must exist \emph{some} vertex communicating at least $\Omega(n^{1/\const^{r+1}})$ bits in \emph{some} round.
\end{theorem}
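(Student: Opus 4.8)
The plan is to mirror the development of \Cref{sec:mis-dist,sec:mis-proof} almost line by line, replacing the maximal independent set instances with bipartite approximate matching instances. For the base distribution $\apxHard[0]$ (for $0$-round protocols), fix a bipartition $L \sqcup R$ with $|L| = |R| = k$, draw a uniformly random bijection $L \to R$, and let $E$ be the corresponding perfect matching; thus $\mu(G) = k$, and a referee that sees no communication outputs a fixed family of $k$ disjoint pairs, each coinciding with a planted edge with probability $1/k$, so any $0$-round protocol captures at most $1$ edge in expectation --- i.e.\ does no better than a $k$-approximation. This is the analogue of \Cref{lem:mis-base-case}. The recursive step copies \Cref{dist:mis-half-r,dist:mis-hard-r}: a ``half distribution'' places $\halfp_r$ principal blocks, each carrying an independent copy of $\apxHard[r-1]$, and $\halff_r$ fooling blocks, and for every principal vertex $u$ and fooling block $\fb_j$ it samples a fresh copy of $\apxHard[r-1]$ on $\fb_j \cup \set{u}$ and keeps only the edges at $u$; the full distribution $\apxHard[r]$ takes two half instances on vertex sets $U,V$, fully connects $\fb[U]$ to $\fb[V]$, and applies a uniformly random relabeling. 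The one new ingredient is keeping the bipartition throughout: each $\apxHard[r-1]$ instance is bipartite, each fooling block is split across the two sides, a left (resp.\ right) principal vertex receives fooling edges only to the right (resp.\ left) side of each fooling block, and the $\fb[U]$-$\fb[V]$ connection respects the bipartition; the relabeling is then applied within each side. As in \Cref{rem:mis-dist}, one checks $n_r \le k^{\const^{r+1}}$.

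Next I would establish the structural fact playing the role of \Cref{clm:mis-solve-half}: a good approximate matching of a sample from $\apxHard[r]$ yields a good approximate matching of an \emph{average} principal $(r-1)$-round instance. Here the argument is quantitative rather than combinatorial. Since $|\fb[U]| + |\fb[V]| = O(\halff_r \cdot n_{r-1})$ and each fooling vertex is matched at most once, every matching of $G$ contains at most $O(\halff_r \cdot n_{r-1})$ fooling edges (the cross ``stars'' plus the $\fb[U]$-$\fb[V]$ clique), whereas $\mu(G) \ge \sum_i \mu(G_i) = \halfp_r \cdot \mu_{r-1}$, where $\mu_{r-1}$ is the (essentially deterministic) maximum matching size of an $\apxHard[r-1]$ instance; because $\halfp_r / \halff_r = k^6 \cdot n_{r-1}^3 \gg n_{r-1} / \mu_{r-1}$, the fooling edges are a $o(1)$ fraction of $\mu(G)$. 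Consequently any output $\Gamma$ with $\Exp\card{\Gamma \cap E} \ge \mu(G)/\alpha$ has $\sum_i \Exp\card{\Gamma \cap E(G_i)} \ge (1-o(1)) \cdot \mu(G)/\alpha$; since the restriction of $\Gamma$ to any $G_i$ is itself a matching of $G_i$, averaging over a uniform $i \in [p_r]$ shows that in an average principal instance we capture $\ge (1-o(1)) \cdot \mu_{r-1}/\alpha$ edges in expectation. The two-halves structure is retained only for uniformity with \Cref{sec:mis-dist} (and is what the \emph{maximal} matching variant actually needs, where the restriction of a maximal matching of $G$ to $\pb_i$ need not be maximal), even though for the approximate variant the sparsity bound above already suffices.

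Finally I would run the round-elimination argument of \Cref{sec:mis-proof} verbatim. The simulation is exactly \Cref{alg:mis}: the $n_{r-1}$-vertex protocol $\tau_i$ identifies its vertices with the $i$-th principal block, samples the first-round principal messages publicly and $\rT_i(u)$ privately, samples the messages of all other blocks publicly and \emph{non-simultaneously} round by round, and outputs $\Gamma \cap \rPerm(\pb_i)$, where $\Gamma$ is the output of $\pi$'s referee. The lemmas \Cref{lem:mis-sample-1,lem:mis-product-dist,lem:mis-sample-r}, \Cref{clm:mis-sum-info}, and the total-variation corollary \Cref{cor:mis-tvd} transfer unchanged, since their proofs use only the block-independence $\rG_i \perp \rG_{<i} \mid \rPerm$, the fact that each principal vertex participates in $\halff_r + 1$ mutually indistinguishable subinstances, and the $k$-bits-per-vertex bound --- nothing problem-specific. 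Feeding \Cref{cor:mis-tvd} (total variation $\le 1/(k \cdot n_{r-1})$, so the expected number of captured edges, which is at most $n_{r-1}$, changes by at most $1/k$) into the structural fact produces an index $i^\star \in [p_r]$ for which $\tau_{i^\star}$ is an $(r-1)$-round protocol capturing $\ge (1-o(1)) \cdot \mu_{r-1}/\alpha - o(1)$ edges of an $\apxHard[r-1]$ instance in expectation, i.e.\ essentially an $\alpha(1+o(1))$-approximation. Iterating $r$ times --- with $r = o(\log k)$, so that the cumulative multiplicative $(1+o(1))^r$ loss and the additive $o(1)$ losses remain bounded --- gives a $0$-round $O(\alpha)$-approximation for $\apxHard[0]$, contradicting the base case once $k$ exceeds this $O(\alpha)$; since $k \ge n_r^{1/\const^{r+1}}$, this is the bound of \Cref{thm:mm}, and the maximal matching case is subsumed because a maximal matching is always a $2$-approximation.

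I expect the main obstacle to lie in the distribution design rather than the analysis. The construction must simultaneously achieve: per-vertex marginal indistinguishability between the principal instance at $u$ and its $\halff_r$ fooling companions (which \Cref{lem:mis-sample-1} requires), a provably $o(1)$ fooling contribution to $\mu(G)$ (which the structural fact requires), and bipartiteness --- and these requirements pull against each other, so pinning down a single distribution with all three properties is the delicate step. Once it is fixed, the information-theoretic core is essentially problem-agnostic and is inherited from \Cref{sec:mis-proof}. A secondary point needing care is that, unlike the $0/1$ success probability tracked in the MIS round elimination, here one tracks a multiplicative approximation ratio and must verify it worsens by only a $(1+o(1))$ factor per round, so that it is still $O(\alpha)$ after $r = o(\log k)$ rounds.
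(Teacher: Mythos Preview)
Your proposal is on the right track in reusing the round-elimination machinery of \Cref{sec:mis-proof} verbatim---the paper does exactly this, and \Cref{lem:mis-sample-1,lem:mis-product-dist,lem:mis-sample-r}, \Cref{clm:mis-sum-info}, and \Cref{cor:mis-tvd} indeed carry over word for word. Where you diverge from the paper is in the distribution design, and this is precisely where your proposal has a gap you yourself flag but do not close.

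You try to bake bipartiteness into the recursive construction. As you note, this creates a tension with the per-vertex indistinguishability that \Cref{lem:mis-sample-1} needs: if $u$ is a left principal vertex then $\pb_i\setminus\{u\}$ has one fewer left vertex than right, whereas for a right principal vertex it is the opposite, so a single fooling block cannot be exchangeable with $\pb_i\setminus\{u\}$ for both kinds of $u$ under a side-preserving permutation. You call this ``the delicate step'' and leave it unresolved. The paper sidesteps the issue entirely: it proves the lower bound for \emph{general} (not necessarily bipartite) graphs, using the straightforward analogue of \Cref{dist:mis-half-r} with no bipartiteness constraint and no two-halves structure at all, and then recovers the bipartite statement by a one-line reduction (random bipartition of a general graph halves $\mu(G)$ in expectation, so an $\alpha$-approximation on bipartite graphs gives a $2\alpha$-approximation on general graphs). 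This is the missing idea.

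A few smaller differences worth noting. The paper's base distribution is a \emph{single} uniformly random edge between two size-$k$ sides (so $\mu(G)=1$), not a random perfect matching; either works for the $0$-round bound, but the single edge keeps $\mu_{r-1}$ easier to track. The paper also drops the two-halves structure for matching---it is not needed even for the maximal-matching statement, since maximal implies $2$-approximate and the approximate-matching lower bound already covers it, so your remark that ``the maximal matching variant actually needs'' two halves is off. Finally, the paper tracks the approximation guarantee \emph{additively} ($\alpha \mapsto \alpha - C/n_{r-1}$ per round) rather than multiplicatively, which avoids your concern about compounding $(1+o(1))^r$ factors.
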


Intuitively, \Cref{dist:mis-hard-r} and~\Cref{alg:mis} make little use of any property specific to independent sets so most of the lemmas hold for matchings as well.
For convenience, we first make minor adjustment to the hard distributions in~\Cref{sec:apx-dist} to better fit the need of approximation, and then present the lower bound proof for approximate matching for \emph{general} graphs in~\Cref{sec:apx-proof}.
It is worth noticing that our constructed instances may not be bipartite in general.
Fortunately, \Cref{sec:bipartite} gives a simple reduction to the bipartite case, concluding the proof of~\Cref{thm:mm}.

\subsection{A Hard Distribution for Approximate Matching}\label{sec:apx-dist}

We use the following base case for approximate matching.
The idea is to have maximum matchings of a fixed size that remain hard to approximate.
This will help simplify the calculation in later proofs a lot.

\begin{Distribution}\label{dist:apx-hard-0}
	The hard distribution $\apxHard[0]$ for protocols computing an approximate matching without any communication.

	\textbf{Parameters:} bandwidth $k$, number of vertices $n_0 = 2k$.
	\begin{enumerate}
		\item Let $U$ and $V$ be two disjoint sets of vertices, each of size $k$. Sample two vertices $u \in U, v \in V$ uniformly at random and independently.
		\item Add an edge $(u,v)$.
		\item Return the graph $G$ sampled above.
	\end{enumerate}
\end{Distribution}

It is easy to see any graph $G$ drawn from $\apxHard[0]$ always has a maximum matching of size $1$.
Recall that protocols for approximate matching are required to output a valid matching (though potentially containing non-existing edges), which is of size at most $k$.
Since the chosen edge $(u,v)$ is sampled uniformly at random from $k^2$ possibilities, no protocols can achieve an approximation ratio better than $k^2/k = k$ if no information is revealed by the vertices.
This is summarized in the following lemma.

\begin{lemma}[Base Case]\label{lem:apx-base-case}
	Any $0$-round protocol for computing an approximate matching has an approximation ratio no better than $k$ over $\apxHard[0]$.
\end{lemma}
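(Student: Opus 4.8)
The plan is to observe that every graph in the support of $\apxHard[0]$ has maximum matching size exactly $1$, so that a $0$-round protocol achieving approximation ratio better than $k$ would have to guess the planted edge $(u,v)$ with probability exceeding $1/k$ --- and with no communication this is information-theoretically impossible, since $(u,v)$ is uniform over the $k^2$ pairs in $U \times V$.

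First I would reduce to deterministic referees. In a $0$-round protocol the transcript $\Pi$ is empty, so the referee's output $\Gamma(\Pi)$ is a function only of public randomness and the referee's private coins, all of which are independent of the input $G$. Fixing these coins to their best value can only increase $\Exp_{G\sim\apxHard[0]}\card{\Gamma \cap E}$, so it suffices to bound this quantity for an arbitrary \emph{fixed} set $\Gamma$ of disjoint vertex pairs on the $n_0 = 2k$ vertices.

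Now fix such a $\Gamma$. Being a matching on $2k$ vertices, $\Gamma$ has at most $k$ pairs; moreover, the planted edge is always a $U$--$V$ edge, so only pairs of $\Gamma$ with one endpoint in $U$ and one in $V$ have any chance of lying in $E$, and since $\Gamma$ is a matching these ``candidate'' pairs use pairwise-distinct $U$-vertices, hence there are at most $k$ of them, say $(a_1,b_1),\ldots,(a_m,b_m)$ with $m \le k$. Because $(u,v)$ is uniform on $U \times V$ and $E = \set{(u,v)}$,
\[
\Exp_{G\sim\apxHard[0]}\card{\Gamma \cap E} = \Pr_{G\sim\apxHard[0]}\!\left[(u,v)\in\Gamma\right] = \sum_{j=1}^{m}\Pr\!\left[(u,v)=(a_j,b_j)\right] = \frac{m}{k^2} \le \frac1k .
\]
Undoing the averaging over coins, we get $\Exp_{G\sim\apxHard[0]}\Exp\card{\Gamma(\Pi)\cap E} \le 1/k$ for \emph{any} $0$-round protocol.

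Since $\mu(G)=1$ for every $G$ in the support, this says that no $0$-round protocol can satisfy $\Exp\card{\Gamma(\Pi)\cap E}\ge\mu(G)/\alpha$ (even averaged over $\apxHard[0]$) unless $\alpha \ge k$, which is precisely the statement of the lemma. There is essentially no obstacle here: one only needs to be careful that a matching on $2k$ vertices contributes at most $k$ candidate edges, and to recall that the error model of approximate matching credits only genuine edges of $E$, so allowing $\Gamma$ to contain non-existent or non-crossing pairs cannot help the protocol.
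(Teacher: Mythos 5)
Your proof is correct and follows essentially the same route as the paper's: the paper also observes that $\mu(G)=1$ always, that the referee's output is a fixed matching of at most $k$ pairs, and that the planted edge is uniform over $k^2$ possibilities, giving expected intersection at most $k/k^2 = 1/k$ and hence ratio no better than $k$. Your version merely spells out the reduction to a deterministic referee and the counting of candidate $U$--$V$ pairs, which the paper leaves implicit.
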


The construction for $r$-round hard distributions $\apxHard[r]$ is almost the same as for $\misHard[r]$.
In fact, it can be even simplified in the sense that the ``half instances'' are sufficient for the purpose of constructing a hard distribution.
Concretely, we construct $\apxHard[r]$ recursively as follows.

\begin{Distribution}\label{dist:apx-hard-r}
	The hard distribution $\apxHard[r]$ for $r$-round protocols computing an approximate matching ($r \ge 1$).

	\textbf{Parameters:} bandwidth $k$, number of fooling blocks $f_r = k^6 \cdot n_{r-1}^3$, number of principal blocks $p_r = k^6 \cdot n_{r-1}^3 \cdot f_r$, number of vertices $n_r = (n_{r-1}-1) \cdot f_r + n_{r-1} \cdot p_r$, and vertex set $V$ with $|V| = n_r$.
	\begin{enumerate}
		\item Partition $V$ into disjoint sets of vertices $\pb_1,\ldots,\pb_{p_r},\fb_1,\ldots,\fb_{f_r}$ such that $\forall i \in [p_r]: \; |\pb_i| = n_{r-1}$ and $\forall j \in [f_r]: \; |\fb_j| = n_{r-1}-1$. Define $\pb[V] := \bigcup_{i \in [p_r]} \pb_i$ and $\fb[V] := \bigcup_{j \in [f_r]} \fb_j$.
		\item For $i \in [p_r]$, sample an independent instance of $\apxHard[r-1]$ on $\pb_i$.
		\item For $u \in \pb[V]$ and $j \in [f_r]$, sample an independent instance of $\apxHard[r-1]$ on $\fb_j \cup \set{u}$ and only keep the edges adjacent to $u$ (dropping all the edges between vertices in $\fb_j$).
		\item Let $G'$ be the graph sampled above. Sample a uniformly random permutation $\perm$ over $V$ and return $G = \perm(G')$.
	\end{enumerate}
\end{Distribution}

It is not hard to verify that $n_r \le k^{\const^{r+1}}$ still holds for $r \ge 0$.
At a high level, the number of fooling vertices is rather small as $f_r \ll p_r$, so their contribution to the size of maximum matchings is limited.
On the other hand, a vast majority of matching edges should come from within the principal blocks so a good approximation ratio for $r$-round instances implies good approximation ratios over all principal $(r-1)$-round instances on average.
\Cref{clm:apx-size} provides a useful lower bound on the size of maximum matchings for graphs drawn from $\apxHard[r]$.
It will be used in the proof at the very end of this section.

\begin{claim}\label{clm:apx-size}
	Let $\Gamma$ be any valid maximum matching for a graph $G$ drawn from $\apxHard[r]$.
	Then,
	\begin{align}
		|\Gamma| \ge \frac{n_r}{2k} \cdot \paren{1 - \sum_{t \in [r]} \frac{f_t}{p_t}} \ge \frac{n_r}{4k}.\label{eqn:apx-size}
	\end{align}
\end{claim}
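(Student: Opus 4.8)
The plan is to prove both inequalities by induction on $r$, leveraging the recursive structure of \Cref{dist:apx-hard-r}; in fact the argument shows the bound holds surely for every graph in the support, not just in expectation. For the base case $r = 0$, every graph in the support of $\apxHard[0]$ is a single edge, so its maximum matching has size exactly $1$; since $n_0 = 2k$ and the sum $\sum_{t \in [0]} f_t/p_t$ is empty, the right-hand side of \Cref{eqn:apx-size} equals $\frac{2k}{2k} \cdot 1 = 1$, and the second inequality $1 \ge 1/2$ is immediate.

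For the inductive step, I would fix $r \ge 1$ and use that \Cref{dist:apx-hard-r} places an independent instance of $\apxHard[r-1]$ on each principal block $\pb_i$, adds no edge between two distinct principal blocks, and finally only relabels vertices via $\perm$ (which does not affect matching size). Consequently the disjoint union over $i \in [p_r]$ of a maximum matching of the induced subgraph on $\pb_i$ is itself a matching of $G$, so $|\Gamma| \ge \sum_{i \in [p_r]} \mu_i$, where $\mu_i$ denotes the maximum matching size inside $\pb_i$. Applying the induction hypothesis to each of the $p_r$ sub-instances then gives
\[
	|\Gamma| \;\ge\; p_r \cdot \frac{n_{r-1}}{2k}\Paren{1 - \sum_{t \in [r-1]} \frac{f_t}{p_t}},
\]
and it remains to verify the purely algebraic inequality that this quantity is at least $\frac{n_r}{2k}\bigl(1 - \sum_{t \in [r]} f_t/p_t\bigr)$. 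Writing $B := 1 - \sum_{t \in [r-1]} f_t/p_t \in (0,1]$ and using $n_r - p_r n_{r-1} = (n_{r-1}-1)f_r$ (immediate from the parameters of \Cref{dist:apx-hard-r}), after clearing denominators this reduces to $n_r/p_r \ge (n_{r-1}-1)B$; and indeed $n_r/p_r = n_{r-1} + (n_{r-1}-1)\,f_r/p_r \ge n_{r-1} > n_{r-1}-1 \ge (n_{r-1}-1)B$, which settles the first inequality.

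For the second inequality I would show $\sum_{t \in [r]} f_t/p_t \le 1/2$: from \Cref{dist:apx-hard-r} we have $p_t = k^6 n_{t-1}^3 \cdot f_t$, hence $f_t/p_t = 1/(k^6 n_{t-1}^3)$, and since $n_t \ge n_{t-1} p_t \ge 2 n_{t-1}$ (as $k$ is a sufficiently large constant) the $n_{t-1}$ grow at least geometrically from $n_0 = 2k$, so the series sums to at most $\frac{2}{k^6 (2k)^3} = \frac{1}{4k^9} \ll \frac12$; plugging this in yields $|\Gamma| \ge \frac{n_r}{2k} \cdot \frac12 = \frac{n_r}{4k}$. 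The only delicate point in the whole argument is keeping the telescoping of the $(1 - \sum f_t/p_t)$ factors straight when passing from the $p_r$ principal sub-instances to the single $r$-round instance --- one has to check that the vertices ``spent'' on fooling blocks are compensated exactly by the extra $f_r/p_r$ term --- but I do not expect any genuinely hard step.
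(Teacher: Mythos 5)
Your proof is correct and follows essentially the same route as the paper's: induction on $r$, lower-bounding $|\Gamma|$ by the sum of maximum matchings over the $p_r$ disjoint principal blocks, and then a telescoping algebraic step to absorb the fooling vertices into the extra $f_r/p_r$ term. The only (immaterial) differences are that you verify the telescoping inequality directly from $n_r = (n_{r-1}-1)f_r + n_{r-1}p_r$ where the paper uses the bound $n_r \le n_{r-1}(f_r+p_r)$, and you bound $\sum_t f_t/p_t$ by a geometric series rather than invoking $r = o(\log k)$.
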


\begin{proof}
	The base case of $r=0$ holds trivially.
	For $r \ge 1$, we know by induction that any principal $(r-1)$-round instance has a maximum matching of size at least $\frac{n_{r-1}}{2k} \cdot (1 - \sum_{t \in [r-1]} f_t/p_t)$.
	Since all $p_r$ principal blocks are disjoint by our construction in~\Cref{dist:apx-hard-r}, we get
	\begin{align*}
		|\Gamma|
		& \ge p_r \cdot \frac{n_{r-1}}{2k} \cdot \paren{1 - \sum_{t \in [r-1]} \frac{f_t}{p_t}}\\
		& \ge \frac{p_r}{f_r+p_r} \cdot \frac{n_r}{2k} \cdot \paren{1 - \sum_{t \in [r-1]} \frac{f_t}{p_t}} \tag{as $n_r \le n_{r-1} \cdot (f_r+p_r)$}\\
		& \ge \paren{1 - \frac{f_r}{p_r}} \cdot \frac{n_r}{2k} \cdot \paren{1 - \sum_{t \in [r-1]} \frac{f_t}{p_t}}\\
		& \ge \frac{n_r}{2k} \cdot \paren{1 - \sum_{t \in [r]} \frac{f_t}{p_t}}.
	\end{align*}
	The last inequality of~\Cref{eqn:apx-size} follows from the simple fact that
	\[
		\sum_{t \in [r]} \frac{f_t}{p_t} = \sum_{t \in [r]} \frac{1}{k^6 \cdot n_{t-1}^3} \le \frac{1}{2},
	\]
	since $r = o(\log k)$ by assumption.
	This concludes the proof.
\end{proof}

\subsection{Proof of the Lower Bound for Approximate Matching}\label{sec:apx-proof}

The version of \Cref{thm:mm} for general graphs is a straightforward corollary of the following distributional lower bound by Yao's minimax principal~\cite{Yao77}.
We point out that since any maximal matching is also a $2$-approximate matching, it is sufficient to prove the hardness of approximate matching.

\begin{theorem}\label{thm:apx-dist-lb}
	For $r = o(\log k)$, any $r$-round protocol for computing an approximate matching for general graphs that communicates at most $k$ bits per vertex in every round has an approximation ratio no better than $\Omega(k)$ over $\apxHard[r]$.
\end{theorem}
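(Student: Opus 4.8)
Following the structure already established for maximal independent set, the proof of \Cref{thm:apx-dist-lb} proceeds by a round elimination argument that mirrors \Cref{lem:mis-round-elim} and its consequences almost verbatim. Concretely, I would first state and prove an analogue of \Cref{lem:mis-round-elim}: if there is an $r$-round protocol with per-vertex communication $k$ that achieves approximation ratio $\alpha$ over $\apxHard[r]$, then there is an $(r-1)$-round protocol with the same bandwidth achieving approximation ratio $O(\alpha)$ over $\apxHard[r-1]$ (the multiplicative loss absorbing a constant, plus a negligible additive term as in the MIS case). Iterating this $r = o(\log k)$ times, starting from a hypothetical protocol with approximation ratio $o(k)$, would yield a $0$-round protocol with approximation ratio $o(k)$ over $\apxHard[0]$, contradicting \Cref{lem:apx-base-case}.

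\textbf{Reusing the machinery.} The key point is that the simulation protocol $\tau_i$ in \Cref{alg:mis} and all of the information-theoretic lemmas behind it --- \Cref{lem:mis-sample-1}, \Cref{lem:mis-product-dist}, \Cref{lem:mis-sample-r}, \Cref{clm:mis-sum-info}, and \Cref{cor:mis-tvd} --- never used any property specific to independent sets; they only used the recursive block structure of the hard distribution and the bandwidth bound. Since \Cref{dist:apx-hard-r} has the same block structure (principal blocks carrying independent $\apxHard[r-1]$ instances, fooling blocks connected to principal vertices via $\apxHard[r-1]$-distributed edges), the same $\tau_i$ construction applies with $\apxHard$ in place of $\misHard$, and the same total-variation bound $\Exp_{i}\Exp_{\rB_i}\tvd{\mu_i(\rM^{(\le r)},\rPerm\mid\rB_i)}{\nu_i(\rM^{(\le r)},\rPerm\mid\rB_i)} \le 1/(k\cdot n_{r-1})$ holds. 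The one genuine difference lies in how one passes from ``$\pi$ is a good approximation on $\apxHard[r]$'' to ``many principal sub-instances are well-approximated,'' replacing the clean combinatorial \Cref{clm:mis-solve-half}. Here I would argue as follows: let $\Gamma$ be the (possibly fictitious-edge-containing) output of $\pi$; only edges in $E$ count, and edges incident to fooling vertices contribute at most $|\fb[V]| = (n_{r-1}-1)f_r$ to $|\Gamma\cap E|$, which by the parameter choice $f_r \ll p_r$ is a $o(1)$ fraction of the bound $\tfrac{n_r}{4k}$ from \Cref{clm:apx-size}. Hence, restricting $\Gamma$ to the principal blocks, whose induced subgraphs are disjoint, the expected number of matched edges inside principal block $i$, summed over $i$, is at least $\mu(G)/\alpha - o(\mu(G))$; since each principal block's maximum matching has size $\Theta(n_{r-1}/k)$, an averaging argument over $i\in[p_r]$ shows that for a random principal index the restriction $\Gamma\cap\rPerm(\pb_i)$ is an $O(\alpha)$-approximate matching of $\rPerm(\pb_i)$ in expectation. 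Combined with \Cref{cor:mis-tvd} and \Cref{fact:tvd-small} (the approximation-ratio guarantee is an expectation, so a small total-variation shift costs at most $\tfrac{1}{k n_{r-1}}\cdot k = \tfrac{1}{n_{r-1}}$ in the expected count, negligible against $\Theta(n_{r-1}/k)$), this produces the desired $(r-1)$-round protocol $\tau_{i^*}$ for $\apxHard[r-1]$.

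\textbf{The main obstacle.} The delicate part is the bookkeeping in this averaging step: unlike the MIS case where solving the big instance forces exact maximality on at least one half-instance, here we are juggling an \emph{expected} approximation ratio, a possibly-fictitious output matching, contributions from fooling vertices, and the fact that different principal blocks may be approximated to wildly different degrees. I would handle this by working throughout with the quantity $\Exp[|\Gamma\cap\rPerm(\pb_i)\cap E|]$, lower-bounding its average over $i$ by $\tfrac{1}{p_r}(\mu(G)/\alpha - |\fb[V]|)$, invoking \Cref{clm:apx-size} to compare $\mu(G)$ with $p_r\cdot\Theta(n_{r-1}/k)$, and using Markov's inequality to conclude that for at least a constant fraction of indices $i$ the per-block ratio is $O(\alpha)$ --- which suffices since we only need \emph{one} good index $i^*$. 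The remaining task is simply to verify that the additive TVD error $1/(kn_{r-1})$, scaled by the maximum possible matching size $k$ within a principal block, stays below the $\Theta(n_{r-1}/k)$ signal, which holds comfortably for $r = o(\log k)$; this parallels the $\delta/2 - 1/n_{r-1}$ accounting in the proof of \Cref{lem:mis-round-elim}. The bipartite reduction of \Cref{sec:bipartite} (referenced but not yet shown) then finishes \Cref{thm:mm}.
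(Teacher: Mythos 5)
Your proposal follows the paper's route: round elimination reusing \Cref{alg:mis} and \Cref{lem:mis-sample-1,lem:mis-product-dist,lem:mis-sample-r,cor:mis-tvd} verbatim, with \Cref{clm:mis-solve-half} replaced by the accounting $\rO \ge \sum_i \rO_i$ and $\rO^\pi \le n_{r-1}f_r + \sum_i \rO^\pi_i$ together with the matching-size lower bound of \Cref{clm:apx-size}. That is exactly what the paper does.

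There is, however, one quantitative misstep that breaks the stated $\Omega(k)$ bound. You formulate the round-elimination lemma as ``ratio $\alpha$ becomes $O(\alpha)$,'' i.e.\ a constant \emph{multiplicative} loss per round (this is what your Markov-inequality step --- ``a constant fraction of indices have per-block ratio $O(\alpha)$'' --- delivers). Iterating a constant multiplicative loss $r = o(\log k)$ times compounds to $2^{\Theta(r)} = k^{o(1)}$, so starting from a hypothetical ratio of $o(k)$ you end with a $0$-round ratio of $o(k)\cdot k^{o(1)}$, which need \emph{not} be $o(k)$ and need not contradict \Cref{lem:apx-base-case}; your sentence ``iterating $\ldots$ would yield a $0$-round protocol with approximation ratio $o(k)$'' is therefore false as stated, and the argument only yields a $k^{1-o(1)}$ lower bound on the ratio (enough for \Cref{thm:mm}, but not for \Cref{thm:apx-dist-lb}). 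The paper avoids this by tracking the loss \emph{additively} in the reciprocal ratio: if $\pi$ outputs $\ge \alpha\,\Exp[\rO]$ edges, then $\tau_{i^*}$ outputs $\ge (\alpha - C/n_{r-1})\,\Exp[\rO_{i^*}]$ edges (\Cref{lem:apx-round-elim}), and since $n_{t-1}$ grows doubly exponentially the total additive loss over all rounds is $O(1/n_0)=O(1/k)$, negligible against $\alpha=\omega(1/k)$. Your own detailed accounting already contains this: the fooling-vertex deficit $n_{r-1}f_r/p_r$ and the TVD deficit $\tfrac{n_{r-1}}{2}\cdot\tfrac{1}{kn_{r-1}}=\tfrac{1}{2k}$ are each an $O(1/n_{r-1})$ fraction of the per-block signal $n_{r-1}/(4k)$ from \Cref{clm:apx-size}. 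The fix is simply to skip Markov and take the single best index directly from the averaged inequality $\Exp_i[\Exp[\rO^\tau_i]] \ge (\alpha - O(1/n_{r-1}))\,\Exp_i[\Exp[\rO_i]]$, which yields some $i^*$ with no constant-factor loss. (Two inessential slips: the correct cap for \Cref{fact:tvd-small} inside a principal block is $n_{r-1}/2$, not $k$; and the lemma should carry the hypothesis $\alpha = \omega(1/n_{r-1})$ so the subtraction stays meaningful.)
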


The proof of~\Cref{thm:apx-dist-lb} is again via a round elimination lemma as shown in~\Cref{lem:apx-round-elim}.

\begin{lemma}[Round Elimination]\label{lem:apx-round-elim}
	For $r = o(\log k)$ and $\alpha = \omega(1/n_{r-1})$, if there exists an $r$-round protocol for computing an approximate matching that communicates at most $k$ bits per vertex in every round and has an approximation ratio of $\alpha^{-1}$ over $\apxHard[r]$, then there also exists an $(r-1)$-round protocol for computing an approximate matching that communicates at most $k$ bits per vertex in every round and has an approximation ratio of $(\alpha - C / n_{r-1})^{-1}$ over $\apxHard[r-1]$, for some universal constant $C > 0$.
\end{lemma}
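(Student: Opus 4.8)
The plan is to mirror the round-elimination argument of \Cref{sec:mis-proof} almost verbatim, substituting $\apxHard[r]$ for $\misHard[r]$ and replacing the ``solve half the principal instances'' accounting of \Cref{clm:mis-solve-half} with the matching-size accounting of \Cref{clm:apx-size}. Fix a deterministic $r$-round protocol $\pi$ of per-vertex bandwidth $k$ and approximation ratio $\alpha^{-1}$ over $\apxHard[r]$ (determinism is free by averaging). Define $\rPerm,\rB_i,\rT_i,\rG_i,\rG,\rM^{(t)}_{\sfP,i},\rM^{(t)}_\sfP,\rM^{(t)}_\sfF,\rM^{(t)}$ exactly as in \Cref{sec:mis-proof}, now relative to $\apxHard[r]$; note that this distribution already has the shape of a single ``half instance'' ($p_r$ principal blocks, $f_r$ fooling blocks, each principal vertex participating in one principal and $f_r$ fooling sub-instances, everything randomly permuted), so no $U/V$ split is needed. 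For each $i\in[p_r]$ let $\tau_i$ be the $(r-1)$-round simulation protocol of \Cref{alg:mis} verbatim --- its vertices play the role of $\rPerm(\pb_i)$, sample $\rM^{(1)}_{\sfP,i}$ publicly, then $\rT_i(u)$ privately and per-vertex, then $\rM^{(1)}_{\sfP,-i}$ and $\rM^{(1)}_\sfF$ publicly, and in each later round $t$ broadcast the messages of $\rPerm(\pb_i)$ and sample $\rM^{(t)}_{\sfP,-i},\rM^{(t)}_\sfF$ publicly --- except that the referee of $\tau_i$ outputs the restriction of $\pi$'s output $\Gamma$ to the pairs with both endpoints in $\rPerm(\pb_i)$.

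The information-theoretic heart of the argument transfers untouched. \Cref{lem:mis-sample-1,lem:mis-product-dist,lem:mis-sample-r} and \Cref{clm:mis-sum-info} use only the recursive block structure of the distribution, the product form of the fooling sub-instances, and the uniform random permutation --- all shared by $\apxHard[r]$ --- so their proofs go through word for word (one drops the ``without loss of generality $i\in[\halfp_r]$'' step, and the parameter changes are only constant factors, absorbed into $\epsilon_r$). Consequently the analog of \Cref{cor:mis-tvd} holds: writing $\nu_i$ for the law of $(\rG_i,\rM^{(\le r)},\rPerm)$ sampled by $\tau_i$, it is within total variation $1/(k n_{r-1})$, on average over $i\in[p_r]$ and over the (identically distributed) input $\rB_i$, of its true law under $\pi$.

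The genuinely new part is converting approximation ratios. Let $X_i$ be the number of edges of $\Gamma\cap E$ with both endpoints in $\rPerm(\pb_i)$; this is a function of $(\rM^{(\le r)},\rB_i,\rPerm)$ and is at most $n_{r-1}/2$. Because \Cref{dist:apx-hard-r} puts no edge between distinct principal blocks, every edge of $\Gamma\cap E$ lies inside some $\rPerm(\pb_i)$ or is incident to a fooling vertex, so $\card{\Gamma\cap E}\le\sum_{i}X_i+(n_{r-1}-1)f_r$; and since the principal blocks are vertex-disjoint, $\mu(G)\ge\sum_i\mu(\rPerm(\pb_i))$, each $\rPerm(\pb_i)$ inducing a copy of $\apxHard[r-1]$. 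Combining these with the hypothesis $\Exp\card{\Gamma\cap E}\ge\alpha\cdot\Exp[\mu(G)]$ and the symmetry of the $p_r$ blocks gives $\Exp_{i\in[p_r]}\Exp[X_i]\ge\alpha\cdot\Exp_{\rB\sim\apxHard[r-1]}[\mu(\rB)]-(n_{r-1}-1)f_r/p_r$. Since $f_r/p_r=1/(k^6 n_{r-1}^3)$ whereas $\Exp_{\rB\sim\apxHard[r-1]}[\mu(\rB)]\ge n_{r-1}/(4k)$ by \Cref{clm:apx-size}, the subtracted term is at most $(C_1/n_{r-1})\cdot\Exp_{\rB}[\mu(\rB)]$ for an absolute constant $C_1$; and because $X_i\le n_{r-1}/2$, replacing the true law of $(\rM^{(\le r)},\rB_i,\rPerm)$ by $\nu_i$ changes $\Exp[X_i]$ by at most $(n_{r-1}/2)$ times the total variation distance, which on average is at most $(n_{r-1}/2)\cdot 1/(k n_{r-1})=1/(2k)\le(2/n_{r-1})\cdot\Exp_\rB[\mu(\rB)]$, once more by \Cref{clm:apx-size}. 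Picking $i^*\in[p_r]$ to maximize the expected output size of $\tau_{i^*}$ then yields an $(r-1)$-round protocol of per-vertex bandwidth $k$ whose expected output on $\apxHard[r-1]$ is at least $(\alpha-C/n_{r-1})\cdot\Exp_{\rB\sim\apxHard[r-1]}[\mu(\rB)]$ with $C:=C_1+2$, i.e.\ it has approximation ratio $(\alpha-C/n_{r-1})^{-1}$; this quantity is positive precisely because of the hypothesis $\alpha=\omega(1/n_{r-1})$.

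Two points will require care. First, one must actually verify the informal claim that \Cref{dist:mis-hard-r} and \Cref{alg:mis} ``make little use of any property specific to independent sets'', by re-reading \Cref{lem:mis-sample-1,lem:mis-product-dist,lem:mis-sample-r} and \Cref{clm:mis-sum-info} and confirming that maximality and independence are never invoked --- routine but necessary. The main obstacle is the second: ensuring the approximation ratio degrades only by an additive $\Theta(1/n_{r-1})$ and not by a multiplicative factor. This hinges on the $\Theta(k)$ gap between the true matching size $\mu(G)\approx n/(2k)$ (lower-bounded by \Cref{clm:apx-size}) and the trivial bound $n/2$ --- the slack that makes both the $(n_{r-1}-1)f_r$ stray fooling edges and the $1/(k n_{r-1})$ total-variation error negligible next to $\Exp_\rB[\mu(\rB)]/n_{r-1}$ --- together with the deliberately tiny ratio $f_r/p_r=1/(k^6 n_{r-1}^3)$ built into \Cref{dist:apx-hard-r}. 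It is also worth double-checking that the base case \Cref{dist:apx-hard-0} interacts correctly with the (distributional) reading of ``approximation ratio over $\apxHard[r]$'' used above.
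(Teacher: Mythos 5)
Your proposal is correct and follows essentially the same route as the paper's proof: the same simulation protocol with the referee restricted to edges inside $\rPerm(\pb_i)$, the same observation that the information-theoretic lemmas and the total-variation corollary carry over verbatim, and the same accounting ($\mu(G)\ge\sum_i\mu(\rPerm(\pb_i))$, the output loses at most one edge per fooling vertex, and both the fooling-edge term and the total-variation error are absorbed as an additive $O(1/n_{r-1})$ against the $\mu(\rB)\ge n_{r-1}/(4k)$ lower bound of \Cref{clm:apx-size}). The two points you flag for care are exactly the ones the paper handles, and your constants match.
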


Before proving~\Cref{lem:apx-round-elim}, which is the main part of this section, we first show it easily implies~\Cref{thm:apx-dist-lb}.

\begin{proof}[Proof of~\Cref{thm:apx-dist-lb}]
	Suppose for the purpose of contradiction that there exists an $r$-round protocol that communicates at most $k$ bits per vertex and that has an approximation ratio of $\alpha^{-1} = o(k)$ over $\apxHard[r]$.
	Applying~\Cref{lem:apx-round-elim} for $r$ times, we obtain a $0$-round protocol having an approximation ratio of
	\[
		\paren{\alpha - C \cdot \sum_{t \in [r]} \frac{1}{n_{t-1}}}^{-1} \le \paren{\alpha - \frac{2C}{n_0}}^{-1} = o(k),
	\]
	over $\apxHard[0]$, as $n_{t-1}$ is doubly exponentially increasing, and $\alpha = \omega(1/k), n_0 = 2k$.
	However, the existence of such a $0$-round protocol contradicts the lower bound of~\Cref{lem:apx-base-case}, concluding the proof.
\end{proof}

To prove~\Cref{lem:apx-round-elim}, we use the same approach for simulation as in~\Cref{alg:mis}.
Fix a deterministic $r$-round protocol $\pi$ on $n_r$ vertices that communicates at most $k$ bits per vertex in every round and has an approximation ratio of $\alpha^{-1}$ over $\apxHard[r]$.
We define exactly the same set of random variables as in~\Cref{sec:mis-proof} and construct the $(r-1)$-round (randomized) protocols $\tau_1,\ldots,\tau_{p_r}$ on $n_{r-1}$ vertices, which are identical to~\Cref{alg:mis} except for the processing of the final output.
Specifically, let $\Gamma$ be the output of the referee of $\pi$ when given $\rM^{(\le r)}$.
The referee of $\tau_i$ finally outputs $\Gamma \cap \paren{\rPerm(\pb_i) \times \rPerm(\pb_i)}$, namely the \emph{edges} within $\rPerm(\pb_i)$, for computing approximate matchings.
See~\Cref{alg:apx} for a recap of the complete simulation protocol for round elimination.

\begin{Algorithm}\label{alg:apx}
	The $(r-1)$-round protocol $\tau_i$, for any \emph{fixed} $i \in [p_r]$, simulating $\pi$ for computing an approximate matching.

	\begin{enumerate}
		\item Sample $\rPerm$ uniformly at random using public randomness. Identify the vertices of $\tau_i$ with $\rPerm(\pb_i)$ in $\pi$, and with a slight abuse of notation, any vertex $u$ of $\tau_i$ is used interchangeably with its counterpart in $\rPerm(\pb_i)$.
			In addition, each vertex $u$ of $\tau_i$ identifies its input given in $\tau_i$ with $\rB_i(u)$ in $\pi$.
		\item Do the following \emph{without} any communication:
			\begin{enumerate}
				\item Sample $\rM^{(1)}_{\sfP,i}$, conditioned on $\rPerm$, using public randomness.
				\item For each vertex $u$ of $\tau_i$, independently sample $\rT_i(u)$, conditioned on $\rB_i(u),\rM^{(1)}_{\sfP,i},\rPerm$, using \emph{private} randomness.
				\item Sample $\rM^{(1)}_{\sfP,-i}$, conditioned on $\rM^{(1)}_{\sfP,i},\rPerm$, using public randomness.
				\item Sample $\rM^{(1)}_\sfF$, conditioned on $\rM^{(1)}_\sfP,\rPerm$, using public randomness.
			\end{enumerate}
		\item For every $t \in [2,r]$, do the following with one round of communication:
			\begin{enumerate}
				\item For each vertex $u$ of $\tau_i$, generate and broadcast $\rM^{(t)}_{\sfP,i}(u)$ as in $\pi$, based on $\rG_i(u),\rM^{(< t)},\rPerm$.
				\item Sample $\rM^{(t)}_{\sfP,-i}$, conditioned on $\rM^{(< t)},\rM^{(t)}_{\sfP,i},\rPerm$, using public randomness.
				\item Sample $\rM^{(t)}_\sfF$, conditioned on $\rM^{(< t)},\rM^{(t)}_\sfP,\rPerm$, using public randomness.
			\end{enumerate}
		\item Let $\Gamma$ be the output of the referee of $\pi$ when given $\rM^{(\le r)}$. The referee of $\tau_i$ finally outputs $\Gamma \cap \paren{\rPerm(\pb_i) \times \rPerm(\pb_i)}$.
	\end{enumerate}
\end{Algorithm}

It is not hard to verify that~\Cref{lem:mis-sample-1,lem:mis-product-dist,lem:mis-sample-r} and~\Cref{cor:mis-tvd} also hold for approximate matching.
In fact, each of them follows verbatim as the proofs work in a black-box way.
Using all these results, we conclude this section with the proof of~\Cref{lem:apx-round-elim}.

\begin{proof}[Proof of~\Cref{lem:apx-round-elim}]
	For any $r$-round instance, let $\rO$ be the size of its maximum matching and for each $i \in [p_r]$, $\rO_i$ be the size of the maximum matching for the induced subgraph on $\rPerm(\pb_i)$.
	It always holds that
	\begin{align}
		\rO \ge \sum_{i \in [p_r]} \rO_i,\label{eqn:mm-lb}
	\end{align}
	since the union of maximum matchings for all principal $(r-1)$-round instances is always a valid matching for the $r$-round instance.
	Define $\rO^\pi$ to be the number of valid edges in $\Gamma \cap E$, where $E$ is the set of input edges of the $r$-round instance, and for each $i \in [p_r]$, $\rO^\pi_i$ to be the number of valid edges in $\Gamma \cap E \cap \paren{\rPerm(\pb_i) \times \rPerm(\pb_i)}$.
	It holds that
	\begin{align}
		\rO^\pi \le n_{r-1} \cdot f_r + \sum_{i \in [p_r]} \rO^\pi_i,\label{eqn:mm-ub}
	\end{align}
	because of the fact that the number of disjoint edges incident to fooling vertices is bounded by the total number of fooling vertices.
	(Recall that the output $\Gamma$ of the referee of $\pi$ must be a set of disjoint edges.)
	Also define $\rO^\tau_i$ to be the number of valid edges (i.e. excluding non-existing edges) output by the referee of $\tau_i$ for an $(r-1)$-round instance.
	Note that the referee of $\pi$ is a deterministic function of $\rM^{(\le r)}$ by assumption, so for each $i \in [p_r]$, the referee of $\tau_i$ is a deterministic function of $\rM^{(\le r)}$ and $\rPerm$, by~\Cref{alg:apx}.

	Again imagine the idealized situation where $\tau_i$ \emph{were} able to sample $\rM^{(\le r)},\rPerm \mid \rB_i$ precisely following $\mu_i$.
	By the linearity of expectation, we have
	\begin{align}
		\phantom{=} & \Exp_{i \in [p_r]}\Exp_{\rB_i \sim \mu_i}\Exp_{(\rM^{(\le r)},\rPerm \mid \rB_i) \sim \mu_i}\bracket{\rO^\tau_i}\nonumber\\
		= {} & \Exp_{i \in [p_r]}\Exp_{\rB_i \sim \mu}\Exp_{(\rM^{(\le r)},\rPerm \mid \rB_i) \sim \mu}\bracket{\rO^\tau_i} \tag{as $\mu_i(\rM^{(\le r)},\rB_i,\rPerm) = \mu(\rM^{(\le r)},\rB_i,\rPerm)$}\nonumber\\
		= {} & \Exp_{i \in [p_r]}\Exp_{(\rM^{(\le r)},\rG,\rPerm) \sim \mu}\bracket{\rO^\pi_i} \tag{by~\Cref{alg:apx}}\nonumber\\
		= {} & \Exp_{(\rG,\rPerm) \sim \mu}\Exp_{i \in [p_r]}\bracket{\rO^\pi_i} \tag{as $\rM^{(\le r)}$ is fully determined by $\rG,\rPerm$}\nonumber\\
		\ge {} & \Exp_{(\rG,\rPerm) \sim \mu}\bracket{\frac{\rO^\pi - n_{r-1} \cdot f_r}{p_r}} \tag{by~\Cref{eqn:mm-ub}}\nonumber\\
		= {} & \Exp_{(\rG,\rPerm) \sim \mu}\bracket{\frac{\rO^\pi}{p_r}} - \frac{n_{r-1} \cdot f_r}{p_r}\nonumber\\
		\ge {} & \alpha \cdot \Exp_{(\rG,\rPerm) \sim \mu}\bracket{\frac{\rO}{p_r}} - \frac{n_{r-1} \cdot f_r}{p_r} \tag{as $\pi$ has an approximation ratio of $\alpha^{-1}$}\nonumber\\
		\ge {} & \alpha \cdot \Exp_{(\rG,\rPerm) \sim \mu}\Exp_{i \in [p_r]}\bracket{\rO_i} - \frac{n_{r-1} \cdot f_r}{p_r} \tag{by~\Cref{eqn:mm-lb}}\nonumber\\
		= {} & \alpha \cdot \Exp_{i \in [p_r]}\Exp_{\rB_i \sim \nu_i}\bracket{\rO_i} - \frac{n_{r-1} \cdot f_r}{p_r}, \label{eqn:mm-alpha}
	\end{align}
	as $\mu(\rB_i) = \nu_i(\rB_i)$.
	Meanwhile, for each $i \in [p_r]$, \Cref{fact:tvd-small} bounds the real expected matching size of $\tau_i$ over $\nu_i$ as
	\begin{align}
		\phantom{=} & \Exp_{\rB_i \sim \nu_i}\Exp_{(\rM^{(\le r)},\rPerm \mid \rB_i) \sim \nu_i}\bracket{\rO^\tau_i}\nonumber\\
		= {} & \Exp_{\rB_i \sim \mu_i}\Exp_{(\rM^{(\le r)},\rPerm \mid \rB_i) \sim \nu_i}\bracket{\rO^\tau_i} \tag{as $\nu_i(\rB_i) = \mu_i(\rB_i)$}\nonumber\\
		\ge {} & \Exp_{\rB_i \sim \mu_i}\Exp_{(\rM^{(\le r)},\rPerm \mid \rB_i) \sim \mu_i}\bracket{\rO^\tau_i} - \frac{n_{r-1}}{2} \cdot \Exp_{\rB_i \sim \mu_i}\tvd{\mu_i(\rM^{(\le r)},\rPerm \mid \rB_i)}{\nu_i(\rM^{(\le r)},\rPerm \mid \rB_i)},\label{eqn:mm-diff}
	\end{align}
	since the size of any matching is at most half of the total number of vertices.
	Combining~\Cref{cor:mis-tvd} with~\Cref{eqn:mm-alpha,eqn:mm-diff}, we have
	\begin{align*}
		\phantom{=} & \Exp_{i \in [p_r]}\Exp_{\rB_i \sim \nu_i}\Exp_{(\rM^{(\le r)},\rPerm \mid \rB_i) \sim \nu_i}\bracket{\rO^\tau_i}\\
		\ge {} & \Exp_{i \in [p_r]}\Exp_{\rB_i \sim \mu_i}\Exp_{(\rM^{(\le r)},\rPerm \mid \rB_i) \sim \mu_i}\bracket{\rO^\tau_i} - \frac{n_{r-1}}{2} \cdot \Exp_{i \in [p_r]}\Exp_{\rB_i \sim \mu_i}\tvd{\mu_i(\rM^{(\le r)},\rPerm \mid \rB_i)}{\nu_i(\rM^{(\le r)},\rPerm \mid \rB_i)}\\
		\ge {} & \alpha \cdot \Exp_{i \in [p_r]}\Exp_{\rB_i \sim \nu_i}\bracket{\rO_i} - \frac{n_{r-1} \cdot f_r}{p_r} - \frac{n_{r-1}}{2} \cdot \frac{1}{k \cdot n_{r-1}}\\
		\ge {} & \paren{\alpha - \frac{n_{r-1} \cdot f_r}{p_r \cdot n_{r-1}/(4k)} - \frac{1/(2k)}{n_{r-1}/(4k)}} \cdot \Exp_{i \in [p_r]}\Exp_{\rB_i \sim \nu_i}\bracket{\rO_i},
	\end{align*}
	as $\rO_i \ge n_{r-1}/(4k)$ for each $i \in [p_r]$ by~\Cref{clm:apx-size}.
	Therefore, picking the best index $i^* \in [p_r]$ shows $\tau_{i^*}$ has an approximation ratio of at most
	\[
		\paren{\alpha - \frac{n_{r-1} \cdot f_r}{p_r \cdot n_{r-1}/(4k)} - \frac{1/(2k)}{n_{r-1}/(4k)}}^{-1} \le \paren{\alpha - O(\frac{1}{n_{r-1}})}^{-1},
	\]
	as claimed.
\end{proof}

\subsection{A Reduction to Bipartite Graphs}\label{sec:bipartite}

The following reduction, together with the lower bound for approximate matching for general graphs shown in~\Cref{sec:apx-proof}, concludes the proof of~\Cref{thm:mm}.

\begin{lemma}\label{lem:bipartite}
	For $r,\alpha \ge 1$, if there exists an $r$-round protocol for computing an $\alpha$-approximate bipartite matching for bipartite graphs, then there also exists an $r$-round protocol for computing a $2\alpha$-approximate matching for general graphs, with exactly the same bandwidth.
\end{lemma}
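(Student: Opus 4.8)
The plan is to reduce via a \emph{random bipartition}, invoking the assumed bipartite protocol as a black box so that the bandwidth is preserved exactly and only a factor of $2$ is lost in the approximation ratio.

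Given an $n$-vertex general graph $G=(V,E)$, I would first have the players sample, using \emph{public} randomness, a uniformly random $2$-coloring $c\colon V\to\set{0,1}$, and consider the bipartite subgraph $G'=(V,E')$ obtained by keeping only the bichromatic edges, $E'=\set{(u,v)\in E \mid c(u)\neq c(v)}$, with sides $c^{-1}(0)$ and $c^{-1}(1)$. The key observation making this work in our model is that every player $v$ can compute its own input in $G'$, namely $N_{G'}(v)=\set{u\in N_G(v) \mid c(u)\neq c(v)}$, from its input $N_G(v)$ in $G$ together with the public coloring $c$ and the known vertex IDs, \emph{without any communication}. The players then simply run the assumed $r$-round $\alpha$-approximate bipartite matching protocol on $G'$, one player per vertex; since this is literally the same protocol run on a subgraph of $G$, every player still sends at most the same number of bits in every round, so the per-vertex bandwidth is exactly preserved. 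The referee outputs whatever set $\Gamma$ of disjoint pairs the bipartite protocol produces (possibly containing non-edges, which is allowed by the definition of approximate matching).

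For correctness I would fix a maximum matching $M$ of $G$, so $\card{M}=\mu(G)$, and observe that each edge of $M$ is bichromatic with probability $1/2$; by linearity of expectation $\Exp_c\card{M\cap E'}=\mu(G)/2$, and since $M\cap E'$ is a matching of $G'$ this gives $\Exp_c\bracket{\mu(G')}\ge \mu(G)/2$. Conditioned on any fixed coloring $c$, the bipartite protocol's guarantee yields $\Exp\bracket{\,\card{\Gamma\cap E'} \mid c\,}\ge \mu(G')/\alpha$ over its internal randomness; combining this with $E'\subseteq E$ (hence $\card{\Gamma\cap E}\ge\card{\Gamma\cap E'}$) and averaging over $c$ gives $\Exp\card{\Gamma\cap E}\ge \mu(G)/(2\alpha)$, so $\Gamma$ is a valid $2\alpha$-approximate matching for $G$, which is exactly the claim.

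I do not expect a genuine obstacle here: the reduction is standard, and the only points that need care are (i) verifying that each player can reconstruct its $G'$-neighborhood purely from public randomness and its $G$-input, so that no additional communication is introduced and the per-vertex bandwidth bound transfers verbatim, and (ii) the elementary fact that a uniformly random bipartition retains, in expectation, at least half the edges of any fixed matching. Everything else is a black-box invocation of the bipartite protocol together with linearity of expectation.
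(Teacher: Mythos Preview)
Your proposal is correct and follows essentially the same approach as the paper: a uniformly random $2$-coloring to pass to a bipartite subgraph, followed by a black-box run of the assumed bipartite protocol, with the loss of a factor~$2$ coming from $\Exp_c[\mu(G')]\ge\mu(G)/2$. Your write-up is in fact slightly more careful than the paper's in spelling out $\card{\Gamma\cap E}\ge\card{\Gamma\cap E'}$ and the averaging over the coloring.
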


\begin{proof}
	Let $\pi$ be a protocol for bipartite graphs.
	We construct a corresponding protocol $\pi'$ for general graphs as follows.
	On a given input graph $G = (V,E)$ with $n$ vertices, the vertices in $\pi'$ jointly sample $z \in \set{0,1}^n$ uniformly at random using public randomness.
	Let $L = \set{v \in V \mid z_v = 0}$ and $R = \set{v \in V \mid z_v = 1}$.
	Note that all vertices agree on $L,R$ since they can be easily computed from $z$.
	Also let $G' = (V,E')$ be the bipartite subgraph of $G$ where only edges across the cut $(L,R)$ are preserved, i.e. $E' = \set{(u,v) \in E \mid z_u \ne z_v}$.
	Then all vertices run $\pi$ on $G'$ as if the neighbors of a vertex $v \in V$ are $N_{G'}(v) = \set{u \in N_G(v) \mid z_u \ne z_v}$.
	The referee of $\pi'$ simply outputs the answer given by the referee of $\pi$.

	It is easy to see $\pi'$ has the same number of rounds and exactly the same bandwidth as $\pi$.
	Moreover, observe that $G'$ is essentially a random bipartition of $G$ so half of the original edges are dropped in expectation.
	In particular, we have $\Exp_z[\mu(G')] \ge \mu(G) / 2$.
	(Recall that $\mu(\cdot)$ denotes the size of the maximum matching.)
	Therefore, an $\alpha$-approximate bipartite matching for $G'$ (over the randomness of $z$) is also a $2\alpha$-approximate matching for $G$ by definition.
\end{proof}

\subsection*{Acknowledgement} 

We are thankful to the anonymous reviewers of FOCS 2022 for helpful comments on the presentation of the paper. 

Sepehr Assadi is supported in part by a National Science Foundation CAREER award CCF-2047061, a gift from Google Research, and a Fulcrum award from Rutgers Research Council.
Gillat Kol is supported by a National Science Foundation CAREER award CCF-1750443 and by a BSF grant No. 2018325.

\clearpage

\bibliographystyle{halpha-abbrv}
\bibliography{general}

\clearpage
\appendix
\part*{Appendix}

\section{Basic Tools From Information Theory}\label{sec:info}

We now briefly introduce some definitions from information theory that are needed in this paper.
For a random variable $\rA$, we use $\supp{\rA}$ to denote the support of $\rA$ and $\distribution{\rA}$ to denote its distribution.
When it is clear from context, we may abuse the notation and use $\rA$ directly instead of $\distribution{\rA}$, for example, write
$A \sim \rA$ to mean $A \sim \distribution{\rA}$, i.e., $A$ is sampled from the distribution of random variable $\rA$.

We denote the \emph{Shannon entropy} of a random variable $\rA$ by
$\en{\rA}$, which is defined as:
\begin{align*}
	\en{\rA} = \sum_{A \in \supp{\rA}} \Pr\paren{\rA = A} \cdot \log\frac{1}{\Pr\paren{\rA = A}}.
\end{align*}
\noindent
The \emph{conditional entropy} of $\rA$ conditioned on $\rB$ is denoted by $\en{\rA \mid \rB}$ and defined as:
\begin{align*}
\en{\rA \mid \rB} = \Ex_{B \sim \rB} \bracket{\en{\rA \mid \rB = B}},
\end{align*}
where $\en{\rA \mid \rB = B}$ is defined in a standard way by using the distribution of $\rA$ conditioned on the event $\rB = B$ in the previous equation.
\noindent
The \emph{mutual information} of two random variables $\rA$ and $\rB$ is denoted by
$\mi{\rA}{\rB}$ and defined as:
\begin{align*}
\mi{\rA}{\rB} = \en{\rA} - \en{\rA \mid  \rB} = \en{\rB} - \en{\rB \mid  \rA}.
\end{align*}
\noindent
The \emph{conditional mutual information} $\mi{\rA}{\rB \mid \rC}$ is $\en{\rA \mid \rC} - \en{\rA \mid \rB,\rC}$ and hence by linearity of expectation:
\begin{align*}
	\mi{\rA}{\rB \mid \rC} = \Ex_{C \sim \rC} \bracket{\mi{\rA}{\rB \mid \rC = C}}.
\end{align*}

We also use the following standard measures of distance (or divergence) between distributions.

\paragraph{KL-divergence.} For two distributions $\mu$ and $\nu$, the \emph{Kullback-Leibler divergence} between $\mu$ and $\nu$ is denoted by $\kl{\mu}{\nu}$ and defined as:
\[
	\kl{\mu}{\nu} = \Ex_{a \sim \mu}\Bracket{\log\frac{\mu(a)}{\nu(a)}}.
\]

\paragraph{Total variation distance.} We denote the total variation distance between two distributions $\mu$ and $\nu$ on the same support $\Omega$ by $\tvd{\mu}{\nu}$, defined as:
\[
	\tvd{\mu}{\nu} = \max_{\Omega' \subseteq \Omega} \paren{\mu(\Omega')-\nu(\Omega')} = \frac{1}{2} \cdot \sum_{x \in \Omega} \card{\mu(x) - \nu(x)}.
\]

We refer the interested readers to the textbook by Cover and Thomas~\cite{CoverT06} for an excellent introduction to the field of information theory.

\subsection{Useful Properties of Entropy and Mutual Information}\label{sec:prop-en-mi}

We use the following basic properties of entropy and mutual information throughout.

\begin{fact}[cf.~\cite{CoverT06}]\label{fact:it-facts}
  Let $\rA$, $\rB$, $\rC$, and $\rD$ be four (possibly correlated) random variables.
   \begin{enumerate}
  \item \label{part:uniform} $0 \leq \en{\rA} \leq \log{\card{\supp{\rA}}}$. The right equality holds
    iff $\distribution{\rA}$ is uniform.
  \item \label{part:info-zero} $\mi{\rA}{\rB}[\rC] \geq 0$. The equality holds iff $\rA$ and
    $\rB$ are \emph{independent} conditioned on $\rC$.
  \item \label{part:cond-reduce} \emph{Conditioning on a random variable reduces entropy}:
    $\en{\rA \mid \rB,\rC} \leq \en{\rA \mid  \rB}$.  The equality holds iff $\rA \perp \rC \mid \rB$.
    \item \label{part:sub-additivity} \emph{Subadditivity of entropy}: $\en{\rA,\rB \mid \rC}
    \leq \en{\rA \mid C} + \en{\rB \mid  \rC}$.
   \item \label{part:ent-chain-rule} \emph{Chain rule for entropy}: $\en{\rA,\rB \mid \rC} = \en{\rA \mid \rC} + \en{\rB \mid \rC,\rA}$.
  \item \label{part:chain-rule} \emph{Chain rule for mutual information}: $\mi{\rA,\rB}{\rC \mid \rD} = \mi{\rA}{\rC \mid \rD} + \mi{\rB}{\rC \mid  \rA,\rD}$.
  \item \label{part:data-processing} \emph{Data processing inequality}: for a deterministic function $f(\rA)$, $\mi{f(\rA)}{\rB \mid \rC} \leq \mi{\rA}{\rB \mid \rC}$.
   \end{enumerate}
\end{fact}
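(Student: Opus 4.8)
The plan is to reduce the whole list to two primitive facts and then obtain each item by a short manipulation, ordering the items so that every argument uses only ones already established. The two primitives are: (i) \emph{Gibbs' inequality}, namely $\kl{\mu}{\nu}\ge 0$ with equality iff $\mu=\nu$, which is just Jensen's inequality for the concave function $\log$; and (ii) the \emph{chain rule for entropy} (the fifth part), which is a direct rewriting of the definition: for every fixed value $C$ we have $\Pr(A,B\mid C)=\Pr(A\mid C)\cdot\Pr(B\mid A,C)$, hence $\log\frac{1}{\Pr(A,B\mid C)}=\log\frac{1}{\Pr(A\mid C)}+\log\frac{1}{\Pr(B\mid A,C)}$, and taking expectations over $(A,B,C)$ and regrouping gives $\en{\rA,\rB\mid\rC}=\en{\rA\mid\rC}+\en{\rB\mid\rC,\rA}$. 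So I would prove the entropy chain rule first, then derive the remaining six parts from it and from Gibbs' inequality.

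Concretely: \textbf{Nonnegativity of conditional mutual information} (second part) follows by writing $\mi{\rA}{\rB\mid\rC}$ as the expectation over $C$ of the KL-divergence from the conditional joint law of $(\rA,\rB)$ to the product of the conditional marginals, and applying Gibbs termwise; equality holds iff that KL-divergence vanishes for ($\mu$-almost) every $C$, i.e. iff $\rA\perp\rB\mid\rC$. \textbf{Conditioning reduces entropy} (third part) is then immediate, since $\en{\rA\mid\rB}-\en{\rA\mid\rB,\rC}=\mi{\rA}{\rC\mid\rB}\ge 0$, with the equality case $\rA\perp\rC\mid\rB$ read off from the second part. \textbf{The entropy bounds} (first part): $\en{\rA}\ge 0$ termwise because $x\log\frac1x\ge 0$ on $[0,1]$ (with equality only at $x\in\{0,1\}$), and $\log\card{\supp{\rA}}-\en{\rA}=\kl{\distribution{\rA}}{U}\ge 0$ for $U$ uniform on $\supp{\rA}$, equality iff $\distribution{\rA}$ is uniform, again by Gibbs. \textbf{Subadditivity} (fourth part): by the entropy chain rule $\en{\rA,\rB\mid\rC}=\en{\rA\mid\rC}+\en{\rB\mid\rA,\rC}$, and the last term is $\le\en{\rB\mid\rC}$ by the third part. \textbf{Chain rule for mutual information} (sixth part): expand both sides into entropies and apply the entropy chain rule twice, which collapses $\en{\rA,\rB\mid\rD}-\en{\rA,\rB\mid\rC,\rD}$ into $\mi{\rA}{\rC\mid\rD}+\mi{\rB}{\rC\mid\rA,\rD}$. \textbf{Data processing} (seventh part): since $f$ is deterministic, $\en{\rA,f(\rA)\mid\rX}=\en{\rA\mid\rX}$ for every $\rX$, so $\mi{\rA,f(\rA)}{\rB\mid\rC}=\mi{\rA}{\rB\mid\rC}$; applying the sixth part and then nonnegativity (second part) gives $\mi{\rA}{\rB\mid\rC}=\mi{f(\rA)}{\rB\mid\rC}+\mi{\rA}{\rB\mid f(\rA),\rC}\ge\mi{f(\rA)}{\rB\mid\rC}$.

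There is no genuinely hard step here: the only analytic ingredient is Jensen's inequality for the concave logarithm (equivalently, Gibbs' inequality), and everything else is bookkeeping with the definitions. The two things that do require a little care are (a) ordering the parts so that no argument is circular — the order used above (chain rule for entropy, then nonnegativity, conditioning reduces entropy, entropy bounds, subadditivity, chain rule for mutual information, data processing) works — and (b) stating the equality conditions precisely, in particular distinguishing statements that hold for every value in the support from those that hold only almost surely once we condition, and observing that $x\log\frac1x=0$ forces $x\in\{0,1\}$. Since this is entirely standard material, in the final write-up I would either simply cite Cover and Thomas~\cite{CoverT06} or include the one-line derivations above for completeness.
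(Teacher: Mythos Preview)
Your proposal is correct; the paper itself does not prove this fact at all but simply states it with a citation to Cover and Thomas~\cite{CoverT06}, exactly as you suggest doing in your final paragraph. The one-line derivations you outline are the standard ones and are all sound, so either option you propose (cite, or include the short derivations) is fine.
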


\noindent
We also use the following propositions, regarding the effect of conditioning on mutual information.

\begin{proposition}\label{prop:info-increase}
  For random variables $\rA, \rB, \rC, \rD$, if $\rA \perp \rD \mid \rC$, then,
  \[\mi{\rA}{\rB \mid \rC} \leq \mi{\rA}{\rB \mid  \rC,  \rD}.\]
\end{proposition}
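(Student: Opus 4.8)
The statement to prove is \Cref{prop:info-increase}: if $\rA \perp \rD \mid \rC$, then $\mi{\rA}{\rB \mid \rC} \leq \mi{\rA}{\rB \mid \rC, \rD}$.

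The plan is to expand both sides using entropy and exploit the hypothesis $\rA \perp \rD \mid \rC$, which by \itfacts{cond-reduce} (the equality case of "conditioning reduces entropy") is equivalent to $\en{\rA \mid \rC} = \en{\rA \mid \rC, \rD}$. First I would write $\mi{\rA}{\rB \mid \rC} = \en{\rA \mid \rC} - \en{\rA \mid \rB, \rC}$ and $\mi{\rA}{\rB \mid \rC, \rD} = \en{\rA \mid \rC, \rD} - \en{\rA \mid \rB, \rC, \rD}$. Using the hypothesis, the first terms of each expression are equal, so the claimed inequality reduces to $\en{\rA \mid \rB, \rC, \rD} \leq \en{\rA \mid \rB, \rC}$. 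But this last inequality is exactly another instance of \itfacts{cond-reduce} (conditioning on the additional random variable $\rD$ can only reduce entropy), so we are done.

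There is essentially no obstacle here; the only subtlety worth stating explicitly is that the hypothesis is used in its entropy-equality form. Concretely, $\rA \perp \rD \mid \rC$ means $\mi{\rA}{\rD \mid \rC} = 0$, i.e. $\en{\rA \mid \rC} - \en{\rA \mid \rC, \rD} = 0$, giving $\en{\rA \mid \rC} = \en{\rA \mid \rC, \rD}$. I would present the argument as a short chain:
\[
	\mi{\rA}{\rB \mid \rC} = \en{\rA \mid \rC} - \en{\rA \mid \rB, \rC} = \en{\rA \mid \rC, \rD} - \en{\rA \mid \rB, \rC} \leq \en{\rA \mid \rC, \rD} - \en{\rA \mid \rB, \rC, \rD} = \mi{\rA}{\rB \mid \rC, \rD},
\]
where the second equality is the hypothesis and the inequality is \itfacts{cond-reduce} applied to $\rA$ conditioned on $(\rB,\rC)$ with the extra variable $\rD$. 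This is a two-line proof, so the "main obstacle" is merely making sure the direction of the inequality in \itfacts{cond-reduce} is applied correctly (more conditioning $\Rightarrow$ smaller entropy) and that the hypothesis is invoked in the right place.
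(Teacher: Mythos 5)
Your proof is correct and is essentially identical to the paper's own argument: both convert the hypothesis into the entropy equality $\en{\rA \mid \rC} = \en{\rA \mid \rC, \rD}$ and then apply \itfacts{cond-reduce} to bound $\en{\rA \mid \rB, \rC, \rD} \leq \en{\rA \mid \rB, \rC}$.
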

 \begin{proof}
  Since $\rA$ and $\rD$ are independent conditioned on $\rC$, by
  \itfacts{cond-reduce}, $\HH(\rA \mid  \rC) = \HH(\rA \mid \rC, \rD)$ and $\HH(\rA \mid  \rC, \rB) \ge \HH(\rA \mid  \rC, \rB, \rD)$.  We have,
	 \begin{align*}
	  \mi{\rA}{\rB \mid  \rC} &= \HH(\rA \mid \rC) - \HH(\rA \mid \rC, \rB) = \HH(\rA \mid  \rC, \rD) - \HH(\rA \mid \rC, \rB) \\
	  &\leq \HH(\rA \mid \rC, \rD) - \HH(\rA \mid \rC, \rB, \rD) = \mi{\rA}{\rB \mid \rC, \rD}. \qed
	\end{align*}

\end{proof}

\begin{proposition}\label{prop:info-decrease}
  For random variables $\rA, \rB, \rC,\rD$, if $ \rA \perp \rD \mid \rB,\rC$, then,
  \[\mi{\rA}{\rB \mid \rC} \geq \mi{\rA}{\rB \mid \rC, \rD}.\]
\end{proposition}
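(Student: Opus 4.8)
The plan is to mirror the proof of Proposition~\ref{prop:info-increase} just given, but with the roles of the two conditioning terms swapped so that the inequality comes out in the opposite direction. First I would expand both sides via the definition of (conditional) mutual information in terms of conditional entropy:
\[
\mi{\rA}{\rB \mid \rC} = \en{\rA \mid \rC} - \en{\rA \mid \rB,\rC}, \qquad \mi{\rA}{\rB \mid \rC,\rD} = \en{\rA \mid \rC,\rD} - \en{\rA \mid \rB,\rC,\rD}.
\]

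Next I would apply \itfacts{cond-reduce} twice. On the side with $\rB$ in the conditioning, the hypothesis $\rA \perp \rD \mid \rB,\rC$ puts us in the equality case, so $\en{\rA \mid \rB,\rC} = \en{\rA \mid \rB,\rC,\rD}$. On the side without $\rB$, we only have the plain (unconditional) inequality $\en{\rA \mid \rC} \ge \en{\rA \mid \rC,\rD}$. Chaining these gives
\[
\mi{\rA}{\rB \mid \rC} = \en{\rA \mid \rC} - \en{\rA \mid \rB,\rC} \ge \en{\rA \mid \rC,\rD} - \en{\rA \mid \rB,\rC,\rD} = \mi{\rA}{\rB \mid \rC,\rD},
\]
which is exactly the claim. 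The only point that needs care is applying the conditional independence on the correct side: it is precisely because the hypothesis involves $\rB$ in the conditioning that the equality lands on the $\en{\rA\mid\rB,\rC}$ term, forcing the bound to be $\ge$ rather than $\le$ (in contrast with Proposition~\ref{prop:info-increase}). There is no real obstacle here — the argument is a few lines of entropy bookkeeping.

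As an alternative route, one could instead use the chain rule \itfacts{chain-rule} on $\mi{\rA}{\rB,\rD \mid \rC}$ in two ways, obtaining $\mi{\rA}{\rD \mid \rC} + \mi{\rA}{\rB \mid \rC,\rD} = \mi{\rA}{\rB \mid \rC} + \mi{\rA}{\rD \mid \rB,\rC}$; the hypothesis together with \itfacts{info-zero} kills the term $\mi{\rA}{\rD \mid \rB,\rC} = 0$, while $\mi{\rA}{\rD \mid \rC}\ge 0$ by the same fact, and rearranging yields the desired inequality. Either presentation is short; I would go with the entropy-expansion version for symmetry with the preceding proposition.
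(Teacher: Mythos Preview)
Your proof is correct and is essentially identical to the paper's own argument: both expand the mutual information terms as differences of conditional entropies, use the hypothesis $\rA \perp \rD \mid \rB,\rC$ together with \itfacts{cond-reduce} to get equality on the $\en{\rA\mid\rB,\rC}$ term, and apply the plain ``conditioning reduces entropy'' inequality on the $\en{\rA\mid\rC}$ term. Your alternative chain-rule route is also valid, though the paper uses only the entropy-expansion version.
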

 \begin{proof}
 Since $\rA \perp \rD \mid \rB,\rC$, by \itfacts{cond-reduce}, $\HH(\rA \mid \rB,\rC) = \HH(\rA \mid \rB,\rC,\rD)$. Moreover, since conditioning can only reduce the entropy (again by \itfacts{cond-reduce}),
  \begin{align*}
 	\mi{\rA}{\rB \mid  \rC} &= \HH(\rA \mid \rC) - \HH(\rA \mid \rB,\rC) \geq \HH(\rA \mid \rD,\rC) - \HH(\rA \mid \rB,\rC) \\
	&= \HH(\rA \mid \rD,\rC) - \HH(\rA \mid \rB,\rC,\rD) = \mi{\rA}{\rB \mid \rC,\rD}.  \qed
 \end{align*}

\end{proof}

\subsection{Measures of Distance Between Distributions}\label{sec:prob-distance}

The following states the relation between mutual information and KL-divergence.

\begin{fact}\label{fact:kl-info}
	For random variables $\rA,\rB,\rC$,
	\[\mi{\rA}{\rB \mid \rC} = \Ex_{(b,c) \sim {(\rB,\rC)}}\Bracket{ \kl{\distribution{\rA \mid \rB=b,\rC=c}}{\distribution{\rA \mid \rC=c}}}.\]
\end{fact}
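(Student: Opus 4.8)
The plan is to unwind the definitions of conditional entropy and mutual information. Starting from $\mi{\rA}{\rB \mid \rC} = \en{\rA \mid \rC} - \en{\rA \mid \rB, \rC}$, I would write both conditional entropies as expectations of pointwise log-probabilities over the joint distribution: namely $\en{\rA \mid \rC} = -\Ex_{(a,b,c) \sim \distribution{(\rA,\rB,\rC)}}\bracket{\log \Pr(\rA = a \mid \rC = c)}$ (the integrand not depending on $b$, so the extra $b$-coordinate is harmless) and $\en{\rA \mid \rB, \rC} = -\Ex_{(a,b,c) \sim \distribution{(\rA,\rB,\rC)}}\bracket{\log \Pr(\rA = a \mid \rB = b, \rC = c)}$. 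Subtracting these and using linearity of expectation gives $\mi{\rA}{\rB \mid \rC} = \Ex_{(a,b,c)}\bracket{\log \tfrac{\Pr(\rA = a \mid \rB = b, \rC = c)}{\Pr(\rA = a \mid \rC = c)}}$, where terms with $\Pr(\rA = a, \rB = b, \rC = c) = 0$ contribute $0$ under the standard convention.

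Next I would push the expectation over $a$ inward, writing $\Ex_{(a,b,c) \sim \distribution{(\rA,\rB,\rC)}}\bracket{\cdot} = \Ex_{(b,c) \sim \distribution{(\rB,\rC)}}\,\Ex_{a \sim \distribution{\rA \mid \rB = b, \rC = c}}\bracket{\cdot}$. For each fixed $(b,c)$ in the support, the inner expectation $\Ex_{a \sim \distribution{\rA \mid \rB = b, \rC = c}}\bracket{\log \tfrac{\Pr(\rA = a \mid \rB = b, \rC = c)}{\Pr(\rA = a \mid \rC = c)}}$ is, by definition, exactly $\kl{\distribution{\rA \mid \rB = b, \rC = c}}{\distribution{\rA \mid \rC = c}}$ (this is well defined since $\distribution{\rA \mid \rC = c}$ is the marginal of $\distribution{(\rA,\rB) \mid \rC = c}$, so it assigns positive mass wherever $\distribution{\rA \mid \rB = b, \rC = c}$ does). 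Substituting back yields precisely $\mi{\rA}{\rB \mid \rC} = \Ex_{(b,c) \sim \distribution{(\rB,\rC)}}\bracket{\kl{\distribution{\rA \mid \rB = b, \rC = c}}{\distribution{\rA \mid \rC = c}}}$, as claimed.

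There is no real obstacle here: this is a standard identity and the argument is purely definitional, the only genuine content being the single algebraic regrouping $\log\Pr(a\mid b,c) - \log\Pr(a\mid c) = \log\tfrac{\Pr(a\mid b,c)}{\Pr(a\mid c)}$ combined with the tower property of expectation. The only points requiring a line of care are the usual support conventions ($0\log 0 = 0$) and confirming that the inner conditional expectation really is a KL-divergence; both are routine. (Alternatively, one can reduce to the unconditional identity $\mi{\rA}{\rB} = \Ex_{b \sim \rB}[\kl{\distribution{\rA \mid \rB = b}}{\distribution{\rA}}]$ and then instantiate it in the probability space conditioned on $\rC = c$ before averaging over $c$; this gives the same proof rearranged.)
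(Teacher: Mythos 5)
Your proof is correct: the identity follows exactly as you describe by expanding both conditional entropies as expectations of pointwise log-probabilities, regrouping the logarithms, and applying the tower property, with the support issue handled precisely as you note (whenever $(b,c)$ is in the support and $\Pr(\rA=a\mid \rB=b,\rC=c)>0$, also $\Pr(\rA=a\mid\rC=c)>0$). The paper states this as a standard fact without proof (deferring to Cover--Thomas), so there is nothing to compare against; your argument is the standard one.
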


\noindent
We use the following basic properties of total variation distance.

\begin{fact}\label{fact:tvd-small}
	Suppose $\mu$ and $\nu$ are two distributions for a random variable $\rX$, then,
	\[
	\Ex_{\mu}\bracket{\rX} \leq \Ex_{\nu}\bracket{\rX} + \tvd{\mu}{\nu} \cdot \max_{X_0 \in \supp{\rX}} X_0.
	\]
\end{fact}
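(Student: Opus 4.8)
The plan is to expand the difference of expectations as a single sum over the support and split it according to the sign of $\mu(x)-\nu(x)$. Concretely, I would write
\[
	\Ex_{\mu}\bracket{\rX} - \Ex_{\nu}\bracket{\rX} = \sum_{x \in \supp{\rX}} x \cdot \paren{\mu(x) - \nu(x)},
\]
assuming (as the bound implicitly requires) that $\rX$ takes nonnegative values, and then let $M := \max_{X_0 \in \supp{\rX}} X_0 \geq 0$ and $\Omega^{+} := \set{x \in \supp{\rX} : \mu(x) > \nu(x)}$.

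The key step is to observe that the sub-sum over $x \notin \Omega^{+}$ is nonpositive (each term is a nonnegative $x$ times a nonpositive weight), so it can only help the inequality and may be discarded. For the sub-sum over $x \in \Omega^{+}$, every weight $\mu(x)-\nu(x)$ is strictly positive, so replacing $x$ by its maximum value $M$ gives the upper bound
\[
	\sum_{x \in \Omega^{+}} x \cdot \paren{\mu(x)-\nu(x)} \;\leq\; M \cdot \sum_{x \in \Omega^{+}} \paren{\mu(x)-\nu(x)} \;=\; M \cdot \paren{\mu(\Omega^{+}) - \nu(\Omega^{+})} \;\leq\; M \cdot \tvd{\mu}{\nu},
\]
where the last inequality is exactly the first (variational) characterization of total variation distance recorded in~\Cref{sec:prob-distance}, namely $\tvd{\mu}{\nu} = \max_{\Omega' \subseteq \Omega}\paren{\mu(\Omega')-\nu(\Omega')}$, applied with $\Omega' = \Omega^{+}$. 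Chaining the two displays and rearranging yields the claim.

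There is no real obstacle here; the only point worth a moment's care is bookkeeping about the sign/range of $\rX$ — one wants $\rX \geq 0$ so that dropping the negative-weight part and bounding $x \leq M$ are both valid — and making sure the variational form of $\tvd{\cdot}{\cdot}$ is the one invoked (as opposed to the $\tfrac12\sum|\cdot|$ form, which would introduce an extra factor). If one instead only assumes $\rX$ is bounded but possibly negative, the same argument goes through with $M$ interpreted as an upper bound on $\rX$ and the negative part of the support contributing a term that is still dominated, but since the statement as written uses $\max_{X_0}X_0$ I would simply note the nonnegativity assumption and keep the proof to the three lines above.
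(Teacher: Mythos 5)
Your argument is correct. The paper states \Cref{fact:tvd-small} without proof (it is listed among "basic properties of total variation distance"), so there is no in-paper argument to compare against; yours is the standard one, and each step checks out: the centering $\Ex_{\mu}\bracket{\rX}-\Ex_{\nu}\bracket{\rX}=\sum_x x\,(\mu(x)-\nu(x))$, discarding the nonpositive part of the sum, bounding $x \le M$ on $\Omega^{+}$, and invoking the variational form $\tvd{\mu}{\nu}=\max_{\Omega'}(\mu(\Omega')-\nu(\Omega'))$ with $\Omega'=\Omega^{+}$ rather than the $\tfrac12\sum|\cdot|$ form (which would cost a factor of $2$). Your caveat about nonnegativity is also right and worth stating: as written the bound fails for variables taking negative values (the sharp general bound is $(\max_x x-\min_x x)\cdot\tvd{\mu}{\nu}$, obtained by centering at $\min_x x$), but the fact is only ever applied in the paper to nonnegative quantities ($\rO^\tau_i\in\set{0,1}$ in the MIS proof and matching sizes in \Cref{lem:apx-round-elim}), so the implicit assumption is harmless.
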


\begin{fact}\label{fact:tvd-chain-rule}
	Suppose $\mu$ and $\nu$ are two distributions for the tuple $(\rX_1,\ldots,\rX_t)$, then,
	\[
		\tvd{\mu(\rX_1,\ldots,\rX_t)}{\nu(\rX_1,\ldots,\rX_t)} \leq \sum_{i=1}^{n} \Exp_{(X_1,\ldots,X_{i-1}) \sim \mu} \tvd{\mu(\rX_i \mid X_1,\ldots,X_{i-1})}{\nu(\rX_i \mid X_1,\ldots,X_{i-1})}.
	\]
\end{fact}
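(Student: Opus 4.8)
The plan is to prove this by a standard hybrid argument over the coordinates $\rX_1,\ldots,\rX_t$. For each $j \in \{0,1,\ldots,t\}$ I would define the hybrid distribution $\lambda_j$ on the tuple $(\rX_1,\ldots,\rX_t)$ that uses the $\mu$-law on the first $j$ coordinates and the $\nu$-conditional law on the rest, namely $\lambda_j(x_1,\ldots,x_t) := \mu(x_1,\ldots,x_j)\cdot\nu(x_{j+1},\ldots,x_t \mid x_1,\ldots,x_j)$. By construction $\lambda_t=\mu$ and $\lambda_0=\nu$, so the triangle inequality for total variation distance gives $\tvd{\mu}{\nu} \le \sum_{j=1}^{t}\tvd{\lambda_j}{\lambda_{j-1}}$, and it remains only to bound each consecutive gap.

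The point is that $\lambda_j$ and $\lambda_{j-1}$ agree in every factor except one: both factor as $\mu(x_{<j})\cdot(\text{middle})\cdot\nu(x_{>j}\mid x_{\le j})$, where $x_{<j}$ and $x_{>j}$ denote the prefix and suffix and the middle factor is $\mu(x_j\mid x_{<j})$ for $\lambda_j$ versus $\nu(x_j\mid x_{<j})$ for $\lambda_{j-1}$. Expanding the $\ell_1$ definition of total variation distance and summing out the suffix coordinates $x_{>j}$ — which contribute a factor $\sum_{x_{>j}}\nu(x_{>j}\mid x_{\le j})=1$ — collapses this to $\tvd{\lambda_j}{\lambda_{j-1}} = \Exp_{(X_1,\ldots,X_{j-1})\sim\mu}\tvd{\mu(\rX_j\mid X_1,\ldots,X_{j-1})}{\nu(\rX_j\mid X_1,\ldots,X_{j-1})}$. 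Substituting this into the telescoping bound yields exactly the claimed inequality, with the $j=1$ term being the unconditioned $\tvd{\mu(\rX_1)}{\nu(\rX_1)}$.

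I do not expect a real obstacle here — the computation is elementary. The one place requiring care is the choice of the hybrids: they must be arranged so that (i) the two endpoints of the chain are literally $\mu$ and $\nu$, and (ii) consecutive hybrids differ in a single conditional factor, and meeting both requirements simultaneously is exactly what forces the use of the $\mu$-marginal on the prefix together with the $\nu$-conditional on the suffix. An alternative I would consider is a direct induction on $t$: peel off $\rX_1$ using the two-variable inequality $\tvd{\mu(\rX,\rY)}{\nu(\rX,\rY)}\le\tvd{\mu(\rX)}{\nu(\rX)}+\Exp_{X\sim\mu}\tvd{\mu(\rY\mid X)}{\nu(\rY\mid X)}$ (itself proved by inserting the hybrid $\mu(x)\nu(y\mid x)$ into the $\ell_1$ sum and applying the triangle inequality), then apply the inductive hypothesis to $\mu(\rX_2,\ldots,\rX_t\mid X_1)$ versus $\nu(\rX_2,\ldots,\rX_t\mid X_1)$ for each fixed $X_1$ and combine the nested expectations via the tower property. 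I would favor the hybrid version, as it sidesteps the bookkeeping of composing those expectations.
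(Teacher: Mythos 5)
Your proof is correct. The paper states this fact without proof (it appears among the standard total-variation-distance facts in the appendix), so there is no argument of the paper's to compare against; your telescoping hybrid $\lambda_j(x)=\mu(x_{\le j})\,\nu(x_{>j}\mid x_{\le j})$ is the standard route, and the key computation --- that consecutive hybrids share the prefix factor $\mu(x_{<j})$ and the suffix factor $\nu(x_{>j}\mid x_{\le j})$, so the suffix sums out to $1$ and leaves exactly the expected conditional distance --- is right. (The only caveat, inherent to the statement itself rather than to your proof, is the usual convention for $\nu(\rX_i\mid X_{<i})$ on prefixes of $\mu$-positive but $\nu$-zero probability; also note the upper summation limit in the statement should read $t$ rather than $n$.)
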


\begin{fact}\label{fact:tvd-marginal}
	Suppose $\mu$ and $\nu$ are two distributions for the pair $(\rX,\rY)$, then,
	\[
		\tvd{\mu(\rX)}{\nu(\rX)} \le \tvd{\mu(\rX,\rY)}{\nu(\rX,\rY)}.
	\]
\end{fact}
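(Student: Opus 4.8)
The plan is to prove this directly from either of the two equivalent formulas for total variation distance given in the preliminaries. I will use the $\ell_1$ characterization $\tvd{\mu}{\nu} = \frac12 \sum_{x} \card{\mu(x) - \nu(x)}$, which makes the argument a one-line application of the triangle inequality. First I would write, for each value $x$ in the support of $\rX$, the marginal as $\mu(\rX)(x) = \sum_{y} \mu(\rX,\rY)(x,y)$ and likewise for $\nu$, so that $\mu(\rX)(x) - \nu(\rX)(x) = \sum_{y}\paren{\mu(\rX,\rY)(x,y) - \nu(\rX,\rY)(x,y)}$. Then by the triangle inequality $\card{\mu(\rX)(x) - \nu(\rX)(x)} \le \sum_{y} \card{\mu(\rX,\rY)(x,y) - \nu(\rX,\rY)(x,y)}$, and summing over $x$ and multiplying by $\tfrac12$ gives exactly $\tvd{\mu(\rX)}{\nu(\rX)} \le \tvd{\mu(\rX,\rY)}{\nu(\rX,\rY)}$.

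An alternative route, which I would mention as a remark, uses the event-based formula $\tvd{\mu}{\nu} = \max_{\Omega' \subseteq \Omega}\paren{\mu(\Omega') - \nu(\Omega')}$: any event $A$ in the $\rX$-space lifts to the cylinder event $A \times \supp{\rY}$ in the joint space with the same probability mass under both marginals and joints, so $\mu(\rX)(A) - \nu(\rX)(A) = \mu(\rX,\rY)(A\times\supp{\rY}) - \nu(\rX,\rY)(A\times\supp{\rY}) \le \tvd{\mu(\rX,\rY)}{\nu(\rX,\rY)}$, and taking the maximum over $A$ finishes the proof. I expect no genuine obstacle here: this is the standard data-processing property of total variation distance under the (deterministic) marginalization map $(\rX,\rY)\mapsto\rX$, and both arguments are routine. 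The only thing to be mildly careful about is handling the support correctly when $\mu$ and $\nu$ are over a common support $\Omega$, which is already part of the definition quoted in the preliminaries.
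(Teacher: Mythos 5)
Your proof is correct: both the $\ell_1$/triangle-inequality argument and the cylinder-event argument are valid and complete. The paper states this fact without proof (it is a standard property listed in the appendix), so there is nothing to compare against; either of your two routes would serve as the omitted justification.
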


\noindent
The following Pinsker's inequality bounds the total variation distance between two distributions based on their KL-divergence.

\begin{fact}[Pinsker's inequality]\label{fact:pinskers}
	For any distributions $\mu$ and $\nu$,
	$
	\tvd{\mu}{\nu} \leq \sqrt{\frac{1}{2} \cdot \kl{\mu}{\nu}}.
	$
\end{fact}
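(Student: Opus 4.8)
The plan is to prove Pinsker's inequality by the classical two-step route: first reduce the general statement to the special case of distributions supported on a two-element set, and then dispatch that two-point case by a one-variable calculus argument.

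For the reduction, I would let $A = \set{x \in \Omega : \mu(x) \ge \nu(x)}$, so that by the definition of total variation distance $\tvd{\mu}{\nu} = \mu(A) - \nu(A)$. Writing $p = \mu(A)$ and $q = \nu(A)$, let $\mu_A$ and $\nu_A$ be the $\set{0,1}$-valued distributions obtained by ``coarsening'' $\mu$ and $\nu$ along the partition $\set{A, \Omega \setminus A}$, i.e. $\mu_A = \mathrm{Bern}(p)$ and $\nu_A = \mathrm{Bern}(q)$; clearly $\tvd{\mu_A}{\nu_A} = p - q = \tvd{\mu}{\nu}$. The crucial point is that coarsening can only decrease KL-divergence, $\kl{\mu}{\nu} \ge \kl{\mu_A}{\nu_A}$, which I would derive from the log-sum inequality ($\sum_i a_i \log(a_i/b_i) \ge (\sum_i a_i)\log(\sum_i a_i / \sum_i b_i)$ for nonnegative reals, cf.~\cite{CoverT06}) applied to the two blocks $A$ and $\Omega \setminus A$ separately and then added. (If $\nu(x) = 0 < \mu(x)$ for some $x$ then $\kl{\mu}{\nu} = \infty$ and the inequality is vacuous, so one may assume $\nu$ has full support before invoking the log-sum inequality.) After this reduction it suffices to establish the two-point inequality: for all $p, q \in [0,1]$,
\[
	2(p-q)^2 \le p\ln\tfrac{p}{q} + (1-p)\ln\tfrac{1-p}{1-q},
\]
since dividing by $2$ and taking square roots then yields $\tvd{\mu_A}{\nu_A} \le \sqrt{\tfrac12 \kl{\mu_A}{\nu_A}}$, and combined with the two displayed (in)equalities above this gives the claim.

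To prove the two-point inequality I would fix $p$ and define $g(q) = p\ln(p/q) + (1-p)\ln((1-p)/(1-q)) - 2(p-q)^2$ on $q \in (0,1)$, with the boundary and degenerate cases ($q \in \set{0,1}$, or $p \in \set{0,1}$) handled by inspection. Differentiating gives $g'(q) = -p/q + (1-p)/(1-q) + 4(p-q)$, which simplifies to $g'(q) = (p-q)\bigl(4 - \tfrac{1}{q(1-q)}\bigr)$. Because $q(1-q) \le 1/4$, the bracketed factor is nonpositive, so $g'(q)$ has the sign opposite to $p-q$: thus $g$ is nonincreasing on $(0,p]$ and nondecreasing on $[p,1)$, so it attains its minimum at $q = p$, where $g(p) = 0$. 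Hence $g \ge 0$ everywhere, which is exactly the desired inequality.

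I do not expect a genuine obstacle: the entire argument is textbook. The two points that need a little care are (i) disposing of the support/infinity edge cases before the log-sum step, and (ii) the normalization of the logarithm: the computation above is carried out in nats, and the constant $\tfrac12$ is the one for which the inequality is sharp in that convention, so if $\kl{\cdot}{\cdot}$ here is intended with $\log = \log_2$ one should either absorb a factor of $\ln 2$ or simply observe that the fact is used only qualitatively downstream (turning an $o(1)$ bound on mutual information into an $o(1)$ bound on total variation distance), for which any fixed positive constant suffices.
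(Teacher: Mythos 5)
The paper does not prove this statement at all: it is listed as a known \emph{Fact} in the appendix, with the reader referred to Cover and Thomas~\cite{CoverT06}. Your proof is the standard textbook argument and is correct: the reduction to the two-point case via the log-sum inequality (i.e., data processing for KL-divergence under the coarsening $\set{A,\Omega\setminus A}$ with $A=\set{x:\mu(x)\ge\nu(x)}$) is exactly right, and the calculus verification of the two-point inequality $2(p-q)^2\le p\ln\frac{p}{q}+(1-p)\ln\frac{1-p}{1-q}$ via the sign of $g'(q)=(p-q)\bigl(4-\frac{1}{q(1-q)}\bigr)$ is clean and complete. Your closing remark on the logarithm base is also well taken: the paper's conventions (entropy measured in bits, bandwidth $k$ in bits) suggest $\log=\log_2$, in which case the nat-based computation yields the stronger bound $\tvd{\mu}{\nu}\le\sqrt{\frac{\ln 2}{2}\cdot\kl{\mu}{\nu}}$, so the stated constant $\frac12$ holds a fortiori; and in any case the fact is only used in the paper to convert $o(1)$ information bounds into $o(1)$ total variation bounds (e.g., in \Cref{cor:mis-tvd}), for which any fixed constant suffices. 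No gaps.
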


\end{document}